\pgfplotsset{compat=1.17}
\DeclareMathOperator*{\argmin}{arg\,min}
\DeclareMathOperator*{\Var}{\text{Var}}
\DeclareMathOperator*{\sign}{\text{sign}}
\newcommand{\ceil}[1]{\lceil {#1} \rceil}	
\newcommand{\indep}{\perp \!\!\! \perp}
\newcommand{\brackets}[1]{\left(#1\right)}
\newcommand{\sbrackets}[1]{\left[#1\right]}
\newcommand{\abs}[1]{\left\lvert#1\right\rvert}
\newtheorem{theorem}{Theorem}[section]
\newtheorem{corollary}[theorem]{Corollary}
\newtheorem{lemma}[theorem]{Lemma}
\newtheorem{proposition}[theorem]{Proposition}
\newtheorem{definition}[theorem]{Definition}
\newtheorem{remark}[theorem]{Remark}
\acrodef{dmc}[DMC]{discrete memoryless channel}
\acrodef{mmse}[MMSE]{{minimum mean-squared error}}
\acrodef{dm}[DM]{{decision maker}}
\acrodef{lqg}[LQG]{Linear-Quadratic-Gaussian}
\acrodef{pdf}[p.d.f.]{probability density function}
\acrodef{dpc}[DPC]{Dirty Paper Coding}
\begin{document}

\title{Causal Coordination for Distributed Decision-Making}

\author{Mengyuan Zhao, 
Tobias J. Oechtering, 
Maël Le Treust, 
\thanks{This work is supported by Swedish Research Council (VR) under grant 2020-03884. The work of M. Le Treust is supported by the French National Agency for Research (ANR) via the project n°ANR-22-PEFT-0010 of the France 2030 program PEPR Futur Networks.}
\thanks{The results have been presented in part at the 2024 IEEE International Symposium on Information Theory (ISIT) \cite{zhao2024coordination}, at the 2024 IEEE Conference on Decision and Control (CDC) \cite{zhao2024CDC}, at the 2024 IEEE Information Theory Workshop (ITW) \cite{zhao2024causal}, and at the 2025 IEEE ISIT \cite{zhao2025zeroestimationcoststrategy}.}
\thanks{M. Zhao and T. J. Oechtering are with the Division of Information Science and Engineering,
        KTH Royal Institute of Technology, 100 44 Stockholm, Sweden
        {\tt\small \{mzhao, oech\}@kth.se }}
\thanks{M. Le Treust is with CNRS, University of Rennes, Inria, IRISA UMR 6074, F-35000 Rennes, France
        {\tt\small mael.le-treust@cnrs.fr}}
}



\maketitle

\begin{abstract}
In decentralized network control, communication plays a critical role by transforming local observations into shared knowledge, enabling agents to coordinate their actions. This paper investigates how communication facilitates cooperation behavior and therefore improves the overall performance in the vector-valued Witsenhausen counterexample, a canonical toy example in distributed decision-making. We consider setups where the encoder, i.e., the first decision-maker (DM) acts causally and the decoder, i.e., the second DM, operates noncausally, 1) without and 2) with access to channel feedback. Using a coordination coding framework, we characterize the achievable power-estimation cost regions in single-letter expressions for both scenarios. The first result is that, when restricted to Gaussian random variables, the cost is identical across all setups featuring at least one causal DM — regardless of the presence of feedback information. Next, building on the characterization of the power-estimation cost region, we propose a hybrid scheme that combines discrete quantization with a continuous Gaussian codebook - the Zero Estimation Cost (ZEC) scheme - which achieves an arbitrarily small estimation cost. This scheme uses coding tools that allow perfect reconstruction of the target symbols, leading to an asymptotic estimation cost equal to zero, while significantly reducing the asymptotic power consumption. Furthermore, when channel feedback is available at the first DM, we propose an analogous scheme that simultaneously achieves zero power and zero estimation cost in the low-noise regime.

\end{abstract}

\begin{IEEEkeywords}
decentralized control, distributed decision-making, Witsenhausen counterexample, causal state communication, coordination coding, zero estimation cost, channel feedback
\end{IEEEkeywords}


\section{Introduction}

Decentralized control studies systems where multiple controllers or decision-makers (DMs) act based on different, locally available information. Among its most well-known challenges is the Witsenhausen counterexample, a deceptively canonical two-stage linear-quadratic-Gaussian (LQG) problem introduced in 1968 \cite{witsenhausen1968}. Despite its simplest formulation of scalar dynamics and quadratic cost, Witsenhausen showed that the optimal strategy is nonlinear, contradicting the classical separation principle that guarantees linear optimality in fully observed LQG settings \cite{witsenhausen1971separation}. In fact, even a nonlinear binary strategy, i.e., Witsenhausen's two-point strategy, outperforms the best linear strategy. More than five decades later, this problem remains unsolved in closed form and continues to serve as a benchmark for understanding the interplay between control and communication in decentralized systems. It has inspired a broad body of work on stochastic teams and optimality of affine policies \cite{bansal1986stochastic, gupta2015existence}, control design under SNR and rate constraints \cite{silva2010control, martins2005fundamental, freudenberg2008feedback, derpich2012improved, Stavrou2022sequential}, and information-theoretic perspectives on coordination, communication, and security \cite{yuksel2013stochastic, agrawal2015implicit, Akyol2017information, charalambous2017hierarchical, wiese2018secure}.

The key challenge of the Witsenhausen counterexample is the \textit{nonclassical information structure} \cite{ho1980another, yuksel2009stochastic, ho2008review}: The second DM receives a noisy observation of the first DM’s action but does not know its input. This information structure turns the Witsenhausen optimization problem into a non-convex problem, making the search for an optimal strategy difficult \cite{bansal1986stochastic, baglietto1997nonlinear}.
An important question is the \textit{dual role of control} at the first DM: it must spend some power to simultaneously regulate the system state while embedding and transmitting relevant information that aids the second DM’s estimation. Seen in this way, affine strategies appear ineffective with respect to both objectives: they consume more power for control yet keep the complexity of the message, whereas the Witsenhausen's two-point strategy communicates more clearly by forcing more distinguishable signals \cite{mitter2007information}.

Inspired by this, over the past five decades, researchers have developed different approaches based on quantization and lattice-based control policies. To name a few, \cite{tseng2017alocal, baglietto2001numerical} adopted the numerical optimization approach, where the latter one uncovered a sloped step-like structure in near-optimal policies. \cite{lee2001hierarchical} considered a generalized $n$-step quantization strategy. A quantized team
policies have shown to be asymptotically optimal in \cite{saldi2016finite}. Using a game-theoretic approach, the slopey quantization approach has been shown to be locally optimal in \cite{ajorlou2016slopey}. From the optimal transport point of view, \cite{wu2011transport} showed that the optimal decision strategy is a strictly increasing
unbounded piecewise real analytic function with a real analytic left inverse. More control strategies have also been proposed in \cite{li2009learning, telsang2025generaldecentralizedstochasticoptimal, mceneaney2015optimization}.


Despite its control-theoretic aspect, Witsenhausen counterexample effectively shows a strong communication requirement: one DM needs to embed and share information to another through its action.
As a result, this problem has increasingly been interpreted through an information-theoretic lens, as a one-shot joint source–channel coding problem. For example, \cite{karlsson2011iterative} developed an algorithm using an iterative joint source-channel coding approach that achieves better performances. However, compared to Shannon's traditional point-to-point communication setup \cite{shannon1948mathematical}, the goal here is to communicate the controlled \textit{interim} state that results from the encoder's control action, rather than the intact original state itself. As a result, the problem becomes more dynamic and target-adjustable ('make the plant dance'\cite{sahai2006necessity}), which requires additional communication. This perspective motivates the study of vector-valued extensions of the Witsenhausen problem \cite{Grover2010Witsenhausen}, in which the scalar states are replaced by vectors and multiple transmissions in blocks are considered. This formulation facilitates the adoption of many block-coding methods \cite{cover1999elements}, such as Costa's dirty-paper coding (DPC) \cite{costa1983writing}, state amplification \cite{kim2008state}, and state masking \cite{merhav2007information}, etc.. For example, Grover and Sahai derived lower and upper bounds on the averaged Witsenhausen pair of costs and established the constant-factor optimality of quantization-based strategies in both the finite-dimensional \cite{grover2009finite}, and infinite-dimensional \cite{Grover2010Witsenhausen} cases when both DMs are noncausal. The upper bound for the latter case was established using a combination of linear and DPC schemes. Using this developed coding structure, \cite{choudhuri2012witsenhausen} further characterized the optimal power-distortion trade-off. Some additional results for the generalization to the Witsenhausen counterexample are established in \cite{grover2009generalized,grover2015information,grover2010witsenhausenCDC,grover2010distributed}, building on the vector-valued extension.

A central question we want to ask is: under such a nonclassical information structure, how much communication is required, or more precisely, what information is relevant to communicate, so that the two potential conflicting objectives in the Witsenhausen counterexample no longer work against each other, but instead cooperate to achieve improved performance.
In \cite{grover2010implicit, grover2011source}, the authors investigated the communication requirement by introducing an external communication channel connecting the two noncausal DMs. By transmitting a truncated binary expansion of the source state, thereby enforcing the target state to a quantization point, they showed that the resulting scheme can simplify the source, reduce the costs, and achieve an $\varepsilon$-optimal asymptotic performances. Related results have been obtained for scalar-valued counterparts as well in \cite{martins2006witsenhausen, shoarinejad2002stochastic}. 

In this paper, we adopt the empirical coordination coding framework, formulated in \cite{cuff2011coordination,raginsky2012empirical, cuff2010coordination}, where the ``coordination capacity'' determines the minimum communication rate needed to induce the cooperative behavior of agents. Based on local observations, the encoder and decoder choose their action, not only to ensure reliable message transmission, but also to achieve a broader common objective. More precisely, they implement a coding scheme that ensures that the empirical distribution of symbol sequences converges to a target probability distribution. The characterization of the set of achievable joint probability distributions relies on a single-letter information constraint that depends on the local information available in the system. In \cite{agrawal2015implicit}, the coordination of two agents with a common average payoff function is considered, the Witsenhausen cost function being a special case. In \cite{Treust2017joint}, a point-to-point scenario is studied where the encoder is noncausal and the decoder is strictly causal based on its past observation. This work also provides a literature review on coordination results. In \cite{le2015empirical}, necessary and sufficient conditions are established for several causal coding cases with two-sided state information. More optimal conditions for causal and noncausal encoding and decoding configurations are accessible in \cite{le2015coding}. 

Prior to this work, Le Treust and Oechtering in \cite{Treust2024power} extended the finite-alphabet coordination problem to the Gaussian setting using a non-standard weak typicality approach, and investigated the noncausal-encoding and causal-decoding Witsenhausen setup. They provided an optimal single-letter characterization for the achievable power and estimation costs, that is, the image of the asymptotically averaged power and estimation costs induced by all possible pairs of noncausal encoding and causal decoding functions. This single-letter characterization involves auxiliary random variables (RVs) that highlight the dual role of control. Furthermore, they proposed a continuous-discrete hybrid design of the auxiliary RVs, inspired by DPC, which was shown to outperform both Witsenhausen's two-point strategy and the best affine policy.


However, in real-time communication systems, it is generally impractical for the encoder to wait until it fully aggregates the entire input sequence before acting. In this work, we consider a more realistic and operationally practical scenario: causal encoding and noncausal decoding setup. In this setup, the encoder observes and responds to data instantaneously as it arrives, while the decoder has the freedom to wait until all observations are available before deciding. Some causal transmission setups have been previously studied in the classical communication regime: Besides the pioneering work of Shannon \cite{shannonstrategy1958}, \cite{choudhuri2013causal} studied causal and strictly causal state communication through a state-dependent channel. The problem of causal state estimation with a helper is analyzed in \cite{chia2013estimation}. See more related work of side information available casaully and noncausally to the transmitter in \cite{keshet2008channel}.

Our first contribution is the optimal single-letter characterization of the achievable power-estimation cost region in the causal encoder and noncausal decoder setup. This characterization employs two auxiliary RVs that capture the dual role of control: regulating the system state and embedding information for the second DM. The achievability proof extends the Markov block-coding approach from \cite{choudhuri2013causal} with a refined cost analysis that improves upon the noncausal-encoding framework in \cite{Treust2024power}. This refinement is essential because, in the causal setting, the interim state depends dynamically on the first DM’s causal observations, instead of the static source sequence in the classical noncausal setup. This result fully characterizes the fundamental trade-off between two stage costs and provides a basis for better understanding of what information is relevant and what communication scheme should be used to transmit that information. Next, we show that, when all the random variables involved are restricted to be Gaussian, \textit{all} vector-valued Witsenhausen setups with either causal encoder or causal decoder, regardless of the help of feedback or feedforward information, share a unified optimal solution: a time-sharing strategy over linear policies. 

Motivated by this key observation, we propose a novel coding scheme — the Zero Estimation Cost (ZEC) scheme — which leverages the causal block-coding structure that enables perfect state reconstruction at the decoder. Inspired by Witsenhausen's two-point strategy, the ZEC scheme drives the achievable estimation cost to zero, while saving a significant amount of power. When combined with the best linear strategy through time-sharing, the ZEC scheme strictly outperforms both the classical two-point method and the best affine policies across \textit{all} possible directions in the power-estimation cost trade-off space. Then, we extend this design to a more general $k$-point ZEC scheme, which further reduces the minimum required power cost as the quantization level $k$ increases.

Feedback, often an intrinsic component of communication and control systems, offers the potential to enhance performance by enabling DMs to refine their actions based on previous outcomes, thus reducing uncertainty about the unknown state. While in the classical point-to-point communication setting feedback does not increase the channel capacity, seminal work by Schalkwijk and Kailath \cite{schalkwijk1966coding} introduced a simple coding scheme that achieves an exponential decay of error probability, and error exponents for variable-length block codes are further investigated in \cite{nakibouglu2008error}. Contradict to the point-to-point cause, feedback has proven helpful in enlarging the channel capacity in many multi-terminal scenarios. In particular, feedback can enlarge the capacity region and assist communication in the multiple-access channel \cite{gaarder1975capacity,ozarow1984capacity} as well as the broadcast channel \cite{ozarow1984achievable, dueck1979capacity}. Within the empirical coordination framework, feedback has been shown to simplify the coordination problem. Specifically, \cite{Letreust2015empirical} demonstrated that the perfect channel feedback enables the decoder to align its outputs more directly with the source state, reducing the complexity of the information constraint and decreasing the number of auxiliary RVs required. A state leakage and coordination problem with causal encoder at the presence of noisy channel feedback is considered in \cite{Letreust2021state}. Together, these results show that channel feedback effectively enlarges the set of achievable joint distributions and strengthens coordination.

In this work, we characterize the region of achievable pairs of power and estimation costs for causal-encoding and noncausal-decoding with perfect channel feedback at the encoder.  Consistent with the prior observations, the presence of feedback allows the encoder to anticipate potential decoding errors so as to recommend the most efficient action to the decoder. When meeting the associated information constraint, this mechanism effectively circumvents the core difficulty of the Witsenhausen problem — the nonclassical information structure. Indeed, thanks to the feedback, the action that is recommended to the second DM is also available at the first DM, ensuring a stronger alignment between their actions.

For this setup, we develop a second novel scheme — the Zero Estimation Cost with feedback (ZEC-f) scheme — analogous to the one proposed when there is no feedback. We show that feedback strictly improves performance: not only does the feedback-enabled scheme dominate its no-feedback counterpart, but surprisingly, in the low noise regime, it can also achieve the extreme point of zero power and zero estimation costs. Under our definition of achievable costs, this means that with only a finite number of effective transmissions from the encoder, the long-run average estimation cost at the decoder can be made arbitrarily small. Furthermore, we demonstrate that this scheme even strictly outperforms the noncausal scheme developed in \cite{Grover2010Witsenhausen}, which relies on a combination of the linear scheme and the DPC scheme.

To summarize, this work demonstrates that even a moderate communication requirement can remarkably enhance coordination among multiple DMs and improve the power-estimation cost trade-off in the practically relevant causal-encoding noncausal-decoding Witsenhausen counterexample. Exploiting the advantage of communication from the fundamental coordination coding results, we establish that asymptotically zero estimation cost is achievable with significantly reduced power, and that perfect feedback enables the simultaneous attainment of zero power and zero estimation cost. More broadly, Witsenhausen counterexample provides a canonical setting for exploring a central question in general distributed decision-making: what is relevant to be communicated to establish cooperation. Our framework demonstrates how control actions shape information flow among DMs while communication enables tighter cooperation, more reliable control, and ultimately superior performance. This perspective underscores the importance of integrating coordination-coding and control to tackle fundamental challenges in decentralized and distributed systems.



This paper is structured as follows: Section \ref{sec: system model} formulates the original scalar-valued Witsenhausen counterexample, reviews some basic results, and subsequently formulates its vector-valued extension with causal encoder and noncausal decoder. In Section \ref{sec: c-n characterization}, we present its optimal single-letter cost region characterization and derive a unified optimal solution for all causal setups when every RV is restricted to Gaussian. We propose the ZEC scheme in Section \ref{sec: zec c-n} and its $k$-point generalization. Section \ref{sec: c-n w-f characterization} formulates the causal-encoding noncausal-decoding with channel feedback model and provides the optimal cost region characterization, also in single-letter. Then, we provide a unified ZEC-f n scheme based on this setup. A numerical study is given in Section \ref{sec: numerical results}. Lastly, the conclusion follows in Section \ref{sec: conclusion}.





    


\section{System Model}\label{sec: system model}

First, we introduce the notations applied throughout this paper. Capital letters, e.g. $X_0$, denote random variables, calligraphic fonts, e.g. $\mathcal{X}_0$, stand for alphabets where random variables take values, whereas lowercase letters, e.g. $x_0$, denote realisations of random variables. Sequences of length $n\in\mathbb{N}^{\star}$ of random variables and realisations are denoted respectively by $X_0^n = \brackets{X_{0,1},\dots X_{0,t},\dots X_{0,n}}$ and $x_0^n = \brackets{x_{0,1},\dots x_{0,t},\dots x_{0,n}}$ for all $t\in\{1,...,n\}$. Furthermore, incomplete sequences up to stage $t\in\{1,...,n\}$ are similarly denoted by $X_0^t = \brackets{X_{0,1},\dots X_{0,t}}$ and $x_0^t = \brackets{x_{0,1},\dots x_{0,t}}$. Besides, $\mathbb P(\cdot), \mathcal P(\cdot)$ and $\mathcal P(\cdot|\cdot)$ denote the general probability, probability density functions (PDFs) and conditional PDFs respectively. $\mathbf{1}_{\{\cdot\}}$ is the indicator function. $\Phi(\cdot)$, $\phi(\cdot)$ is the Gaussian cumulative density function (CDF) and PDF.

\subsection{Original Model of Witsenhausen}\label{subsec: scalar wits model}
\IEEEpubidadjcol

\begin{figure}[t]
  \centering


\begin{tikzpicture}[scale=0.83]
    \draw (2,0) rectangle (3,1);
    \draw (6.8,0) rectangle (7.8,1);

    \draw (4.2,0.5) circle (0.2) node {$+$};
    \draw (5.6,0.5) circle (0.2) node {$+$};
    \draw (8.8,-0.5) circle (0.2) node {$-$};

    \filldraw (1,1.5) circle (2pt) node[above] {$X_{0}\sim \mathcal{N}(0,Q)$};
    \filldraw (5.6,1.5) circle (2pt) node[above] {$Z_{1}\sim \mathcal{N}(0,N)$};

    \draw[->] (1,1.5) -- (1,0.5) -- (2,0.5);
    \draw[->] (1,1.5) -- (4.2,1.5) -- (4.2,0.7);
    \draw[->] (3,0.5) -- (4,0.5);
    \draw[->] (4.4,0.5) -- (5.4,0.5);
    \draw[->] (4.9,0.5) -- (4.9,-0.5) -- (8.6,-0.5);
    \draw[->] (5.6,1.5) -- (5.6,0.7);
    \draw[->] (5.8,0.5) -- (6.8,0.5);
    \draw[->] (7.8,0.5) -- (8.8,0.5) -- (8.8, -0.3);
    \draw[->] (9,-0.5) -- (10,-0.5);

    \node at (1.5,0.8) {$X_0$};
    \node at (3.5,0.8) {$U_1$};
    \node at (4.9,0.8) {$X_{1}$};
    \node at (6.3,0.8) {$Y_{1}$};
    \node at (8.3,0.8) {$U_2$};
    \node at (8.3,-0.2) {$X_{1}$};
    \node at (9.6,-0.2) {$X_2$};
    \node at (2.5,0.5) {$f$};
    \node at (7.3,0.5) {$g$};
\end{tikzpicture}


\caption{Original scalar-valued Witsenhausen counterexample. $X_0\sim\mathcal{N}(0,Q)$ is the source state. The first DM $f: X_0\rightarrow U_1$. Then, $X_1=X_0+U_1$, and $Y_1 = X_1+Z_1$, where $Z_1\sim\mathcal{N}(0,N)$ is the Gaussian noise. At last, the second DM $g: Y_1\rightarrow U_2$ estimates the interim state $X_1$. }
\label{fig: original model}
\end{figure} 

In \cite{witsenhausen1968}, Witsenhausen proposed the model illustrated in Figure \ref{fig: original model} where the random variables $X_0, U_1, X_1, Z_1, Y_1, U_2$ stand for the source state, channel input, interim system state, channel noise, channel output, receiver's output respectively, and all take values from the real line $\mathbb{R}$. The source state and the noise are generated independently according to Gaussian distributions $X_0\sim\mathcal{N}(0,Q),Z_1\sim\mathcal{N}(0,N)$, for some strictly positive values $Q,N$. Moreover, $X_1$ and $Y_1$ are generated by
\begin{align}
    &X_1 = X_0 + U_1, \label{X1 generation}\\
    &Y_1 = X_1 + Z_1 = X_0 + U_1 + Z_1.\label{Y1 generation}
\end{align}
Furthermore, the control design
\begin{align*}
    f_{U_1|X_0}: \mathcal{X}_0\longrightarrow\mathcal{U}_1,\quad g_{U_2|Y_1}: \mathcal{Y}_1\longrightarrow\mathcal{U}_2,
\end{align*}
introduces a joint probability distribution over all the system symbols
\begin{align}
    \mathcal{P}_{X_0}f_{U_1|X_0}\mathcal{P}_{X_1,Y_1|X_0,U_1}g_{U_2|Y_1}.\label{eq: scalar joint distr}
\end{align}
We denote by $\mathcal{P}_{X_0} = \mathcal{N}(0,Q)$ the generative Gaussian probability distribution of the state, and by $\mathcal{P}_{X_1,Y_1|X_0,U_1}$ the channel probability distribution according to \eqref{X1 generation} and \eqref{Y1 generation}. These two distributions are given and fixed by the system.

For a fixed trade-off constant $\lambda^2>0$, the objective is to design optimal controllers $(f,g)$ that minimize the weighted sum of power and estimation costs:
\begin{align}
 \min_{f,g}\brackets{\lambda^2\cdot P+S},\label{eq: scalar obj function}
\end{align}
with
\vspace{-0.2cm}
\begin{align}
&\text{power cost: }&\quad &P= \mathbb E[U_1^2],\label{eq: scalar power cost}\\
&\text{estimation cost: }&\quad&S = \mathbb E[(X_1-U_2)^2],\label{eq: scalar est cost}
\end{align}
where the expectations are taken over the distribution \eqref{eq: scalar joint distr}. In \cite{witsenhausen1968}, Witsenhausen investigated the trade-off between the power cost $P$ at the first DM, and the estimation cost $S$ of the interim state at the second DM. It is well known that for a given $f$, the optimal $g$ is given by the minimum mean-square error (MMSE) estimator of $X_1$ given $Y_1$, see \cite{kay1993fundamentals}. As a result, the estimation cost $S$ becomes
\begin{align*}
    S = \mathbb E[(X_1-\mathbb E[X_1|Y_1])^2].
\end{align*}
Thus, the problem \eqref{eq: scalar obj function} effectively reduces to optimizing over only the first DM $f_{U_1|X_0}$.

In \cite[Lemma 11]{witsenhausen1968}, Witsenhausen determines the best linear policy to be $U_1=-\sqrt{\frac{P}{Q}}X_0$, if $P\leq Q$, otherwise $U_1=-X_0 + \sqrt{P-Q}$. This induces the estimation cost function
    \begin{align}
        S_{\ell}(P)=\begin{cases}\frac{(\sqrt{Q}-\sqrt{P})^{2}\cdot N}{(\sqrt{Q}-\sqrt{P})^{2}+N} & \text{ if } P\in[0,\ Q],\\ 0 &\text{ otherwise } .\end{cases}\label{eq: opt linear cost}
    \end{align}
\indent Note that the function $P\mapsto S_{\ell}(P)$ is generally not convex.  Witsenhausen proposes the two-point strategy defined by 
    \begin{align}
        U_1 = a\cdot \mathsf{sign}(X_0) - X_0, \label{eq: two-point}
    \end{align}
for some parameter $a\geq 0$, see \cite[Theorem 2]{witsenhausen1968}. The corresponding power and estimation costs are given by
\begin{align}
P_2(a) &= Q + a\brackets{a - 2\sqrt{\frac{2Q}{\pi}}},\label{eq: two-point cost P}\\
S_2(a) &= a^2\sqrt{\frac{2\pi}{N}}\phi\brackets{\frac{a}{\sqrt{N}}}\int \frac{\phi\brackets{\frac{y_1}{\sqrt{N}}}}{\cosh{(\frac{ay_1}{N})}}dy_1,\label{eq: two-point cost S}
\end{align}
and the optimal receiver's strategy in this case is given by $\mathbb E\big[X_1\big|Y_1 = y_1\big]=a\tanh{(\frac{ay_1}{N})}$.

For some $Q$ and $N$, Witsenhausen's two-point strategy outperforms the best linear policy \cite[Sec. 6]{witsenhausen1968}. The  idea is to transform the continuous random variable $X_0$ into a two-point random variable $X_1=U_1+X_0=a\cdot \mathsf{sign}(X_0)$, so as to facilitate its estimation by the second DM.


\subsection{Vector-valued Extension with Causal Encoder}\label{subsec: c-n system model}
\begin{figure}[t]
  \centering


\begin{tikzpicture}[scale=0.91]
    \draw (2,0) rectangle (3,1);
    \draw (6.8,0) rectangle (7.8,1);

    \draw (4.2,0.5) circle (0.2) node {$+$};
    \draw (5.6,0.5) circle (0.2) node {$+$};

    \filldraw (1,1.5) circle (2pt) node[above] {$X_{0,t}\sim \mathcal{N}(0,Q)$};
    \filldraw (5.6,1.5) circle (2pt) node[above] {$Z_{1,t}\sim \mathcal{N}(0,N)$};

    \draw[->] (1,1.5) -- (1,0.5) -- (2,0.5);
    \draw[->] (1,1.5) -- (4.2,1.5) -- (4.2,0.7);
    \draw[->] (3,0.5) -- (4,0.5);
    \draw[->] (4.4,0.5) -- (5.4,0.5);
    \draw[->] (4.9,0.5) -- (4.9,-0.5) -- (8.8,-0.5);
    \draw[->] (5.6,1.5) -- (5.6,0.7);
    \draw[->] (5.8,0.5) -- (6.8,0.5);
    \draw[->] (7.8,0.5) -- (8.8,0.5);

    \node at (1.5,0.8) {$X_0^t$};
    \node at (3.5,0.8) {$U_{1,t}$};
    \node at (4.9,0.8) {$X_{1,t}$};
    \node at (6.3,0.8) {$Y_{1}^n$};
    \node at (8.3,0.8) {$U_{2}^n$};
    \node at (8.3,-0.2) {$X_{1}^n$};
    \node at (2.5,0.5) {$f^{(t)}$};
    \node at (7.3,0.5) {$g$};
\end{tikzpicture}

\caption{Vector-valued Witsenhausen counterexample with causal encoder and noncausal decoder. The i.i.d. state and the channel noise are drawn according to Gaussian distributions $X_0^{n}\sim \mathcal{N}(0,Q\mathbb{I})$ and $Z_1^{n}\sim \mathcal{N}(0,N\mathbb{I})$. At each time instant $t\in\{1,\ldots,n\}$, the causal encoder takes the past observations $X_0^t$ and generates $U_{1,t}$. The noncausal decoder receives the whole sequence $Y_1^n$ and outputs $U_2^n$, which serves as the MMSE estimator of $X_1^n$.}
\label{fig:c-n wits}
\end{figure}

Constrast to the single-shot model above, we now consider the vector-valued Witsenhausen counterexample with causal encoder and noncausal decoder depicted in Figure \ref{fig:c-n wits}. 


The source states and channel noises are drawn independently according to the i.i.d. Gaussian distributions $X_0^n\sim\mathcal{N}(0, Q\mathbb I)$ and $Z_1^n\sim\mathcal{N}(0,N\mathbb I)$, for some $Q>0$, $N>0$. We denote by $X_1^n$ the memoryless interim state and $Y_1^n$ the output of the memoryless additive channel, where each symbol is generated according to \eqref{X1 generation} and \eqref{Y1 generation}.

We define the control design for this setup as follows.

\begin{definition}
    For $n\in\mathbb{N}^{\star} = \mathbb{N} \backslash \{0\}$, a ``control design'' with causal encoder and noncausal decoder is a tuple of functions $c = (\{ f^{(t)}_{U_{1,t}|X_0^t}\}_{t=1}^n, g_{U_2^n|Y_1^n})$ defined by

    \begin{equation}
        f^{(t)}_{U_{1,t}|X_0^t}: \mathcal{X}_0^t \longrightarrow \mathcal{U}_1,\quad g_{U_2^n|Y_1^n}: \mathcal{Y}_1^n\longrightarrow \mathcal{U}_2^n,
    \end{equation}
    which induces a probability distribution over sequences of symbols
    \begin{equation}
         \prod_{t=1}^n \mathcal{P}_{X_{0,t}} \prod_{t=1}^n f^{(t)}_{U_{1,t}|X_0^t}\prod_{t=1}^n \mathcal{P}_{X_{1,t},Y_{1,t}|X_{0,t},U_{1,t}} g_{U_2^n|Y_1^n}. \label{distribution of sequences}
    \end{equation}

    We denote by $\mathcal{C}(n)$ the set of control designs with causal encoder and non-causal decoder.
\end{definition}

As in \cite{Treust2024power}, the objective is to determine the limit of the average pair of costs when block length $n\in \mathbb{N}^{\star}$ grows. 

\begin{definition}\label{def:ach cost, c-n}
    We define the two long-run cost functions 
    \vspace{-0.1cm}
    \begin{align}
        &c_P(u_1^n) = \frac{1}{n}\sum_{t=1}^n (u_{1,t})^2,\label{eq: c_P}\\
        &c_S(x_1^n, u_2^n) = \frac{1}{n}\sum_{t=1}^n(x_{1,t}-u_{2,t})^2,\label{eq: c_S}
    \end{align}
    averaged over $n\in\mathbb N^{\star}$ transmissions. The pair of costs $(P,S)\in\mathbb{R}^2$ is said to be achievable, if for all $\varepsilon>0$, there exists $\Bar{n}\in\mathbb N^{\star}$ such that for all $n\geq \Bar{n}$, there exists a control design $c\in \mathcal{C}(n)$ such that 
    \begin{equation}
        \mathbb E\Big[\big|P - c_P(U_1^n)\big| + \big|S - c_S(X_1^n, U_2^n)\big|\Big] \leq \varepsilon.\label{eq: achievable cost def}
    \end{equation}

    We denote by $\mathcal{R}\subset \mathbb{R}^2$ the set of all achievable pairs of costs obtained by the control designs with causal encoder and noncausal decoder.
\end{definition}

By construction, $\mathcal{R}$ is defined through asymptotic average costs as the block length $n \to \infty$. Consequently, $\mathcal{R}$ is a closed subset of $\mathbb{R}^2$, containing all limit points of achievable pairs of costs. Next, we characterize $\mathcal{R}$ via a single-letter expression.


\section{Characterization of the Cost Region}\label{sec: c-n characterization}
Our main coding result characterizes the cost region $\mathcal{R}$.

\begin{theorem}\label{thm: c-n wits main}
    The pair of Witsenhausen costs $(P,S)\in \mathcal{R}$ if and only if there exists a joint distribution over the random variables $(X_0, W_1, W_2, U_1, X_1, Y_1, U_2)$ that decomposes according to
    \begin{equation}
\mathcal{P}_{X_0}\mathcal{P}_{W_1}\mathcal{P}_{W_2|X_0,W_1}\mathcal{P}_{U_1|X_0,W_1}\mathcal{P}_{X_1, Y_1|X_0, U_1}\mathcal{P}_{U_2|W_1, W_2, Y_1},\label{prob result}
    \end{equation}
    such that
    \begin{align}
        &I(W_1, W_2; Y_1) - I(W_2; X_0 | W_1) \geq 0,\label{info result}\\
        &P = \mathbb{E}\sbrackets{U_1^2}, \quad \quad S = \mathbb{E}\sbrackets{(X_1 - U_2)^2},\label{cost result}
    \end{align}
    where $\mathcal{P}_{X_0}$ and $\mathcal{P}_{X_1, Y_1|X_0, U_1}$ are the given  distributions, and where $W_1,W_2$ are two auxiliary random variables.
\end{theorem}
The achievability and converse proofs are provided in Appendices \ref{app: ach proof} and \ref{app: conv proof}. The coding construction relies on the block-Markov coding scheme of \cite{choudhuri2013causal}, extended to the coordination framework in \cite{le2015empirical}.

In Theorem \ref{thm: c-n wits main}, the source $\mathcal{P}_{X_0}$ and the channel $\mathcal{P}_{X_1, Y_1|X_0, U_1}$ are fixed by the system, and the other four probability distributions in \eqref{prob result} are the degrees of freedom, that must satisfy the information constraint \eqref{info result} and the cost constraints \eqref{cost result}. Furthermore, the two auxiliary RVs can be interpreted as follows: $W_1$ represents the independent codeword designed for the state-dependent channel with state $X_0$, consistent with the Shannon strategy \cite{shannonstrategy1958}. $W_2$ is correlated with both $X_0$ and $W_1$, acting as a description of these two symbols. Both $W_1$ and $W_2$ are made available to the noncausal decoder. This formulation explicitly captures the dual role of control in Witsenhausen counterexample: $W_1$ regulates the controlled state $U_1$, whereas $W_2$ facilitates communication to the second DM.

We can gain a few insights from Theorem \ref{thm: c-n wits main}.

\begin{remark}
    The decomposition of the probability distribution \eqref{prob result} is equivalent to the following Markov chains:
    \begin{align}
        \left\{  
        \begin{aligned}
            &X_0\text{ is independent of }W_1,\\
            & U_1 -\!\!\!\!\minuso\!\!\!\!- (X_0, W_1) -\!\!\!\!\minuso\!\!\!\!- W_2,\\
            &(X_1, Y_1)-\!\!\!\!\minuso\!\!\!\!- (X_0, U_1)    -\!\!\!\!\minuso\!\!\!\!- (W_1, W_2),\\
            & U_2 -\!\!\!\!\minuso\!\!\!\!- (W_1, W_2, Y_1) -\!\!\!\!\minuso\!\!\!\!- (X_0, U_1, X_1).
        \end{aligned}
        \right.
        \label{markov result}
    \end{align}
 The first two Markov chains are consequences of causal encoding. The third Markov chain is related to the processing order of the Gaussian channel. The last Markov chain comes from noncausal decoding and symbol-wise reconstruction. 
\end{remark}
  

In the spirit of Witsenhausen's work, we determine the cost region characterized in Theorem~\ref{thm: c-n wits main} when all the RVs are restricted to be jointly Gaussian: Given a power cost parameter $P\geq 0$, the optimal estimation cost writes
\begin{align}
&S_{\mathsf {G}}(P) = \inf_{\mathcal{P}\in\mathbb P_{\mathsf {G}}(P)}\mathbb E\Big[\big(X_1 - \mathbb E\big[X_1\big|W_1,W_2,Y_1\big]\big)^2\Big],\label{eq: gauss opt problem}
\end{align}
where the set
\begin{align}
&\mathbb P_{\mathsf {G}}(P) = \Bigl\{  \mathcal{P}_{W_1},\mathcal{P}_{W_2|X_0, W_1},\mathcal{P}_{U_1|X_0, W_1}:\quad P = \mathbb E\big[U_1^2\big],\nonumber\\
&\qquad I(W_1,W_2;Y_1) - I(W_2;X_0|W_1)\geq0,\text{ and }\label{eq: gauss opt domain w/o feedback}\\
&\qquad X_0,W_1,W_2,U_1,X_1,Y_1,U_2\text{ are jointly Gaussian}\Bigr\}.\nonumber
\end{align}
\indent Similarly to \cite[Theorem 2]{Treust2024power} for causal-encoding and noncausal-decoding, we show that the estimation cost for joint Gaussian RVs is given by the convex envelop of the best linear cost function $S_{\ell}$ in \eqref{eq: opt linear cost}.

\begin{theorem}\label{thm: best gaussian policy}
When the RVs $(X_0, W_1, W_2, U_1, X_1, Y_1, U_2)$ are restricted to be jointly Gaussian, 
the closed form of the estimation cost \eqref{eq: gauss opt problem} is given by

\begin{flalign} S_{\mathsf {G}}(P) =& \begin{cases} \frac {N\cdot (Q-N-P)}{Q} & \text {if } Q> 4N \text { and } P\in [P_{1},P_{2}],\\ S_{\ell }(P) & \text {otherwise, } \end{cases} \label{opt gaussian cost}\end{flalign}
where 
\vspace{-0.2cm}
\begin{align}
P_1 &= \frac{1}{2}(Q-2N-\sqrt{Q^2-4QN}),\label{eq: p1}\\
P_2 &= \frac{1}{2}(Q-2N+\sqrt{Q^2-4QN}).\label{eq: p2}
\end{align}
\end{theorem}

The proof of Theorem \ref{thm: best gaussian policy} is presented in Appendix \ref{app: proof of gaussian design}. The estimation-cost function $S_{\mathsf {G}}(P)$ is attainable with a simple time-sharing strategy between the two operation points $P_1$ and $P_2$, even when both DMs operate causally.

In fact, when all involved RVs are restricted to be Gaussian, the same cost $S_{\mathsf {G}}(P)$ is optimal for several superior causal configurations -- those that permit feedback or feedforward links. The following corollary unifies these cases.

\begin{corollary}\label{cor: causal gaussian cost}
    Under the joint Gaussian constraint, $S_{\mathsf {G}}(P)$ in \eqref{opt gaussian cost} is optimal for the following setups:
    \begin{enumerate}
        \item causal encoding noncausal decoding with channel feedback,
        \item noncausal encoding causal decoding with source feedforward\footnote{Feedforward link in this setting means the source state sequence $X_0^{t-1}$ is available at the (causal) decoder.},
        \item causal encoding causal decoding with or without channel feedback, with or without source feedforward.
    \end{enumerate}
\end{corollary}

We provide the proof of this corollary in Appendix \ref{app: proof of unified causal Gaussian cost}. This result is remarkable since although these architectures have different fundamental achievable cost regions, but these regions coincide when all random variables are restricted to be Gaussian.

The next section drops the joint Gaussian assumption. We introduce the ZEC scheme, a continuous-discrete hybrid code that achieves zero estimation cost with a substantially low power cost.


\section{Zero Estimation Cost Strategy}\label{sec: zec c-n}

Our objective is to design a scheme that achieves an estimation cost equal to zero, while reducing the required power cost as much as possible. 
Now, we first begin with a helpful single-shot benchmark. The proof of this result is in Appendix \ref{app: proof of S=0 P=Q}.

\begin{proposition}\label{prop: single-shot zero est}
    For original single-shot Witsenhausen counterexample in Sec. \ref{subsec: scalar wits model} with $N>0$, the minimum power cost \eqref{eq: scalar power cost} for achieving a zero estimation cost \eqref{eq: scalar est cost} performance is $P = Q$. This is obtained by taking $U_1 = -X_0$.
\end{proposition}

This proposition states that if the first DM can completely cancel the continuous source state, perfect estimation becomes possible—but only with enough power budget $P\geq Q$. By using the coordination coding result of Theorem \ref{thm: c-n wits main}, we seek to lower the power cost needed to achieve zero estimation cost. 


\subsection{ZEC-$2$ Scheme}\label{subsec: zec-n}

Inspired by Witsenhausen's two-point approach in \eqref{eq: two-point}, we propose a new  coding scheme that induces a zero estimation cost, that we call the ZEC-2 scheme. Note that under Definition \ref{def:ach cost, c-n} of achievable costs, zero estimation cost means the averaged long-run estimation cost \eqref{eq: c_S} can be made arbitrarily small. The idea is to design the RVs $(X_0, W_1, W_2, U_1, X_1, Y_1, U_2)$ involved in Theorem \ref{thm: c-n wits main}, knowing that satisfying conditions \eqref{prob result}, \eqref{info result}, \eqref{cost result} is equivalent to the existence of coding schemes that achieve the pair of costs $(P,S)$. The ZEC-2 scheme employs a pair of auxiliary RVs: $W_1$ is a continuous Gaussian codebook and $W_2$ is a discrete binary random variable that quantizes the source state. This resembles the hybrid analog-digital design investigated in \cite{skoglund2006hybrid}. Now, for given parameters $V_1\geq 0, a\geq 0, b\in\mathbb R$, the ZEC-2 scheme is designed as:

\vspace{-0.3cm}
\begin{align}
\left|
\begin{aligned}
X_0 &\sim \mathcal{N}(0,Q),\\
W_1 &\sim \mathcal{N}(0,V_1),\quad W_1\indep X_0,\\
W_2 &= a\cdot \sign(X_0),\label{eq: ZEC-2 scheme}\\
U_1 &= W_1 + a\cdot \sign(X_0) + b\cdot X_0 \\ &= W_1 + W_2 + b\cdot X_0 ,\\
X_1 &= U_1 + X_0 = W_1 + W_2 + (b+1)X_0,\\
Y_1 &= X_1 + Z_1 = W_1 + W_2 + (b+1)X_0 + Z_1.
\end{aligned}
\right.
\end{align}

Given that the RVs $(W_1,W_2,Y_1)$ are available at the decoder, the estimation cost is given by

\begin{align*}
    S &= \mathbb E[(X_1-\mathbb E[X_1|W_1,W_2,Y_1])^2]\\
    &= (b+1)^2\mathbb E[(X_0 - \mathbb E[X_0|W_1,S,Y_1])^2].
\end{align*}
Hence, $S=0$ if and only if $b = -1$. Setting $b = -1$ removes the residual uncertainty about $X_0$ and makes $X_1 = W_1+W_2$ a deterministic function of the auxiliary RVs that are revealed to the decoder. Thus, the decoder can reconstruct $X_1$ perfectly, achieving zero long-run estimation cost. For the remainder of this section, we assume $b=-1$. This design resembles Witsenhausen’s two-point approach \eqref{eq: two-point} in that the control action $U_1$ cancels the source signal $X_0$. However, the key difference is that due to block coding, both $W_1$ and $W_2$ can be reliably communicated to the decoder.



The power cost needed is given by
\begin{align}
 P = \mathbb{E}[U_1^2]&= \mathbb E[(W_1 + a\cdot \sign(X_0) - X_0)^2]\nonumber \\
    &= V_1 + \brackets{Q+a^2-2a\sqrt{\frac{2Q}{\pi}}}  \nonumber\\
    &= V_1 + P_2(a)\label{eq: ZEC2 power constraint}
\end{align}
where $P_2(\cdot)$ is the power cost function for Witsenhausen's two-point strategy given by \eqref{eq: two-point cost P}.
The information constraint \eqref{info result}  becomes
\begin{align}
    &I(W_1,W_2;Y_1) - I(W_2;X_0|W_1)\nonumber\\
    &=h(Y_1)-h(Y_1|W_1,W_2)-h(W_2|W_1)+h(W_2|X_0,W_1)\nonumber\\
    &=h(Y_1) - \frac{1}{2}\log(2\pi eN) - 1\geq 0,\label{eq: ZEC2 info constraint}
\end{align}
where $h(Y_1)$ is calculated with regard to the Gaussian mixture distribution of the following form
\begin{align}
    f_{Y_1}(y)\!=\! \frac{1}{2\sqrt{(V_1+N)}}\!\sbrackets{\phi\brackets{\frac{y-a}{\sqrt{V_1+N}}}\!+\!\phi\brackets{\frac{y+a}{\sqrt{V_1+N}}}}\!.\label{eq: Y_1 distr}
\end{align}
Since there is no closed form for calculating the entropy of Gaussian mixture distributions, methods discussed in \cite{huber2008entropy, kim2015entropy} can be employed for numerical simulation.


Combining the above analysis, we have the following theorem, which summarizes the cost function for the ZEC-2 scheme.

\begin{theorem}\label{thm: zec}
The achievable long-run estimation cost for the ZEC-2 coding scheme \eqref{eq: ZEC-2 scheme} is given by
    \begin{align}
        S_{\mathsf{ZEC}\text{-}\mathsf{2}}(P) = 0, \text{ for  }P\geq P_{\mathsf{ZEC}\text{-}\mathsf{2}}^*, \label{eq: cost function lossless}
    \end{align}
    where the value 
    \begin{align}
P_{\mathsf{ZEC}\text{-}\mathsf{2}}^* = \quad &\min_{V_1\geq 0,a\geq 0}   \qquad  V_1 + P_2(a)\label{eq: opt ZEC2}\\
&\;\,\text{s.t.} \quad h(Y_1) - \frac{1}{2}\log(2\pi eN) - 1\geq 0.\nonumber
\end{align}

\end{theorem}
Note that the differential entropy term $h(Y_1)$ is not always convex with respect to the parameters $(V_1,a)$, hence this is not a convex optimization problem.


From equation \eqref{eq: opt ZEC2}, we see that since $V_1$ is nonnegative, the minimum required power cost $P_{\mathsf{ZEC}\text{-}\mathsf{2}}^*$ for the ZEC-2 scheme must satisfy
\begin{align}
    P_{\mathsf{ZEC}\text{-}\mathsf{2}}^*\geq \min_a P_2(a) = Q\brackets{1-\frac{2}{\pi}}\triangleq P_2^{\min} ,\label{eq: ZEC2 power more than 2-point}
\end{align}
where $ P_2^{\min}$ denotes the lowest possible power budget required for the single-shot two-point strategy \eqref{eq: two-point cost P}. This indicates that the ZEC-2 scheme achieves an improved estimation performance, at the expense of a higher power consumption than the original two-point strategy.

We now extend the ZEC-2 scheme to the ZEC-$k$ scheme by generalizing the discrete auxiliary RV $W_2$ from a two-point binary design to a more expressive $k$-point quantization.

\subsection{ZEC-$k$ Scheme}\label{subsec: zec-n}


The idea of the ZEC-$k$ scheme is inspired by the hierarchical search approach that generalizes Witsenhausen’s original two-point strategy to an $k$-point quantization strategy in \cite{lee2001hierarchical}. We adopt a symmetric indexing convention where the quantization levels and grids are placed symmetrically around zero. Thus, we optimize only the positive half of the parameters. Now, we formally define the $k$-point quantization function.

\begin{definition}[$k$-point quantization]\label{def: n-point quantization}
Let $k\in\mathbb N^{\star}$ and define $m=\ceil{\frac{k}{2}}$.
The construction of the $k$-point quantization function depends on whether $k$ is odd or even:

\begin{itemize}
        \item If $k=2m-1$, we take the center level to be zero and define the quantization levels and decision boundaries as symmetric sequences:
    \begin{align}
        \begin{aligned}&-a_m<...<-a_1 = 0 =a_1<...<a_m,\\
        &-B_m<...<-B_1 = 0 = B_1<...<B_m.
        \end{aligned}\label{eq: n-point odd}
    \end{align}
    \item If $k=2m$, we take $a_1\neq 0$ and the quantization levels and boundaries are defined as:
    \begin{align}
        \begin{aligned}&-a_m<...<-a_1<0<a_1<...<a_m,\\
        &-B_m<...<-B_1 = 0 = B_1<...<B_m.
        \end{aligned}\label{eq: n-point even}
    \end{align}    
\end{itemize}
In both cases, we define the $k$-point quantization function $Q_k(x):\mathbb R \rightarrow \{-a_m,..., -a_1,a_1,...,a_m\}$ by
    \begin{align}
        Q_k(x)& = \left\{
          \begin{aligned}
              &   -a_i   &\quad& \text{if } x\in(-B_{i+1},-B_i], \\
              &  +a_i   &\quad& \text{if } x\in[B_i,B_{i+1}), \\
          \end{aligned}   
        \right. \label{eq: n step quantization}\\
        &\quad \qquad\qquad\qquad i=1,...,m\nonumber\\
        &=\sum_{i=1}^{m}a_i( \mathbf{1}_{x\in[B_i, B_{i+1})} - \mathbf{1}_{x\in(-B_{i+1}, -B_i]}),\forall x\in\mathbb R.\nonumber
    \end{align}
    For notation consistency, we set $B_{k+1} = \infty$.
\end{definition} 

Figure~\ref{fig: 3, 4 point function} illustrates an example of the 3-point and 4-point quantization functions.


\begin{figure}[t]
  \centering


\begin{tikzpicture}[>=stealth, scale=1, every node/.style={scale=0.8}]

    \draw[->, thin] (-4,0) -- (4,0) node[right] {\(x\)};
    \draw[->, thin] (0,-2) -- (0,2.1) node[right] {\(Q_3(x)\)};

    \draw[very thick] (0,0) -- (-1.5,0);  
    \draw[very thick] (-1.5,0) -- (-1.5, -1);
    \draw[very thick] (1.5,0) -- (1.5, 1);

    \draw[very thick] (-1.5, -1) -- (-4,-1);
    \draw[very thick] (0, 0) -- (1.5, 0);
    \draw[very thick] (1.5, 1) -- (4,1);

    \fill (0,0) circle (1.5pt);
    \fill (0,1) circle (1pt) node[right] {\(a_2\)};
    \fill (0,-1) circle (1pt) node[right] {\(-a_2\)};
    \fill (1.5,0) circle (1pt) node[below] {\(B_2\)};
    \fill (-1.5,0) circle (1pt) node[above] {\(-B_2\)};


     \draw[->, thin] (-4.0,-5.0) -- (4.0,-5.0) node[right] {\(x\)};
    \draw[->, thin] (0,-7) -- (0,-2.9) node[right] {\(Q_4(x)\)};

\draw[very thick] (0,-5.7) -- (-2,-5.7);  
\draw[very thick] (0,-4.3) -- (2,-4.3);  
\draw[very thick] (-2,-6.4) -- (-4,-6.4);
\draw[very thick] (2,-3.6) -- (4,-3.6);
\draw[very thick] (-2,-6.4)-- (-2,-5.7);
\draw[very thick] (0,-5.7) -- (0,-4.3);
\draw[very thick] (2,-3.6) -- (2,-4.3);


    \fill (0,-5) circle (1.5pt);
    \fill (0,-3.6) circle (1.5pt) node[left] {\(a_2\)};
    \fill (0,-4.3) circle (1.5pt) node[left] {\(a_1\)};
    \fill (0,-6.4) circle (1.5pt) node[right] {\(-a_2\)};
    \fill (0,-5.7) circle (1.5pt) node[right] {\(-a_1\)};
    \fill (-2,-5) circle (1.5pt) node[above] {\(-B_2\)};
    \fill (2,-5) circle (1.5pt) node[above] {\(B_2\)};


\end{tikzpicture}

  \caption{Illustration of the quantization step function $Q_k(\cdot)$ with $k=3$ (up) and $k=4$ (down): The 3-point quantization is parameterized by $0=a_1\leq a_2,0=B_1\leq B_2$, whereas the 4-point function is parameterized by $0< a_1< a_2,0=B_1\leq B_2$}
  \label{fig: 3, 4 point function}
\end{figure}

Based on this $k$-point quantization, we first recall the Witsenhausen $k$-point control strategy proposed in \cite{lee2001hierarchical}. Since the estimation cost function derived in the reference involves Fisher information, we provide a more explicit expression of the cost functions in Theorem \ref{thm: n-point strategy}. In order to do this, we define the probability mass of $X_0$ lying in each quantization interval by
\begin{align}
    p_i& \triangleq \mathbb P(X_0\in(-B_{i+1}, -B_i])= \mathbb P(X_0\in[B_i, B_{i+1}))\nonumber\\
       &=\Phi\brackets{\frac{B_{i+1}}{\sqrt{Q}}} - \Phi\brackets{\frac{B_{i}}{\sqrt{Q}}},\quad \text{for }i=1,...,m.\label{eq: p_i}
\end{align}

\begin{figure*}[t]
\begin{align}
    P_k(\mathbf{a}_1^m,\mathbf{B}_1^m) &=  Q - 4\sqrt{Q}\sum_{i=1}^m a_i\brackets{\phi\brackets{\frac{B_i}{\sqrt{Q}}} - \phi\brackets{\frac{B_{i+1}}{\sqrt{Q}}}}+2\sum_{i=1}^ma_i^2p_i \label{eq: P_n, n-point}\\
    S_k(\mathbf{a}_1^m,\mathbf{B}_1^m) &= 2\sum_{i=1}^m a_i^2p_i  - \int \frac{\sbrackets{\sum_{i=1}^m a_i\cdot p_i\brackets{\phi\brackets{\frac{y-a_i}{\sqrt{N}}} - \phi\brackets{\frac{y+a_i}{\sqrt{N}}}}}^2}{\sqrt{N}\sum_{i=1}^m p_i\brackets{\phi\brackets{\frac{y-a_i}{\sqrt{N}}} + \phi\brackets{\frac{y+a_i}{\sqrt{N}}}}}dy\label{eq: S_n, n-point}
\end{align}
\noindent\rule{\textwidth}{0.4pt}
\vspace{-0.5em}
\end{figure*}

\begin{theorem}[$k$-point strategy]\label{thm: n-point strategy}
    For $k\in\mathbb N^\star$, $m=\ceil{\frac{k}{2}}$, and given $(\mathbf{a}_1^m = [a_1,...,a_m]^\top, \mathbf{B}_1^m = [B_1,...,B_m]^\top)$ satisfying either \eqref{eq: n-point odd} or \eqref{eq: n-point even}, the Witsenhausen $k$-point strategy is defined by
    \begin{align}
        U_1 = Q_k(X_0) - X_0,
    \end{align}
    where $Q_k(\cdot)$ is the quantization function defined in \eqref{eq: n step quantization} parameterized by $\mathbf{a}_1^m,\mathbf{B}_1^m$.
    The power and estimation costs induced by this strategy are given by \eqref{eq: P_n, n-point} and \eqref{eq: S_n, n-point}, where the quantity $p_i$ is defined by \eqref{eq: p_i}.
\end{theorem}

This control strategy refines $X_1$ into a more refined quantization symbol, improving the trade-off between the power and estimation costs. We now integrate this idea to generalize our ZEC scheme: Given $V_1\geq 0$ and $(\mathbf{a}_1^m = [a_1,...,a_m]^\top, \mathbf{B}_1^m = [B_1,...,B_m]^\top)$ as above satisfying either \eqref{eq: n-point odd} or \eqref{eq: n-point even}, the ZEC-$k$ scheme is given by the follows:
\begin{align}
\left|
\begin{aligned}
     X_0&\sim\mathcal{N}(0,Q),\\
     W_1&\sim\mathcal{N}(0,V_1),\quad W_1\indep X_0,\\
     W_2& = Q_k(X_0),\\
    U_1 &= W_1 + Q_k(X_0) - X_0 = W_1+W_2-X_0,\\
    X_1 &= U_1+X_0 = W_1+W_2,\\
    Y_1 &= X_1+Z_1 = W_1+W_2+Z_1,\quad Z_1\sim\mathcal{N}(0,N),
\end{aligned}
\right.\label{eq: n-ZEC}
\end{align}
Here, $Q_k(\cdot)$ is the $k$-point quantizer defined as in \eqref{eq: n step quantization}, and the parameter $b=-1$ is fixed, as in ZEC-2 to ensure zero estimation cost, i.e., $X_1$ is fully determined by the auxiliary RVs $(W_1,W_2)$.

Following the same derivation procedure in the last section, the total power cost in this setup becomes:
\begin{align*}
    P = \mathbb E[U_1^2]= V_1 + P_k(\mathbf{a}_1^m,\mathbf{B}_1^m),
\end{align*}
where $P_k(\mathbf{a}_1^m,\mathbf{B}_1^m)$ is given in \eqref{eq: P_n, n-point}. 

To ensure the coordination coding scheme is valid, the ZEC-$k$ scheme must satisfy the information constraint

\begin{align}
    0\leq &I(W_1,W_2;Y_1) - I(W_2;X_0|W_1)\nonumber \\
    &= h(Y_1)-\frac{1}{2}\log 2\pi eN - H(Q_k(X_0))\label{eq: info constraint - n}
\end{align}
where the entropy term $H(Q_k(X_0))$ takes the following form:
\begin{align*}
\left\{
          \begin{aligned}
              &   -2\sum_{i=1}^m p_i\log p_i   &\quad& k \text{ even}, \\
              &  -2\brackets{p_1\log(2\cdot p_1) +\sum_{i=2}^m p_i\log p_i}    &\quad& k\text{ odd}, \\
          \end{aligned}   
        \right.
\end{align*}

Moreover, the marginal density function of $Y_1$ is the following mixture of Gaussian distributions
\begin{align*}
    f_{Y_1}(y)=\sum_{i=1}^m \frac{p_i}{\sqrt{V_1+N}}\sbrackets{\phi\brackets{\frac{y-a_i}{\sqrt{V_1+N}}} +\phi\brackets{\frac{y+a_i}{\sqrt{V_1+N}}}},
\end{align*}
according to which, we calculate numerically the differential entropy $h(Y_1)$.



Conclusively, we obtain the following result:
\begin{theorem}\label{thm: zec-n}
   For the given power cost $P\geq 0$, the estimation cost for the ZEC-$k$ coding scheme \eqref{eq: n-ZEC} is given by
    \begin{align}
        S_{\mathsf{ZEC}\text{-}k}(P) = 0, \text{ for  }P\geq P^*_{\mathsf{ZEC}\text{-}k}, \label{eq: cost function ZEC-n}
    \end{align}
    where the value  
\begin{align}
P_{\mathsf{ZEC}\text{-}k}^* = \quad &\min   \qquad  V_1 + P_k(\mathbf{a}_1^m,\mathbf{B}_1^m)\label{eq: P_ZEC-n},\\
&\text{s.t.} \quad h(Y_1) - \frac{1}{2}\log(2\pi eN) - H(Q_k(X_0))\geq 0.\nonumber
\end{align}

\end{theorem}

Again, from \eqref{eq: P_ZEC-n}, we get that the power consumption needed for the ZEC-$k$ scheme is more than the power required for the single-shot $k$-point strategy denoted by
\begin{align}
    P_k^{\min} = \min_{\mathbf{a}_1^m,\mathbf{B}_1^m}P_k(\mathbf{a}_1^m,\mathbf{B}_1^m).\label{eq: P_k^min}
\end{align}

\section{Causal-encoding with channel feedback}\label{sec: c-n w-f characterization}
\begin{figure}[t]
  \centering


\begin{tikzpicture}[scale=0.9, every node/.style={scale=0.9}]
    \draw (2,0) rectangle (3,1);
    \draw (6.8,0) rectangle (7.8,1);

    \draw (4.2,0.5) circle (0.2) node {$+$};
    \draw (5.6,0.5) circle (0.2) node {$+$};

    \filldraw (1,-0.5) circle (2pt) node[left] {$X_{0,t}$};
    \filldraw (5.6,1.5) circle (2pt) node[above] {$Z_{1,t}$};

    \draw[->] (1,-0.5) -- (1,0.5) -- (2,0.5);
    \draw[->] (1,-0.5) -- (4.2,-0.5) -- (4.2,0.3);
    \draw[->] (3,0.5) -- (4,0.5);
    \draw[->] (4.4,0.5) -- (5.4,0.5);
    \draw[->] (4.9,0.5) -- (4.9,-0.5) -- (8.8,-0.5);
    \draw[->] (5.6,1.5) -- (5.6,0.7);
    \draw[->] (5.8,0.5) -- (6.8,0.5);
    \draw[->] (7.8,0.5) -- (8.8,0.5);
    \draw[->] (6.3,1.1) -- (6.3, 2.1) -- (2.5,2.1) -- (2.5,1); 

    \node at (1.5,0.8) {$X_0^t$};
    \node at (3.5,0.8) {$U_{1,t}$};
    \node at (4.9,0.8) {$X_{1,t}$};
    \node at (6.3,0.8) {$Y_{1}^n$};
    \node at (3,1.5) {$Y_{1}^{t-1}$};
    \node at (8.3,0.8) {$U_{2}^n$};
    \node at (8.3,-0.2) {$X_{1}^n$};
    \node at (2.5,0.5) {$C_1$};
    \node at (7.3,0.5) {$C_2$};
\end{tikzpicture}



\caption{Witsenhausen counterexample for causal-encoding and noncausal-decoding model with channel feedback. At each time, the causal encoder observes the past sequence $X_0^t$ and the channel output $Y_1^{t-1}$ with one time-step delay, and generates $U_{1,t}$. In the end, the noncausal decoder takes the whole vector $Y_1^n$ and outputs $U_2^n$.}
\label{fig:c-n w-f model}
  \label{fig:c-n w-f}
\end{figure}
We consider vector-valued version of the Witsenhausen counterexample with the presence of perfect channel feedback at the causal encoder, see Figure \ref{fig:c-n w-f}.

\subsection{System Model with Feedback}

We first define control designs for this setup as in the last section.
\begin{definition}
    For $n\in\mathbb{N}$, a ``control design'' with causal encoder and noncausal decoder with channel feedback is a tuple of functions $c = (\{ f^{(\mathsf{f},t)}_{U_{1,t}|X_0^t,Y_1^{t-1}}\}_{t=1}^n, g_{U_{2}^n|Y_1^n})$ defined by
    \begin{align*}
        f^{(\mathsf{f},t)}_{U_{1,t}|X_0^t,Y_1^{t-1}}: \mathcal{X}_0^t\times \mathcal{Y}_1^{t-1} \longrightarrow \mathcal{U}_{1},\quad g_{U_{2}^n|Y_1^n}: \mathcal{Y}_1^n\longrightarrow \mathcal{U}_{2}^n,
    \end{align*}
    which induces a distribution over sequences of symbols:
    \begin{equation}
         \prod_{t=1}^n \mathcal{P}_{X_{0,t}} \prod_{t=1}^n f^{(\mathsf{f},t)}_{U_{1,t}|X_0^t,Y_1^{t-1}}\prod_{t=1}^n \mathcal{P}_{X_{1,t},Y_{1,t}|X_{0,t},U_{1,t}} g_{U_{2}^n|Y_1^n}, \label{eq: c-n w-f distribution of sequences}
    \end{equation}
where $Y_1^0 = \emptyset$. We denote by $\mathcal{C}_{\mathsf{f}}(n)$ the set of control designs with causal encoder and noncausal decoder with channel feedback.
\end{definition} 

Analogous to the no-feedback setup, we define the achievable pair of costs $(P,S)$ as the limit point of the averaged long-run costs induced by any control design, as in Definition \ref{def:ach cost, c-n}, and we denote by $\mathcal{R}_{ \mathsf{f}}$ the region of achievable pairs of costs in this setting. For brevity, we omit restating the definition and proceed directly to characterize the region $\mathcal{R}_{\mathsf{f}}$.

\begin{theorem}\label{theorem: c-n w-f}
    The pair of Witsenhausen costs $(P,S)\in \mathcal{R}_{\mathsf{f}}$ if and only if there exists a joint distribution over the random variables $(X_0, W_1, U_1, X_1, Y_1, U_2)$ that decomposes according to
    \begin{equation}
\mathcal{P}_{X_0}\mathcal{P}_{W_1}\mathcal{P}_{U_1|X_0,W_1}\mathcal{P}_{X_1, Y_1|X_0, U_1}\mathcal{P}_{U_2|X_0, W_1, Y_1},\label{eq: c-n w-f prob result}
    \end{equation}
    such that
    \begin{align}
        &I(W_1; Y_1) - I(U_2; X_0 | W_1,Y_1) \geq 0,\label{eq: c-n w-f info result}\\
        &P = \mathbb{E}\sbrackets{U_1^2}, \quad \quad S = \mathbb{E}\sbrackets{(X_1 - U_2)^2},\label{eq: c-n w-f cost result}
    \end{align}
    where $\mathcal{P}_{X_0}$ and $\mathcal{P}_{X_1, Y_1|X_0, U_1}$ are the given  distributions, and $W_1$ is an auxiliary RV.
\end{theorem}

We provide the achievability proof and converse proof in Appendix \ref{app: achievability proof c-n w-f} and \ref{app: converse proof c-n w-f} respectively. 
The converse proof is adapted from the converse proof of Theorem III.2 in \cite[App. C.]{Letreust2015empirical}, which dealt with a state-independent channel. We extend the arguments from this reference to apply to the state-dependent channel in our setting, without affecting the outcome of the result. Furthermore, the coding scheme for the achievability proof for Theorem \ref{theorem: c-n w-f} also extends the approach provided in \cite[App. B.]{Letreust2015empirical} by changing the channel from state-independent to state-dependent. 

\begin{remark}
The decomposition of the distribution \eqref{eq: c-n w-f prob result} is equivalent to the following Markov chains
    \begin{align}
        \left\{  
        \begin{aligned}
            &X_0\text{ is independent of }W_1,\\
            &(X_1,Y_1)-\!\!\!\!\minuso\!\!\!\!- (X_0, U_1)    -\!\!\!\!\minuso\!\!\!\!- W_1,\\
            & U_2 -\!\!\!\!\minuso\!\!\!\!- (X_0, Y_1,W_1) -\!\!\!\!\minuso\!\!\!\!- (U_1, X_1).
        \end{aligned}
        \right.
        \label{c-n w-f markov result}
        \end{align}
\end{remark}

Note that, the information constraint \eqref{eq: c-n w-f info result} in Theorem \ref{theorem: c-n w-f} could be directly obtained from the information constraint in Theorem \ref{thm: c-n wits main} if we identify the following RVs based on the feature that perfect channel feedback is available:
\begin{align}
    &\tilde{X}_0  = (X_0,Y_1),\label{X_0tilde}\\
    &\tilde{W}_2  = U_2.\label{W_2tilde}
\end{align}
Hence we have 
\begin{align*}
    &I(W_1,\tilde{W}_2;Y_1) - I(\tilde{W}_2;\tilde{X}_0|W_1)\\
    &=I(W_1,U_2;Y_1) - I(U_2;X_0,Y_1|W_1)\\
    &= I(W_1;Y_1) + I(U_2;Y_1|W_1) -I(U_2;Y_1|W_1) \\
    &\qquad -I(U_2;X_0|W_1,Y_1)\\
    &=I(W_1;Y_1) -I(U_2;X_0|W_1,Y_1).
\end{align*}
The identification \eqref{X_0tilde} means that the channel feedback $Y_1$ plays the role of an additional source symbol. And \eqref{W_2tilde} means that it is optimal for the encoder to recommend to the decoder the action $U_2$ to implement.

\begin{remark}\label{rmk: feedback enlarges cost region}
    Compared to Theorem \ref{thm: c-n wits main} for the setup without channel feedback, the presence of channel feedback enables the decoder to directly coordinate with the source state $X_0$ instead of its noisy representation $W_2$ as in \eqref{prob result}. Therefore, feedback helps with better communication and also enlarges the set of achievable cost pairs, i.e., $\mathcal{R}\subseteq \mathcal{R}_{\mathsf{f}}$. Similar observations have been also pointed out in \cite{bross2017rate, Letreust2015empirical, Letreust2021state}
\end{remark}

\subsection{Zero Estimation Cost Strategy with Feedback}\label{subsec: zec c-n w-f}

In this section, we derive an analogous ZEC scheme, so-called ZEC-f, that is based on the coordination coding result of Theorem \ref{theorem: c-n w-f} that further improves the power-estimation cost trade-off performance.

Given parameters $V_1\geq 0,a,b\in\mathbb R$, we consider the following design
\begin{align}
\left|
\begin{aligned}
&X_0 \sim \mathcal{N}(0,Q),\\
&W_1\sim\mathcal{N}(0,V_1),\\
&U_1 = W_1 + a\cdot \text{sign}(X_0) + b\cdot X_0,\\
    &X_1 = U_1 + X_0 = W_1 + a\cdot \text{sign}(X_0) + (b+1) X_0,\\
    &Y_1 = W_1 + a\cdot \text{sign}(X_0) + (b+1) X_0 + Z_1.
\end{aligned}
\right.\label{eq: ZEC feedback scheme}
\end{align}  
This strategy also follows the idea of designing the system state $X_1$ being a deterministic function of $(X_0,W_1)$, which is known to the decoder. Notably, due to channel feedback, the source state $X_0$ is transmitted to the decoder perfectly, hence, even if $b\neq -1$, we still could exactly reconstruct $X_1$ from its MMSE estimator. We have
\begin{align}
    U_2 &= \mathbb E[X_1|X_0,W_1,Y_1]\nonumber\\
    &= W_1 + a\cdot \text{sign}(X_0) + (b+1) X_0=X_1\label{eq: U_2=X_1}
\end{align}
which results in an achievable estimation cost $S=0$.




Before stating our cost function result, we first define the skew-normal distribution.

\begin{definition}
    The random variable $X$ is distributed according to the skew-normal distribution $X\sim\mathcal{SN}(u, \sigma, \alpha)$ with location $u$, scale $\sigma>0$ and skewness parameter $\alpha$, if the probability density function writes
\begin{align}
    f(x) = \frac{2}{\sigma}\cdot\phi\brackets{\frac{x-u}{\sigma}}\Phi\brackets{\alpha\cdot\frac{x-u}{\sigma}}.
\end{align}
\end{definition}


\begin{theorem}\label{thm: ZEC feedback cost}
    For a given power budget $P\geq 0$, the achievable estimation cost for ZEC-f scheme is
    \begin{align}
        S_{\mathsf{ZEC}\text{-}\mathsf{f}}(P) = 0, \text{ for }P\geq P_{\mathsf{ZEC}\text{-}\mathsf{f}}^*, 
    \end{align}
    where the minimum required power
\begin{align}
    P_{\mathsf{ZEC}\text{-}\mathsf{f}}^* &= \min_{V_1\geq 0,a,b\in\mathbb R}V_1 + a^2 + b^2 Q + 2ab\sqrt{\frac{2Q}{\pi}},\label{eq: power ZEC-f}\\
    &\quad \text{s.t. }h(Y_1)-h(X)-\frac{1}{2}\log 2\pi eN\geq 0,\label{eq: info constraint ZEC-f}
\end{align}
and the differential entropies are computed according to 
\begin{align}
    f_{Y_1}(y) &= \frac{1}{2}\mathcal{SN}(a, \sigma, \alpha)+ \frac{1}{2}\mathcal{SN}(-a, \sigma, -\alpha)\label{eq: pdf Y_1, gaussian W_1},\\
    f_X(x) &= \frac{1}{\sqrt{\delta^2Q}}\phi\brackets{\frac{x-a}{\sqrt{\delta^2Q}}}\mathbf{1}_{\text{sign}(\delta)\cdot (x-a)\geq 0} \nonumber\\
    &\qquad + \frac{1}{\sqrt{\delta^2Q}}\phi\brackets{\frac{x+a}{\sqrt{\delta^2Q}}}\mathbf{1}_{\text{sign}(\delta)\cdot (x+a)\leq 0}.\label{eq: pdf X, piecewise gaussian}
\end{align}
Here, we take $\delta = b+1, \sigma^2 = V_1 + \delta^2Q + N, \alpha = \frac{\sqrt{Q}\delta}{\sqrt{V_1+N}}$.
\end{theorem}
The proof of Theorem \ref{thm: ZEC feedback cost} is stated in Appendix \ref{app: proof of ZEC feedback info constraint}. The following corollary is an immediate consequence of Theorem~\ref{thm: ZEC feedback cost}.

\begin{corollary}\label{cor: ZECZEP}
    If the Gaussian variances $Q>0,N>0$ satisfy 
    \begin{align}
        2\pi e\cdot\frac{QN}{Q+N}\leq 1,\label{eq:ZECZEP}
    \end{align}
    then, the ZEC-f scheme achieves zero-power zero-estimation cost performance.
\end{corollary}

The proof is provided in Appendix \ref{app: proof of cor ZECZEP}. Importantly, Corollary \ref{cor: ZECZEP} also provides a converse result: under the condition \eqref{eq:ZECZEP}, ZEC-f strategy is the optimal strategy to conduct. To clarify this result, we have the following remark.

\begin{remark}
   Achieving zero-power and zero-estimation cost in Definition \ref{def:ach cost, c-n} means that the averaged long-run quantities \eqref{eq: c_P} and \eqref{eq: c_S} can be made arbitrarily close to zero as $n\rightarrow\infty$. This does not mean that the two quantities can be made strictly identical to zero. Using our coordination-coding results, we show the existence of a sequence of control designs whose induced cost pairs converge to zero. Usually, at the strictly zero boundary, a valid coding scheme is not guaranteed to exist. Analogous behavior appears in the wideband regime of communication channels  \cite{verdu2002spectral}, where Verdú showed that the minimum energy-per-bit is achieved only as spectral efficiency tends to zero, i.e., ``low, but non-zero'' rate.
\end{remark}

    This counterintuitive result naturally raises the question: what is communicated when the averaged long-run cost $P=0$, i.e., when the first DM only effectively transmits a finite number of times? In the presence of perfect channel feedback, the first DM has access to all the information available at the second DM. As a result, when the noise level $N\geq 0$ is small, the encoder can easily predict the noise to refine its actions and correct past errors online, such as in the classical Schalkwijk–Kailath coding scheme \cite{schalkwijk1966coding}.

\section{Numerical Results}\label{sec: numerical results}

In this section, we examine the performance of the proposed ZEC-2, ZEC-$k$, ZEC-f coordination coding schemes and we compare them with other control strategies.

First, we revisit the coding scheme proposed by Grover and Sahai \cite{Grover2010Witsenhausen} when \textit{both} encoder and decoder are \textit{noncausal}. Given a block-length $n\in\mathbb{N}^{\star}$, a control design is given by a tuple of stochastic function $c = (f,g)$ defined by
\begin{align}
    f:\mathcal{X}_0^n \longrightarrow \mathcal{U}_1^n,\quad g:\mathcal{Y}_1^n \longrightarrow \mathcal{U}_2^n.
\end{align}

This model is superior to our model described in Section~\ref{subsec: c-n system model} since the encoder is non-causal.

Besides their DPC-based scheme introduced in \cite[App. D.1-D.7]{Grover2010Witsenhausen}, they further extend it to a combination between the linear scheme and the DPC-based scheme \cite[App. D.8]{Grover2010Witsenhausen} summarized as follows: 

\vspace{0.2cm}

 \begin{figure}[t]
        \centering

\definecolor{airforceblue}{rgb}{0.000, 0.447, 0.741}   
\definecolor{antiquebrass}{rgb}{0.850, 0.325, 0.098}   
\definecolor{alizarin}{rgb}{0.929, 0.694, 0.125}       
\definecolor{amethyst}{rgb}{0.0, 0.55, 0.3}        
\definecolor{leafgreen}{rgb}{0.466, 0.674, 0.188}      
\definecolor{skyblue}{rgb}{0.301, 0.745, 0.933}        
\definecolor{crimsonred}{rgb}{0.635, 0.078, 0.184}     
\definecolor{darkgray}{rgb}{0.333, 0.333, 0.333}       
\definecolor{bronze}{rgb}{0.792, 0.569, 0.212}         
\definecolor{teal}{rgb}{0.494,0.184,0.556}          


\begin{tikzpicture}[scale=1.02]
\begin{axis}[
    xlabel={$N$},
    ylabel={$P$},
    legend style={at={(0.05,1.45)}, anchor=north west},
    axis lines=middle,
    axis line style={black, line width=0.8pt},
    xmin=0, xmax=0.83,
    ymin=-0.0005, ymax=1.1,
    xtick={0.062, 0.65, 0.8},
    xticklabels={0.062, 0.65, 0.8},
    ytick={0, 0.12, 0.19, 0.363, 1},
    scaled ticks=false,
    xlabel style={at={(ticklabel* cs:1)}, anchor=west},
    ylabel style={at={(ticklabel* cs:1)}, anchor=south east},
    axis x line=bottom,
    axis y line=left,
]

\addplot[skyblue, line width=1.2pt, mark=diamond*, mark size=1.8pt, domain=0:0.825] {1};
\addlegendentry{$P_{\mathsf{scalar}}^*$}

\addplot[ airforceblue, line width=1.2pt, mark=o, mark size=1.6pt, mark repeat = 2] 
  table [col sep=comma, x index=0, y index=1] {data/ZEC2NvsP.csv};
\addlegendentry{$P^*_{\mathsf{ZEC}\text{-}2}$}

\addplot[amethyst, line width=1.3pt, mark=square*, mark size=1.5pt, mark repeat = 2] 
  table [col sep=comma, x index=0, y index=1] {data/ZEC3NvsP.csv};
\addlegendentry{$P^*_{\mathsf{ZEC}\text{-}3}$}

\addplot[alizarin, line width=1.2pt, mark=triangle*, mark size=1.6pt, mark repeat = 2] 
  table [col sep=comma, x index=0, y index=1] {data/ZEC4NvsP.csv};
\addlegendentry{$P^*_{\mathsf{ZEC}\text{-}4}$}

\addplot[antiquebrass, line width=1.2pt, mark=diamond*, mark size=1.8pt, mark repeat = 2] 
  table [col sep=comma, x index=0, y index=1] {data/ZECfNvsP.csv};
\addlegendentry{$P^*_{\mathsf{ZEC}\text{-}\mathsf{f}}$}

\addplot[teal, line width=1.2pt, mark=o, mark size=1.7pt, mark repeat = 12] 
  table [col sep=comma, x index=0, y index=1] {data/dpcNvsP.csv};
\addlegendentry{$P^*_{\mathsf{lin+dpc}}$}



\end{axis}
\end{tikzpicture}

        \caption{Minimum required power cost $P^*$ as a function of the noise variance $N$ for the single-shot strategies, noncausal strategy that combines linear and DPC scheme, and our proposed ZEC-2, ZEC-3, ZEC-4, and ZEC-f schemes when $Q=1$. 
        }
        \label{fig:NvsP}
    \end{figure}

Given $-\sqrt{\frac{P}{Q}}\leq \beta\leq \sqrt{\frac{P}{Q}}$, the power budget $P$ is divided into a linear part $U_{1,1} = -\beta X_0$ and the other part $U_{1,2}$, which is used to implement the DPC-based method against the state $(1-\beta) X_0\sim\mathcal{N}(0, (1-\beta)^2Q)$ with power constraint $\mathbb E[U_{1,2}^2]\leq P-\beta^2Q$. By taking $\beta = -\rho\sqrt{\frac{P}{Q}}$, we obtain the following correlation matrix of $(X_0,U_1)$
\begin{align*}
    K = \begin{pmatrix}
        Q & \rho\sqrt{PQ} \\ \rho\sqrt{PQ} &P
    \end{pmatrix}
\end{align*}
and the state-dependent channel given by
\begin{align*}
    &\Tilde{S} = \frac{\sqrt{Q} + \rho\sqrt{P}}{\sqrt{Q}}\cdot X_0,\\
    &\Tilde{X} \indep (\Tilde{S}, X_0),\text{ with }\mathbb E[\Tilde{X}^2]\leq P(1-\rho^2),\\
    &X_1  = X_0+U_1=\Tilde{X}+\Tilde{S},\\
    &Y_1 = \Tilde{X}+\Tilde{S}+Z.
\end{align*}
\begin{figure*}[t]
\begin{align}
    S_{\mathsf{lin+dpc}}(P) = \min_{\rho\in[-1,1]} \frac{N\brackets{P(1-\rho^2)\sqrt{P+Q+2\rho\sqrt{PQ} +N} - N(\sqrt{Q}+\rho\sqrt{P})}^2}{(P(1-\rho^2)+N)^2(P+Q+2\rho \sqrt{PQ} + N)}\label{eq: S lin+dpc}
\end{align}
\noindent\rule{\textwidth}{0.4pt}
\vspace{-0.5em}
\end{figure*}
This leads to the estimation cost function \eqref{eq: S lin+dpc}. In particular, the minimum power required for enabling zero estimation cost denoted by $P^*_{\mathsf{lin+dpc}}$ for this scheme is given by the unique positive root of equation
\begin{align}
    P^2(P+Q+N) = QN^2\label{eq: P^* dpc}.
\end{align}

This scheme has also been evaluated in \cite{Treust2024power} and is shown to outperform all other estimation costs, including their newly-proposed noncausal-causal coordination coding scheme in \cite[Sec. IV-C.]{Treust2024power}.


\subsection{Comparison of Minimum Power Cost for Zero Estimation Cost Performance}


In this section, for fixed $Q=1$, we compare the minimum required power cost for achieving zero estimation cost, given different noise levels $N\geq 0$ for the following strategies:
\begin{itemize}
        \item Single-shot scalar control strategies: such as the best linear policy \eqref{eq: opt linear cost}, the optimal joint Gaussian strategy \eqref{opt gaussian cost}, and the $k$-point strategy \eqref{eq: P_n, n-point} and \eqref{eq: S_n, n-point}. As summarized in Prop. \ref{prop: single-shot zero est}, all of the above-mentioned strategies achieve zero estimation cost with the minimum required power budget denoted by $P_{\mathsf{scalar}}^* = Q = 1$.
        \item ZEC-$k$ schemes given in Theorem \ref{thm: zec} and Theorem \ref{thm: zec-n} denoted by $P^*_{\mathsf{ZEC}\text{-}k}$ for $k=2,3,4$.
        \item ZEC-f scheme given in Theorem \ref{thm: ZEC feedback cost} denoted by $P^*_{\mathsf{ZEC}\text{-}\mathsf{f}}$.
        \item The combination of linear and DPC-based scheme when both DMs are noncausal recalled earlier, where the power cost required for achieving zero estimation cost is given by $P^*_{\mathsf{lin+dpc}}$, i.e., the unique positive root of \eqref{eq: P^* dpc}.
\end{itemize}

As we can see from Figure \ref{fig:NvsP}, when the noise variance $N$ increases from zero, the minimum required power $P^*_{\mathsf{ZEC}\text{-}k}$ grows from the corresponding values of the single-shot $k$-point strategies defined in \eqref{eq: P_k^min}. Specifically, with our chosen parameter,
\begin{align}
P_2^{\min}= 0.363,\quad
    P_3^{\min}= 0.19,\quad
    P_4^{\min}= 0.12.\label{eq: P* n-point stratege}
\end{align}
This indicates that when noise level increases from zero, the power $V_1$ for the auxiliary RV $W_1$ also increases from zero, and zero-cost estimation can be achieved without requiring additional power beyond that of the original $k$-point strategy. However, when $N\geq 0.65$, $P_{\mathsf{ZEC}\text{-}\mathsf{2}}^*\geq Q=1$, meaning that the ZEC-2 scheme can no longer offer a block-coding gain in the high-noise regimes.

Moving from ZEC-2 to ZEC-3 yields substantial improvements across both low- and high-noise regimes, and the power cost saturates as $N$ continuously increases. This highlights the critical role of including a zero quantization level, i.e., $Q_3(x)=0$ for $-B_2 \leq x \leq B_2$, see again Figure \ref{fig: 3, 4 point function}.

The transition from ZEC-3 to ZEC-4 provides further improvement only at small noise levels. Interestingly, for $N\geq 0.24$, the ZEC-4 scheme effectively reduces to ZEC-3, since the optimizer becomes $a_1=0$. This occurs because at higher noise levels, using fewer quantization levels simplifies the decoder’s task of distinguishing signals, while still ensuring zero estimation cost.

Finally, the ZEC-f scheme with feedback strictly outperforms all no-feedback schemes, confirming the observation in Remark \ref{rmk: feedback enlarges cost region} that feedback enlarges the achievable cost region without feedback. Remarkably, it even effectively surpasses the noncausal linear and DPC-based scheme. Furthermore, as explained in Corollary \ref{cor: ZECZEP}, it achieves the extreme operating point of zero power and zero estimation cost whenever $N\leq \frac{Q}{2\pi e Q-1}\approx 0.062 $, and for larger $N$ it smoothly converges to a limiting power cost of $0.17$. 


\subsection{Optimized Parameters}

\begin{table*}[t]\centering
\renewcommand{\arraystretch}{1.4}
\setlength{\tabcolsep}{7.4pt}
\rowcolors{2}{gray!10}{white}
\scriptsize
\begin{tabular}{c  ccc  cccc  ccccc  ccc}
\toprule
\multirow{2}{*}{$N$} & \multicolumn{3}{c}{ZEC-2} & \multicolumn{4}{c}{ZEC-3} & \multicolumn{5}{c}{ZEC-4} & \multicolumn{3}{c}{ZEC-f} \\
\cmidrule(lr){2-4}\cmidrule(lr){5-8}\cmidrule(lr){9-13}\cmidrule(lr){14-16}
& $P^*$ & $V_1^*$ & $a^*$ & $P^*$ & $V_1^*$ & $a_2^*$ & $B_2^*$ & $P^*$ & $V_1^*$ & $a_1^*$ & $a_2^*$ & $B_2^*$ & $P^*$ & $a^*$ & $b^*$ \\
\midrule
0.1 & 0.367 & 0.003 & 0.814 & 0.216 & 0.017 & 1.389 & 0.696 & 0.184 & 0.027 & 0.660 & 2.040 & 1.359 & 0.027 & 0.263 & -0.247 \\
0.2 & 0.410 & 0.033 & 0.914 & 0.338 & 0.057 & 1.812 & 0.935 & 0.335 & 0.060 & 0.883 & 2.824 & 1.904 & 0.088 & 0.444 & -0.484 \\
0.3 & 0.501 & 0.081 & 1.035 & 0.468 & 0.084 & 2.150 & 1.139 & 0.468 & 0.084 & 0.000 & 2.150 & 1.139 & 0.115 & 0.496 & -0.556 \\
0.4 & 0.623 & 0.138 & 1.147 & 0.579 & 0.102 & 2.425 & 1.308 & 0.579 & 0.102 & 0.000 & 2.425 & 1.308 & 0.131 & 0.533 & -0.593 \\
0.5 & 0.768 & 0.201 & 1.249 & 0.670 & 0.111 & 2.661 & 1.455 & 0.670 & 0.111 & 0.000 & 2.661 & 1.455 & 0.142 & 0.548 & -0.617 \\
0.6 & 0.927 & 0.268 & 1.342 & 0.743 & 0.115 & 2.874 & 1.588 & 0.743 & 0.115 & 0.000 & 2.874 & 1.588 & 0.151 & 0.560 & -0.639 \\
0.7 & 1.099 & 0.339 & 1.428 & 0.801 & 0.113 & 3.073 & 1.712 & 0.801 & 0.113 & 0.000 & 3.073 & 1.712 & 0.158 & 0.568 & -0.654 \\
0.8 & 1.281 & 0.413 & 1.508 & 0.848 & 0.108 & 3.264 & 1.830 & 0.848 & 0.108 & 0.000 & 3.264 & 1.830 & 0.163 & 0.576 & -0.665 \\
\bottomrule
\end{tabular}
\vspace{4pt}
\caption{$P^*$ and corresponding optimized parameters for ZEC-2, ZEC-3, ZEC-4, and ZEC-f schemes across different $N$.}
\label{tab:zec-par}
\renewcommand{\arraystretch}{1.0}
\vspace{-4pt}
\end{table*}

Table \ref{tab:zec-par} records the minimum required power cost $P^*$ and the corresponding optimized parameters $V_1^*, (\mathbf{a}_1^m)^*, (\mathbf{B}_1^m)^*$ for the shown ZEC-$k$ and ZEC-f schemes across different noise levels. To better illustrate how these parameters evolve with the noise level, we also plot the curves of the optimized values $V_1^*$ and $a^*$ for ZEC-2, and $a^*$ and $b^*$ for ZEC-f in Figure \ref{fig:opt_data_ZEC2}.

\begin{figure}[t]
    \centering
    \resizebox{0.45\textwidth}{!}{
\begin{tikzpicture}
    \begin{groupplot}[
        group style={
            group size=2 by 2,
            horizontal sep=1.5cm,
            vertical sep=2cm,
        },
        width=5cm,
        height=5cm,
        legend pos=north east,
        axis lines=middle,
        axis line style={black, line width=1pt},
        ytick=\empty,
        xlabel style={at={(axis description cs:1,0)}, anchor=west},
        ylabel style={at={(axis description cs:0,1)}, anchor=south}
    ]

\nextgroupplot[
    xlabel={$N$},
    ylabel={$\log V_1^*$},
    xmin = 0, xmax = 0.81,
    ymin=-13, ymax=1,
    xtick={0,0.8},
    xticklabels={0,0.8},
    ytick={-13,0},
    yticklabels={-13,0},
    axis x line=bottom,
    axis y line=left,
    mark size=1pt
]
    \addplot+[black, line width=1pt, mark=o, mark size=1.3pt,mark repeat=2]
      table [x index=0, y index=1, col sep=comma] {data/logV1vsN.csv};

    \nextgroupplot[
        xlabel={$N$},
        ylabel={$a^*$},
        xmin = 0, xmax = 0.81,
        ymin=0.7, ymax=1.75,
        xtick={0,0.8},
        xticklabels={0,0.8},
        axis x line=bottom,
        axis y line=left,
        ytick={0.7, 1.7},
        yticklabels={0.7, 1.7},
        mark size=1pt
    ]
    \addplot+[black, line width=1pt, mark=o, mark size=1.3pt,mark repeat=2]
      table [x index=0, y index=1, col sep=comma] {data/avsN.csv};

    \nextgroupplot[
        xlabel={$N$},
        ylabel={$a*$},
        ymin=0.0, ymax=0.65,
        xmin=0, xmax=0.81,
        axis x line=bottom,
        axis y line=left,
        xtick={0,0.8},
        xticklabels={0,0.8},
        ytick={0, 0.6},
        yticklabels={0, 0.6},
        mark size=1pt
    ]
    \addplot+[black, line width=1pt, mark=o, mark size=1.3pt, mark repeat=3]
      table [x index=0, y index=1, col sep=comma] {data/ZECfNvsa.csv};

    \nextgroupplot[
        xlabel={$N$},
        ylabel={$b*$},
       xmin = 0, xmax = 0.81,
    ymin=-0.75, ymax=0.1,
    xtick={0,0.8},
    xticklabels={0,0.8},
    ytick={-0.7,0},
    yticklabels={-0.7,0},
    axis x line=bottom,
    axis y line=left,
    mark size=1pt
    ]
    \addplot+[black, line width=1pt, mark=o, mark size=1.3pt, mark repeat=2]
      table [x index=0, y index=1, col sep=comma] {data/ZECfNvsb.csv};

    \end{groupplot}
\end{tikzpicture}
    \caption{Evolution of optimized parameters for the ZEC-2 scheme (top row) and the ZEC-f scheme (bottom row). Top: $\log V_1^*$ versus noise $N$ (left), and $a^*$ versus $N$ (right). Bottom: $a^*$ versus $N$ (left), and $b^*$ versus $N$ (right).}
    \label{fig:opt_data_ZEC2}
\end{figure}

The results show that for ZEC-2 and ZEC-3, the optimized signal levels $V_1^*, a_i^*, B_i^*$ generally increase with $N$, reflecting the higher power required to maintain zero estimation cost under noisier conditions. On the other hand, as also observed in Figure \ref{fig:NvsP}, when $N \geq 0.24$, the ZEC-4 scheme effectively collapses to ZEC-3, with $a_1^*=0$, in order to remain the sanity of zero-cost estimation at the decoder.

As for ZEC-f, we notice that the optimized $V_1^*$ remains small but nonzero across all noise levels. This is because the power cost to be minimized \eqref{eq: power ZEC-f} penalizes $V_1$ linearly while a vanishingly small $V_1$ suffices to satisfy the information constraint \eqref{eq: info constraint ZEC-f}. This means, the random codebook induced by $W_1$ becomes nearly deterministic, and we are close to the boundary of the achievable cost region characterized in Theorem~\ref{theorem: c-n w-f}.
Therefore, here, only the optimized parameters $a^*, b^*$ are recorded. As we can see, both $a^*$ and $|b^*|$ grow steadily with $N$, implying that the slope of the piecewise linear mapping $X_1 =  a\cdot\text{sign}(X_0) + (b+1)\cdot X_0$ (with $-1<b<0$) decreases, thereby converging toward an effective two-point quantization scheme with slope zero. This trend is consistent with the analysis of the role of slope uncovered in \cite{zhao2025lowpower}, which shows that higher power cost leads to flatter (less sloped) piecewise linear policies, making the estimation at the second DM more efficient.

Moreover, our simulations show that if we manually enforce $b=-1$ in the ZEC-f formulation \eqref{eq: ZEC feedback scheme}, i.e., we eliminate the continuous source state $X_0$ from $X_1$ as we did in the no-feedback ZEC schemes, the resulting performance ($P^*,V_1^*,a^*$) exactly coincides with that of the ZEC-2 scheme without feedback. This observation further underscores the importance of feedback, as it enables direct communication of the original source signal $X_0$ to the decoder, which allows the minimum required power cost to be significantly reduced.

\subsection{Power-Estimation Performance Comparison}

We now compare the performance of the proposed ZEC schemes with classical single-shot strategies in the power–estimation cost space.

In order to evaluate the single-shot $k$-point strategy given in Theorem \ref{thm: n-point strategy}, we normalize the original cost function \eqref{eq: scalar obj function} by dividing it by $\lambda^2+1$. This is equivalent to a weighted optimization problem parameterized by $\omega \triangleq \frac{\lambda^2}{\lambda^2+1}\in[0,1]$, each of which corresponds to a supporting hyperplane in a two-dimensional power-estimation cost space. Specifically, for a chosen $\lambda^2\in[0,\infty)$, a quantization level $k\in\mathbb N^{\star}$ and $m=\ceil{\frac{k}{2}}$, we determine the optimal parameters $\mathbf{a}_1^m, \mathbf{B}_1^m$ satisfying Definition \ref{def: n-point quantization}, such that
\begin{align*}
    (\mathbf{a}^*_{\omega}, \mathbf{B}^*_{\omega})& =  \argmin\frac{1}{\lambda^2+1}\sbrackets{ \lambda^2 P_k(\mathbf{a}_1^m,\mathbf{B}_1^m) + S_k(\mathbf{a}_1^m,\mathbf{B}_1^m) }\\
    &=
    \argmin\sbrackets{ \omega P_k(\mathbf{a}_1^m,\mathbf{B}_1^m) + (1-\omega) S_k(\mathbf{a}_1^m,\mathbf{B}_1^m) },
\end{align*}
which gives us the best configuration for the given parameter $\lambda^2$. Then, we plot the curve $\brackets{P_k(\mathbf{a}^*_\omega, \mathbf{B}^*_\omega), S_k(\mathbf{a}^*_\omega, \mathbf{B}^*_\omega)}\in\mathbb R^2_+$ for all $\omega=\frac{\lambda^2}{\lambda^2+1}\in[0,1]$, which gives us the optimized estimation cost value at each specific power consumption. For simplicity, from now on we denote this optimized power-estimation cost function for the $k$-point strategy by $S_k(P)$.

\begin{figure}[h]
        \centering

\definecolor{airforceblue}{rgb}{0.000, 0.447, 0.741}   
\definecolor{antiquebrass}{rgb}{0.850, 0.325, 0.098}   
\definecolor{alizarin}{rgb}{0.929, 0.694, 0.125}       
\definecolor{amethyst}{rgb}{0.0, 0.55, 0.3}        
\definecolor{leafgreen}{rgb}{0.466, 0.674, 0.188}      
\definecolor{skyblue}{rgb}{0.301, 0.745, 0.933}        
\definecolor{crimsonred}{rgb}{0.635, 0.078, 0.184}     
\definecolor{darkgray}{rgb}{0.333, 0.333, 0.333}       
\definecolor{bronze}{rgb}{0.792, 0.569, 0.212}         
\definecolor{teal}{rgb}{0.494,0.184,0.556}          


\begin{tikzpicture}[scale=1]
\begin{axis}[
    xlabel={$P$},
    ylabel={$\mathsf{MMSE}$},
    legend style={at={(1,1.1)}, anchor=north east, font=\small},
    axis lines=middle,
    axis line style={black, line width=1pt},
    xmin=0, xmax=1.02,
    ymin=0, ymax=0.132,
    axis x line=bottom,
    axis y line=left,
    xticklabel style={font=\scriptsize},
    xtick={0.065,0.13,  0.26, 0.27, 0.38, 1},
    xticklabels={
        {\scriptsize\hspace*{-0.5em} 0.065},
        {\scriptsize\hspace*{0.5em} 0.13},
        {\scriptsize\hspace*{-2.0em}0.26},
        {\scriptsize\hspace*{1.5em}0.27},
        {\scriptsize0.38},
        {\normalsize 1}
    },
    ytick={0},
    xlabel style={at={(ticklabel* cs:1.01)}, anchor=north west},
    ylabel style={at={(ticklabel* cs:1)}, anchor=south east},
]

\addplot[teal, line width=1.2pt, mark=o, mark size=1.7pt, mark repeat = 8] 
  table [col sep=comma, x index=0, y index=1] {data/dpc_015.csv};
\addlegendentry{$S_{\mathsf{lin+dpc}}(P)$}

\addplot[alizarin, line width=1pt, mark=diamond*, mark size=2pt,mark repeat=1] 
    table [col sep=comma, x index=0, y index=1] {data/ZEC4_015.csv};
\addlegendentry{$S_{\mathsf{ZEC}\text{-}4}(P)$}

\addplot[amethyst, line width=1pt, mark=diamond*, mark size=1.7pt,mark repeat=1] 
    table [col sep=comma, x index=0, y index=1] {data/ZEC3_015.csv};
\addlegendentry{$S_{\mathsf{ZEC}\text{-}3}(P)$}

\addplot[airforceblue, line width=1pt, solid, mark=diamond*, mark size=1.4pt,mark repeat=1]
    table [col sep=comma, x index=0, y index=1] {data/ZEC2_015.csv};
\addlegendentry{$S_{\mathsf{ZEC}\text{-}2}(P)$}

\addplot[alizarin, line width=1pt, solid, mark=o, mark size=1.6pt,mark repeat=5]
    table [col sep=comma, x index=0, y index=1] {data/4P_015.csv};
\addlegendentry{$S_4(P)$}

\addplot[
    amethyst, 
    line width=1pt, 
    solid, 
    mark=o, 
    mark size=1.6pt, 
    mark repeat=10 
] 
table [col sep=comma, x index=0, y index=1] {data/3P_015.csv};
\addlegendentry{$S_3(P)$}

\addplot[airforceblue, line width=1pt, solid, mark=o, mark size=1.6pt,mark repeat=40]
    table [col sep=comma, x index=0, y index=1] {data/2Popt_015.csv};
\addlegendentry{$S_2(P)$}


\addplot[skyblue, line width=1pt, mark=triangle*, mark size=1.7pt] 
    table [col sep=comma, x index=0, y index=1] {data/Sl_015.csv};
\addlegendentry{$S_{\ell}(P)$}

\addplot[leafgreen, line width=1pt, solid, mark=triangle*, mark size=1.7pt]
    table [col sep=comma, x index=0, y index=1] {data/SG_015.csv};
\addlegendentry{$S_{\mathsf{G}}(P)$}



\addplot[only marks, mark=x, mark size=3.5pt, line width=2pt, color=antiquebrass] coordinates {
     (0.065,0)
};
\addlegendentry{$S_{\mathsf{ZEC}\text{-}\mathsf{f}}(P)$}

\addplot[only marks, mark=|, mark size=3pt, line width = 0.7pt, color=black] coordinates {
   (0.1325,0)  
   (0.255,0) (0.272,0)(0.38,0)
};

\end{axis}
\end{tikzpicture}

        \caption{Performance comparison at $Q=1$ and $N=0.15$ between ZEC-$k$ schemes ($S_{\mathsf{ZEC}\text{-}k}(P)$), the corresponding $k$-point strategies ($S_k(P)$) for $k=2,3,4$, the best linear strategy ($S_\ell(P)$), and the optimal joint Gaussian strategy ($S_G(P)$). The ZEC-2, ZEC-3, and ZEC-4 schemes achieve zero-estimation-cost reconstruction at $P_{\mathsf{ZEC}\text{-}2}^*=0.38$, $P_{\mathsf{ZEC}\text{-}3}^*=0.27$, and $P_{\mathsf{ZEC}\text{-}4}^*=0.26$, respectively, while the feedback-enabled ZEC-f scheme requires only $P_{\mathsf{ZEC}\text{-}f}^*=0.065$.}
        \label{fig:Q=1,N=0.15}
    \end{figure}

Now, we compare the cost functions of ZEC-2, ZEC-3, ZEC-4, and ZEC-f, denoted $S_{\mathsf{ZEC}\text{-}2}(P)$, $S_{\mathsf{ZEC}\text{-}3}(P)$, $S_{\mathsf{ZEC}\text{-}4}(P)$, and $S_{\mathsf{ZEC}\text{-}f}(P)$\footnote{For brevity, in the figures we plot only the minimum required power for ZEC-f.}, respectively, with those of the single-shot 2-, 3-, and 4-point strategies $S_2(P), S_3(P), S_4(P)$ (after the weighted-optimization procedure mentioned-above), the best affine strategy $S_\ell(P)$ \eqref{eq: opt linear cost}, the optimal joint Gaussian strategy $S_\mathsf{G}(P)$ \eqref{opt gaussian cost}, and the noncausal strategy combining linear and DPC-based schemes \eqref{eq: S lin+dpc}. All comparisons are carried out at a fixed $Q=1$ and three different noise levels $N \in\{ 0.15, 0.3, 0.06\}$, as shown in Figures \ref{fig:Q=1,N=0.15}–\ref{fig:Q=1,N=0.06}.


As illustrated in Figure \ref{fig:Q=1,N=0.15}, when $N=0.15$, the ZEC-$k$ schemes achieve zero-estimation-cost reconstruction with only a modest increase in power compared to their single-shot counterparts $P_k^{\min}$. Specifically, while the minimum power requirements for the single-shot strategies are given in \eqref{eq: P* n-point stratege}, the ZEC-2, ZEC-3, and ZEC-4 schemes achieve zero-cost estimation once the power budget exceeds $P_{\mathsf{ZEC}\text{-}2}^*=0.38$, $P_{\mathsf{ZEC}\text{-}3}^*=0.27$, and $P_{\mathsf{ZEC}\text{-}4}^*=0.26$, respectively. Notably, the feedback-enabled ZEC-f scheme achieves zero-cost estimation at a significantly lower threshold of $P_{\mathsf{ZEC}\text{-}f}^*=0.065$ which is even lower than that of the noncausal linear$+$DPC-based scheme $P^*_{\mathsf{lin+dpc}}=0.12$. By contrast, all other single-shot strategies can achieve zero-estimation-cost reconstruction only at the much larger power level $P=Q=1$.

 \begin{figure}[t]
        \centering

\definecolor{airforceblue}{rgb}{0.000, 0.447, 0.741}   
\definecolor{antiquebrass}{rgb}{0.850, 0.325, 0.098}   
\definecolor{alizarin}{rgb}{0.929, 0.694, 0.125}       
\definecolor{amethyst}{rgb}{0.0, 0.55, 0.3}         
\definecolor{leafgreen}{rgb}{0.466, 0.674, 0.188}      
\definecolor{skyblue}{rgb}{0.301, 0.745, 0.933}        
\definecolor{crimsonred}{rgb}{0.635, 0.078, 0.184}     
\definecolor{darkgray}{rgb}{0.333, 0.333, 0.333}       
\definecolor{bronze}{rgb}{0.792, 0.569, 0.212}         
\definecolor{teal}{rgb}{0.494,0.184,0.556}           


\begin{tikzpicture}[scale=1]
\begin{axis}[
    xlabel={$P$},
    ylabel={$\mathsf{MMSE}$},
    legend style={at={(1,1.1)}, anchor=north east, font=\small},
    axis lines=middle,
    axis line style={black, line width=1pt},
    xmin=0, xmax=1.02,
    ymin=0, ymax=0.25,
    axis x line=bottom,
    axis y line=left,
    xticklabel style={font=\scriptsize},
    xtick={0.12,0.24, 0.46, 0.5,  1},
    xticklabels={
        {\scriptsize 0.12},
        {\scriptsize 0.24},
        {\scriptsize \hspace*{-1.0em}0.46},
        {\scriptsize \hspace*{0.5em}0.5},
        {\normalsize 1}
    },
    ytick={0},
    xlabel style={at={(ticklabel* cs:1.01)}, anchor=north west},
    ylabel style={at={(ticklabel* cs:1)}, anchor=south east},
]

\addplot[teal, line width=1.2pt, mark=o, mark size=1.7pt, mark repeat = 8] 
  table [col sep=comma, x index=0, y index=1] {data/dpc_03.csv};
\addlegendentry{$S_{\mathsf{lin+dpc}}(P)$}

\addplot[alizarin, line width=1pt, mark=diamond*, mark size=1.7pt,mark repeat=1] 
    table [col sep=comma, x index=0, y index=1] {data/ZEC4_03.csv};
\addlegendentry{$S_{\mathsf{ZEC}\text{-}3}(P),S_{\mathsf{ZEC}\text{-}4}(P)$}

\addplot[airforceblue, line width=1pt, solid, mark=diamond*, mark size=1.3pt,mark repeat=1]
    table [col sep=comma, x index=0, y index=1] {data/ZEC2_03.csv};
\addlegendentry{$S_{\mathsf{ZEC}\text{-}2}(P)$}

\addplot[alizarin, line width=1pt, solid, mark=o, mark size=1.6pt,mark repeat=14]
    table [col sep=comma, x index=0, y index=1] {data/4P_0300.csv};
\addlegendentry{$S_4(P)$}

\addplot[
    amethyst, 
    line width=1pt, 
    solid, 
    mark=o, 
    mark size=1.6pt, 
    mark repeat=15,
    mark phase=1 
] 
table [col sep=comma, x index=0, y index=1] {data/3P_0300.csv};
\addlegendentry{$S_3(P)$}

\addplot[airforceblue, line width=1pt, solid, mark=o, mark size=1.6pt,mark repeat=50 ]
    table [col sep=comma, x index=0, y index=1] {data/2Popt_03.csv};
\addlegendentry{$S_2(P)$}



\addplot[skyblue, line width=1pt, solid, mark=triangle*, mark size=1.7pt, mark repeat=23 ]
    table [col sep=comma, x index=0, y index=1] {data/SG_03.csv};
\addlegendentry{$S_{\mathsf{G}}(P),S_{\ell}(P)$}


\addplot[only marks, mark=x, mark size=3.5pt, line width=2pt, color=antiquebrass] coordinates {
     (0.12,0)
};
\addlegendentry{$S_{\mathsf{ZEC}\text{-}\mathsf{f}}(P)$}

\addplot[only marks, mark=|, mark size=3pt, line width = 0.7pt, color=black] coordinates {
    (0.242,0)
    (0.46,0) (0.5,0)
};

\end{axis}
\end{tikzpicture}

        \caption{Comparison of cost functions $S_{\mathsf{ZEC}\text{-}k}(P)$, $S_k(P)$, $S_{\mathsf{ZEC}\text{-}\mathsf{f}}(P)$, and $S_{\mathsf{G}}(P)$ for $k=2,3,4$ at $Q=1$ and $N=0.3$. In this regime, the two-point strategy is outperformed by the optimal joint Gaussian (and equivalently, linear) strategy. Nevertheless, the ZEC-2 scheme achieves zero-estimation-cost performance once the power exceeds $P_{\mathsf{ZEC}\text{-}2}^*=0.5$. The ZEC-4 scheme collapses to ZEC-3, which attains zero-cost estimation at $P_{\mathsf{ZEC}\text{-}3}^*=0.46$. The feedback-enabled ZEC-f scheme requires only $P_{\mathsf{ZEC}\text{-}\mathsf{f}}^*=0.12$ to achieve zero-estimation cost.}
        \label{fig:Q=1,N=0.3}
    \end{figure}


As shown in Figure~\ref{fig:Q=1,N=0.3}, when $N=0.3$, the two-point strategy no longer outperforms the optimal linear strategy, which in this case coincides with the joint Gaussian solution. Nevertheless, the proposed ZEC schemes still maintain a distinct advantage: ZEC-2 achieves zero-estimation-cost reconstruction with $P\geq P_{\mathsf{ZEC}\text{-}2}^*=0.50$, while ZEC-4 reduces to ZEC-3 and reaches zero-cost estimation at $P_{\mathsf{ZEC}\text{-}3}^*=0.47$. Most notably, the feedback-enabled ZEC-f scheme requires only $P_{\mathsf{ZEC}\text{-}\mathsf{f}}^*=0.12$, even in this relatively high-noise regime.

 \begin{figure}[h]
        \centering

\definecolor{airforceblue}{rgb}{0.000, 0.447, 0.741}   
\definecolor{antiquebrass}{rgb}{0.850, 0.325, 0.098}   
\definecolor{alizarin}{rgb}{0.929, 0.694, 0.125}       
\definecolor{amethyst}{rgb}{0.0, 0.55, 0.3}        
\definecolor{leafgreen}{rgb}{0.466, 0.674, 0.188}      
\definecolor{skyblue}{rgb}{0.301, 0.745, 0.933}        
\definecolor{crimsonred}{rgb}{0.635, 0.078, 0.184}     
\definecolor{darkgray}{rgb}{0.333, 0.333, 0.333}       
\definecolor{bronze}{rgb}{0.792, 0.569, 0.212}         
\definecolor{teal}{rgb}{0.494,0.184,0.556}           


\begin{tikzpicture}[scale=1]
\begin{axis}[
    xlabel={$P$},
    ylabel={$\mathsf{MMSE}$},
    legend style={at={(1.06,1.2)}, anchor=north east, font=\small},
    axis lines=middle,
    axis line style={black, line width=1pt},
    xmin=0, xmax=1.02,
    ymin=0, ymax=0.06,
    axis x line=bottom,
    axis y line=left,
    xticklabel style={font=\scriptsize},
    xtick={0,0.06, 0.13, 0.19, 0.36, 1},
    xticklabels={
    {\normalsize 0},
    {\scriptsize\hspace*{-0.5em} 0.06},
        {\scriptsize 0.13},
        {\scriptsize\hspace*{1em} 0.19},
        {\scriptsize0.36},
        {\normalsize 1}
    },
    ytick=\empty,
    xlabel style={at={(ticklabel* cs:1.01)}, anchor=north west},
    ylabel style={at={(ticklabel* cs:1)}, anchor=south east},
]

\addplot[teal, line width=1.2pt, mark=o, mark size=1.7pt, mark repeat = 8] 
  table [col sep=comma, x index=0, y index=1] {data/dpc_006.csv};
\addlegendentry{$S_{\mathsf{lin+dpc}}(P)$}

\addplot[alizarin, line width=1pt, mark=diamond*, mark size=2pt,mark repeat=2] 
    table [col sep=comma, x index=0, y index=1] {data/ZEC4_006.csv};
\addlegendentry{$S_{\mathsf{ZEC}\text{-}4}(P)$}

\addplot[amethyst, line width=1pt, mark=diamond*, mark size=1.7pt,mark repeat=2] 
    table [col sep=comma, x index=0, y index=1] {data/ZEC3_006.csv};
\addlegendentry{$S_{\mathsf{ZEC}\text{-}3}(P)$}

\addplot[airforceblue, line width=1pt, solid, mark=diamond*, mark size=1.4pt,mark repeat=2]
    table [col sep=comma, x index=0, y index=1] {data/ZEC2_006.csv};
\addlegendentry{$S_{\mathsf{ZEC}\text{-}2}(P)$}

\addplot[alizarin, line width=1pt, solid, mark=o, mark size=1.6pt,mark repeat=30]
    table [col sep=comma, x index=0, y index=1] {data/4P_006.csv};
\addlegendentry{$S_4(P)$}

\addplot[
    amethyst, 
    line width=1pt, 
    solid, 
    mark=o, 
    mark size=1.6pt, 
    mark repeat=60 
] 
table [col sep=comma, x index=0, y index=1] {data/3P_006.csv};
\addlegendentry{$S_3(P)$}

\addplot[airforceblue, line width=1pt, solid, mark=o, mark size=1.6pt,mark repeat=1000]
    table [col sep=comma, x index=0, y index=1] {data/2Popt_006.csv};
\addlegendentry{$S_2(P)$}


\addplot[skyblue, line width=1pt, mark=triangle*, mark size=1.7pt,mark repeat=5] 
    table [col sep=comma, x index=0, y index=1] {data/Sl_006.csv};
\addlegendentry{$S_{\ell}(P)$}

\addplot[leafgreen, line width=1pt, solid, mark=triangle*, mark size=1.7pt,mark repeat=5]
    table [col sep=comma, x index=0, y index=1] {data/SG_006.csv};
\addlegendentry{$S_{\mathsf{G}}(P)$}


\addplot[only marks, mark=x, mark size=3.5pt, line width=2pt, color=antiquebrass] coordinates {
     (0,0)
};
\addlegendentry{$S_{\mathsf{ZEC}\text{-}\mathsf{f}}(P)$}

\addplot[only marks, mark=|, mark size=3pt, line width = 0.7pt, color=black] coordinates {
    (0.06,0)
    (0.13,0) (0.19,0)(0.36,0)
};

\end{axis}
\end{tikzpicture}

        \caption{Performance comparison at $Q=1$ and $N=0.06$ between ZEC-$k$ schemes ($S_{\mathsf{ZEC}\text{-}k}(P)$), the corresponding $k$-point strategies ($S_k(P)$) for $k=2,3,4$, the best linear strategy ($S_\ell(P)$), and the optimal joint Gaussian strategy ($S_G(P)$). In this case, the ZEC-f scheme achieves the extreme point of zero-power zero-estimation cost}
        \label{fig:Q=1,N=0.06}
    \end{figure}

Figure \ref{fig:Q=1,N=0.06} shows that at the low noise level of $N=0.06$, the ZEC-f scheme simultaneously achieves the extreme point of zero-power zero-estimation cost performance.

\section{Conclusion}\label{sec: conclusion}

In this paper, we used coordination coding to characterize the communication constraints underlying cooperation between two DMs in the causal-encoding, noncausal-decoding formulation of Witsenhausen’s counterexample. The result is a single-letter achievable region with auxiliary RVs that capture the dual role of control — state regulation and implicit communication — and an information constraint specifying the required communication rate for reliable coordination. Building on this, we proposed the ZEC-$k$ scheme, which significantly lowers the power needed for zero estimation cost as the quantization level increases. With channel feedback, the scheme is further enhanced, attaining the extreme point of zero power and zero estimation cost in the low-noise regime.

Taken together, these results provide an insightful answer to the central question motivating this work: how much communication is needed, and what is relevant to be transmitted in Witsenhausen counterexample? Our findings show that when we have sufficient transmissions, effective cooperation between communication and control is established with carefully designed quantization indices together with an independent Gaussian codebook.
These components jointly suffice to regulate the power-controlled state, deterministically define the reconstruction target, and simplify the information constraint. Developed in the canonical benchmark, our framework for Witsenhausen counterexample extends naturally to more general distributed decision-making settings where multiple agents must coordinate actions under asymmetric, decentralized knowledge. This perspective invites deeper integration of coordination-coding and control, and highlights the fundamental role of communication in shaping information flow in decentralized network control.

\appendices
\section{Achievability Proof of Theorem \ref{thm: c-n wits main}}\label{app: ach proof}
Since our achievability proof involves both continuous and discrete RVs, our analysis must incorporate both the Lebesgue measure $\lambda$ (for continuous RVs) and the counting measure $\mu$ (for discrete RVs). The corresponding information-theoretic quantities are defined using the Radon-Nikodym derivative, which generalizes the concept of density with respect to a base measure. This approach enables us to define entropy for mixed discrete-continuous RVs in a way that preserves consistency with both discrete entropy and differential entropy; see \cite{pinsker1964information} and \cite[App. A]{Treust2024power} for further details. Similar to what the authors do in the latter reference, we extend the standard definition of (jointly) weak typicality to random vectors with either discrete or continuous components. We denote by $\mathcal{A}_\varepsilon^{(n)}(\mathcal{P}_{X,Y})$ the set of jointly typical sequences given by
\begin{align}
    \mathcal{A}_\varepsilon^{(n)}(\mathcal{P}_{X,Y})& = \bigg\{ (x^n,y^n)\in\mathbb{R}^{n\times2}:\\
    &\qquad\abs{-\frac{1}{n}\log 
    \prod_{i=1}^n \mathcal{P}_{X,Y}(x_i,y_i)
    -H(X,Y)}<\varepsilon,\nonumber \\
    &\qquad x^n\in\mathcal{A}_\varepsilon^{(n)}(\mathcal{P}_X),y^n\in\mathcal{A}_\varepsilon^{(n)}(\mathcal{P}_Y)\bigg\}.\nonumber
\end{align}
Moreover, the authors in \cite[App. A]{Treust2024power} verify that the joint asymptotic equipartition property (AEP), covering lemma, and packing lemma, see \cite{cover1999elements} \cite{elgamal2011nit}, can be straightforwardly extended to the above definition of typicality.

We consider an arbitrary but fixed $\varepsilon>0$ and assume the sequence $(X_0^n, W_1^n, W_2^n, U_1^n, X_1^n, Y_1^n, U_2^n)$ is generated i.i.d. according to a distribution that decomposes as \eqref{prob result}, with $P = \mathbb E[U_1^2]$, $S=\mathbb E[(X_1 - U_2)^2]$.
Let $\psi^{(n)}: \mathcal{X}_0^n\times \mathcal{W}_1^n\times \mathcal{W}_2^n\times \mathcal{U}_1^n\times\mathcal{X}_1^n\times\mathcal{Y}_1^n\times \mathcal{U}_2^n\rightarrow\{0,1\}$ denote an indicator function for sequences of length $n$ with
    \begin{align}
        &\psi^{(n)}(x_0^n,w_1^n,w_2^n,u_1^n,x_1^n,y_1^n,u_2^n)\nonumber \\
        &= 
        \left\{
          \begin{aligned}
              &1 &\text{if }|c_S(x_1^n,u_2^n)-S|\geq \frac{1}{12}\varepsilon\\
              & &\text{or }(w_1^n,w_2^n,y_1^n)\notin\mathcal{A}_\varepsilon^n(W_1,W_2,Y_1),\\
              &0 \quad &\text{otherwise}.
          \end{aligned}   
        \right.\label{eq: ach proof, indicator function}
    \end{align}
Using the weak law of large numbers (LLN) and the union bound we have
\begin{align*}
    \mathbb E[\psi^{(n)}(X_0^n,W_1^n,W_2^n,U_1^n,X_1^n,Y_1^n,U_2^n)]\leq \delta_n\rightarrow 0,
\end{align*}
as $n\rightarrow\infty$. And similarly, we define the set \begin{align*}
    \mathcal{S}_\varepsilon^{(n)} = \{(x_0^n,w_1^n,w_2^n)\mid\eta^{(n)}(x_0^n,w_1^n,w_2^n)\leq\sqrt{\delta_n}\},
\end{align*}
where $\eta^{(n)}(x_0^n,w_1^n,w_2^n) = \mathbb E[\psi^{(n)}(x_0^n,w_1^n,w_2^n,U_1^n,X_1^n,Y_1^n,\\U_2^n)\mid X_0^n = x_0^n,W_1^n = w_1^n,W_2^n = w_2^n] $. We then take
\begin{align}
    \mathcal{B}_\varepsilon^{(n)} = \mathcal{A}_\varepsilon^{(n)}\cap\mathcal{S}_\varepsilon^{(n)}.\label{eq: typical set B}
\end{align}
Then, we can easily prove the subsequent lemma:
    \begin{lemma}\label{lemma: probable typical set}
        Let the sequence $(X_0^n,W_1^n,W_2^n)$ i.i.d. $\sim \mathcal{P}_{X_0,W_1,W_2}$, then 
        \begin{align*}
        \mathbb{P}\brackets{(X_0^n,W_1^n,W_2^n)\in\mathcal{B}_\varepsilon^{(n)}}\xrightarrow[]{n\rightarrow\infty} 1.
        \end{align*}
    \end{lemma}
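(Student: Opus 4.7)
The plan is to show both defining events for $\mathcal{B}_\varepsilon^{(n)} = \mathcal{A}_\varepsilon^{(n)} \cap \mathcal{S}_\varepsilon^{(n)}$ occur with probability tending to one, then conclude by the union bound.

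First I would handle the typicality event. Since $(X_0^n, W_1^n, W_2^n)$ is drawn i.i.d.\ according to the marginal of the joint distribution in \eqref{prob result}, the standard weak AEP for the modified weak typicality used in this paper (as in \cite{Treust2024power,vu2020hierachical}) gives $\mathbb{P}\bigl((X_0^n, W_1^n, W_2^n) \in \mathcal{A}_\varepsilon^{(n)}(W_1, W_2, Y_1)\bigr) \to 1$ as $n\to\infty$ — this is the routine part.

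The main step is the bound on $\mathcal{S}_\varepsilon^{(n)}$, obtained via a conditional-expectation / Markov inequality argument. By the tower property,
\begin{align*}
\mathbb{E}\bigl[\eta^{(n)}(X_0^n, W_1^n, W_2^n)\bigr]
&= \mathbb{E}\bigl[\psi^{(n)}(X_0^n, W_1^n, W_2^n, U_1^n, X_1^n, Y_1^n, U_2^n)\bigr]\\
&\leq \delta_n.
\end{align*}
Since $\eta^{(n)}\geq 0$, Markov's inequality yields
\begin{equation*}
\mathbb{P}\bigl(\eta^{(n)}(X_0^n, W_1^n, W_2^n) > \sqrt{\delta_n}\bigr) \;\leq\; \frac{\mathbb{E}[\eta^{(n)}]}{\sqrt{\delta_n}} \;\leq\; \sqrt{\delta_n},
\end{equation*}
so $\mathbb{P}\bigl((X_0^n, W_1^n, W_2^n)\in\mathcal{S}_\varepsilon^{(n)}\bigr) \geq 1 - \sqrt{\delta_n} \to 1$. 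A union bound over the complements of $\mathcal{A}_\varepsilon^{(n)}$ and $\mathcal{S}_\varepsilon^{(n)}$ concludes the proof.

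The only subtlety I anticipate is making sure $\delta_n \to 0$ indeed holds for the combined event that defines $\psi^{(n)}$ — namely both the distortion deviation $|c_S(X_1^n, U_2^n) - S| \geq \varepsilon/12$ and the failure of joint typicality of $(W_1^n, W_2^n, Y_1^n)$. Both are controlled: the former by the weak LLN applied to the i.i.d.\ sequence $(X_{1,t} - U_{2,t})^2$ with mean $S$, and the latter by the weak AEP, so the union bound $\delta_n \to 0$ follows. Once $\delta_n \to 0$, the $\sqrt{\delta_n}$ trick is the standard Markov-based device that converts an expectation bound on $\psi^{(n)}$ into a high-probability statement on the conditional expectation $\eta^{(n)}$, which is precisely what is needed later for the Markov-Lemma-type step in the achievability proof.
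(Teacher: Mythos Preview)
Your argument is correct and matches the paper's: the paper's entire proof is the single Markov-inequality line $\mathbb{P}\bigl((X_0^n,W_1^n,W_2^n)\notin\mathcal{S}_\varepsilon^{(n)}\bigr)\le \delta_n/\sqrt{\delta_n}=\sqrt{\delta_n}$, which you have spelled out in full (tower property for $\mathbb{E}[\eta^{(n)}]\le\delta_n$, then Markov, then union bound with the AEP for $\mathcal{A}_\varepsilon^{(n)}$).

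One small slip: in your first displayed probability you wrote $\mathcal{A}_\varepsilon^{(n)}(W_1,W_2,Y_1)$, but the set $\mathcal{A}_\varepsilon^{(n)}$ appearing in $\mathcal{B}_\varepsilon^{(n)}=\mathcal{A}_\varepsilon^{(n)}\cap\mathcal{S}_\varepsilon^{(n)}$ is the weakly typical set for $(X_0,W_1,W_2)$, not for $(W_1,W_2,Y_1)$; the latter triple is what enters the definition of $\psi^{(n)}$, which is a different object. The argument is unaffected, since the AEP applies equally well to $\mathcal{A}_\varepsilon^{(n)}(X_0,W_1,W_2)$.
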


\begin{proof}[Proof of Lemma \ref{lemma: probable typical set}]
From the Markov inequality we obtain
\begin{align*} \mathbb{P}\brackets{(X_0^n,W_1^n,W_2^n)\notin\mathcal{S}_\varepsilon^{(n)}}\leq \frac{\delta_n}{\sqrt{\delta_n}} = \sqrt{\delta_n}.
\end{align*}
    \end{proof}
    
Lemma \ref{lemma: probable typical set} elaborates that the certainty of an outcome being jointly weakly typical \textit{and} satisfying the distortion cost constraint is guaranteed, as per the AEP, if it is generated i.i.d.

Moreover, it is also necessary to quantize the output of the non-causal decoder, as done in \cite{Treust2024power}, to ensure that the second cost constraint remains bounded in case an error happens. Given a joint distribution with $\mathbb E[(X_1 - U_2)^2] = S$, for any $\hat{\delta}>0$ there exists a quantization $q_{U_2}: \mathcal{U}_2\rightarrow\{ \hat{u}_{2,k}\}_{k=1}^{N_{U_2}}$, as explained in \cite{wyner1978rate}, such that
\begin{equation}
    \hat{S} = \mathbb E [(X_1 - q_{U_2}(U_2))^2]\leq (1+\hat{\delta}) S,\nonumber
\end{equation}
in particular $\hat{\delta}S<\frac{1}{4}\varepsilon$.

Now, we provide a coding scheme by considering a probability distribution $\mathcal{P}_{X_0, W_1, W_2, U_1, X_1, Y_1, U_2}$ that decomposes as \eqref{prob result} such that there exists a $\delta>0$, and a rate $R>0$, satisfying
\begin{align*}
    I(W_2; X_0|W_1) + \delta\leq R\leq  I(W_1; Y_1) + I(W_2; Y_1|W_1) - \delta.
\end{align*}
We consider a block-Markov code $c\in\mathcal{C}(Bn)$ defined over $B\in\mathbb N^\star$ blocks of length $n$, similar to the scheme in \cite{choudhuri2013causal}. Figure \ref{fig:Markov block codes} is an illustration of this coding scheme.

 \begin{figure}[t]
        \centering

\begin{tikzpicture}[x=0.5cm,y=0.5cm]

\foreach \y in {-8,-6,-4,-2,0,2,4,6,8} {
  \draw[line width=2pt] (0,\y) -- (12,\y);
}

\foreach \x in {2,4,6,8,10} {
  \draw[dashed,line width=0.5pt] (\x,-9) -- (\x,9);
}

\node[anchor=east] at (-0.5,8) {$X_0^n$};
\node[anchor=east] at (-0.5,6) {$W_1^n$};
\node[anchor=east] at (-0.5,4) {$W_2^n$};
\node[anchor=east] at (-0.5,2) {$U_1^n$};
\node[anchor=east] at (-0.5,0) {$X_1^n$}; 
\node[anchor=east] at (-0.5,-2) {$Y_1^n$};
\node[anchor=east] at (-0.5,-4) {$\hat W_1^n$};
\node[anchor=east] at (-0.5,-6) {$\hat W_2^n$};
\node[anchor=east] at (-0.5,-8) {$U_2^n$};

\node at (3,9) {$b-2$};
\node at (5,9) {$b-1$};
\node at (7,9) {$b$};
\node at (9,9) {$b+1$};

\newcommand{\hatchrect}[5]{%
  \path[pattern=north east lines,pattern color=#5,draw=#5]
    (#1,#2) rectangle (#3,#4);
}
\draw[->,line width=1.1pt,red] (5,7.7) -- (5,4.3); 
\draw[<-,line width=1.1pt,red] (6.5,5.7) -- (5.3,4.3); 
\draw[->,line width=1.1pt,red] (7,5.7) -- (7,2.3);
\draw[->,line width=1.1pt,red] (7,1.7) -- (7,0.3); 
\draw[->,line width=1.1pt,red] (7,-0.3) -- (7,-1.7);
\draw[->,line width=1.1pt,red] (7,-2.3) -- (7,-3.7);
\draw[->,line width=1.1pt,red] (7,-4.3) -- (5.6,-5.7);
\draw[->,line width=1.1pt,red] (5,-6.3) -- (5,-7.7);


\newcommand{\greenslashrect}[5]{%
  \path[fill=white,draw=green!70!black,line width=0.8pt] (#1,#2) rectangle (#3,#4); 
  \path[pattern=north east lines,pattern color=#5,draw=none]
       (#1,#2) rectangle (#3,#4); 
}

\newcommand{\redslashrect}[5]{%
  \path[fill=white,draw=red,line width=0.8pt] (#1,#2) rectangle (#3,#4); 
  \path[pattern=north east lines,pattern color=#5,draw=none]
       (#1,#2) rectangle (#3,#4); 
}

\newcommand{\blueslashrect}[5]{%
  \path[fill=white,draw=blue,line width=0.8pt] (#1,#2) rectangle (#3,#4); 
  \path[pattern=north east lines,pattern color=#5,draw=none]
       (#1,#2) rectangle (#3,#4); 
}

\greenslashrect{2}{7.8}{4}{8.2}{green!70!black}  
\greenslashrect{4}{5.8}{6}{6.2}{green!70!black}  
\greenslashrect{2}{3.8}{4}{4.2}{green!70!black}  
\greenslashrect{4}{1.8}{6}{2.2}{green!70!black}  
\greenslashrect{4}{-0.2}{6}{0.2}{green!70!black} 
\greenslashrect{4}{-2.2}{6}{-1.8}{green!70!black}
\greenslashrect{4}{-4.2}{6}{-3.8}{green!70!black}
\greenslashrect{2}{-6.2}{4}{-5.8}{green!70!black}
\greenslashrect{2}{-8.2}{4}{-7.8}{green!70!black}


\redslashrect{4}{7.8}{6}{8.2}{red}            
\redslashrect{6}{5.8}{8}{6.2}{red}            
\redslashrect{4}{3.8}{6}{4.2}{red}            
\redslashrect{6}{1.8}{8}{2.2}{red}            
\redslashrect{6}{-0.2}{8}{0.2}{red}           
\redslashrect{6}{-2.2}{8}{-1.8}{red}          
\redslashrect{6}{-4.2}{8}{-3.8}{red}          
\redslashrect{4}{-6.2}{6}{-5.8}{red}          
\redslashrect{4}{-8.2}{6}{-7.8}{red}          


\blueslashrect{6}{7.8}{8}{8.2}{blue}           
\blueslashrect{8}{5.8}{10}{6.2}{blue}          
\blueslashrect{6}{3.8}{8}{4.2}{blue}           
\blueslashrect{8}{1.8}{10}{2.2}{blue}          
\blueslashrect{8}{-0.2}{10}{0.2}{blue}         
\blueslashrect{8}{-2.2}{10}{-1.8}{blue}        
\blueslashrect{8}{-4.2}{10}{-3.8}{blue}         
\blueslashrect{6}{-6.2}{8}{-5.8}{blue}         
\blueslashrect{6}{-8.2}{8}{-7.8}{blue}         


\end{tikzpicture}
        \caption{Block-Markov codes with causal encoder.}
        \label{fig:Markov block codes}
    \end{figure}

\textit{Random codebook generation}: We generate $|\mathcal{M}| = 2^{nR}$ sequences $W_1^n(m)$ i.i.d. $\sim \mathcal{P}_{W_1}$ with index $m\in\mathcal{M}$. For each index $m\in\mathcal{M}$, we generate the same number $|\mathcal{M}| = 2^{nR}$ sequences $W_2^n(m, \hat{m})$ with index $\hat{m}\in\mathcal{M}$ i.i.d. $\sim \mathcal{P}_{W_2 | W_1}$ depending on sequence $W_1^n(m)$.

\textit{Encoding function: }Due to the causal feature, the encoder performs \textit{forward encoding}, where the encoder outputs the information associated with the \textit{past} block. Let $m_b$ denote the message generated during block $b\in[1:B]$. During the first block, without loss of generality, the encoder takes $m_1=1$ and returns $W_1^n(m_1)$. At the beginning of block $b\in[2:B]$, the encoder has observed the full sequence of source symbols $X_{0,b-1}^n\in\mathcal{X}^n$ of the previous block. It recalls the index $m_{b-1}\in\mathcal{M}$ of the sequence $W_1^n(m_{b-1})$ used for block $b-1$. It finds an index $m_b$ such that sequences
\begin{align*}
    (X^n_{0,b-1}, W_1^n(m_{b-1}),  W_2^n(m_{b-1}, m_b))\in\mathcal{B}_\varepsilon^{(n)}(\mathcal{P}_{X_0,W_1,W_2})
\end{align*}
are jointly typical. We denote by $W_{2,b-1}^n = W_2^n(m_{b-1}, m_b)$ corresponding to the past block. Then, it returns $W_{1,b}^n = W_1^n(m_b)$ corresponding to the current block $b$, and for each time $t\in[1:n]$, it sends the symbol $U_{1,t, b}$ i.i.d. $\sim \mathcal{P}_{U_1|X_0,W_1}$ depending on $W_{1,t}(m_b)$ and $X_{0,t,b}$ causally observed in the current block $b\in[1:B]$.


\textit{Decoding function:} The decoder first returns $\Tilde{m}_1 = 1$. During block $b\in[2:B]$, the decoder recalls the past sequence $Y_{1,b-1}^n$ and the index $\Tilde{m}_{b-1}$ that corresponds to the sequence $\Tilde{W}_{1,b-1}^n = W_1^n(\Tilde{m}_{b-1})$. It observes the channel output $Y_{1,b}^n$ and finds the unique index $\Tilde{m}_b$ such that 
\begin{align*}
  &(Y_{1,b}^n, W_1^n(\Tilde{m}_b))\in\mathcal{A}_\varepsilon^{(n)}(\mathcal{P}_{Y_1,W_1}),\\
  &(Y_{1,b-1}^n, W_1^n(\Tilde{m}_{b-1}), W_2^n(\Tilde{m}_{b-1}, \Tilde{m}_b))\in\mathcal{A}_\varepsilon^{(n)}(\mathcal{P}_{Y_1, W_1, W_2}).
\end{align*}
We denote by $\Tilde{W}_{1,b}^n = W_1^n(\Tilde{m}_{b})$ and  $\Tilde{W}_{2,b-1}^n = W_2^n(\Tilde{m}_{b-1}, \Tilde{m}_{b})$ as our choice.  

\textit{Forward transmission of the decoder:} Upon receiving all sequences for all first $(B-1)$ blocks, the decoder non-causally generates $U_{2,b}^n\sim \mathcal{P}_{U_2|Y_1,W_1,W_2}^{\otimes n}$ depending on sequences $(Y_{1,b}^n, \Tilde{W}_{1,b}^n, \Tilde{W}_{2,b}^n)$ for $b\in[1:B-1]$. Finally, the decoder outputs the quantized sequence $\hat{U}_{2,b}^n$ as the reconstruction for the interim state $X_{1,b}^n$.

\textit{Termination block $B$:} As for the last block, the decoder simply outputs an all zero sequence, i.e.,  $\hat{U}_{2,B}^n = \mathbf{0}$. Usually, sequences are \textit{not} jointly typical in the last block. Therefore we omit its error analysis in the subsequent part.

\textit{Error analysis per block: }We first focus on the encoding error $\mathcal{E}^e$. For $b\in[2:B]$, let $\mathcal{E}^e_b(m_{b-1})$ denote the event of a failed encoding process during block $b$ given the knowledge of $m_{b-1}$, i.e.,  $\mathcal{E}^e_b(m_{b-1}) =  \{ \forall m_b\in\mathcal{M}, (X^n_{0,b-1}, W_1^n(m_{b-1}),  W_2^n(m_{b-1}, m_b))\notin {\mathcal{B}}_\varepsilon^{(n)}(\mathcal{P}_{X_0,W_1,W_2})\}$. Due to the independence between the codewords and Markov blocks, the probability of an encoding error in block $b$ given no encoding errors in the previous blocks is
\begin{equation}
    \mathbb{P}( \mathcal{E}_b^e(M_{b-1})\mid
                \cap_{\beta  = 2}^{b-1}\Bar{\mathcal{E}}_{\beta}^e(M_{\beta - 1}))= \mathbb{P}\brackets{ \mathcal{E}_b^e(M_{b-1})}.\nonumber
\end{equation}
If $R \geq I(W_2; X_0|W_1) + \delta$, following the Covering Lemma \cite{elgamal2011nit} and Lemma \ref{lemma: probable typical set}, we have  $\mathbb{P}\brackets{ \mathcal{E}_b^{e}(M_{b-1})}\rightarrow 0$ as $n\rightarrow\infty$, $\forall b\in[2:B]$. Thus,  state sequences for the first $B-1$ blocks are successfully encoded with probability $\rightarrow 1$ as $n\rightarrow \infty$. 

Next, we analyze the decoding error $\mathcal{E}^d$. For $b\in[2:B]$, let $\mathcal{E}_b^d(m_{b-1})$ denote the event of a failed decoding process in block $b$ given the past estimated index $m_{b-1}$. Furthermore, let $\mathcal{E}_{b-1}^{d,1}(m_{b-1})$ denote the event that the sequence $Y_{1,b-1}^n$ given $m_{b-1}$ is not jointly typical, i.e., $\mathcal{E}_{b-1}^{d,1}(m_{b-1}) = \{(Y_{1,b-1}^n, W_{1,b-1}^n(m_{b-1}), W_2^n(m_{b-1}, M_b)) \notin \mathcal{A}_{\varepsilon}^{(n)}(\mathcal{P}_{Y_1, W_1, W_2})\}$ and let $\mathcal{E}_b^{d,2}$ denote that the current sequence $Y_{1,b}^n$ is not jointly typical, i.e., 
 $\mathcal{E}_b^{d,2} = \{(Y_{1,b}^n, W_1^n(M_b))\notin\mathcal{A}_\varepsilon^{(n)}(\mathcal{P}_{Y_1,W_1})\}$. Note that, if a sequence with fewer terms $(Y_{1,b}^n,W_1^n(m_b))$ is atypical, it implies that a sequence with more terms $(Y_{1,b}^n,W_1^n(m_b), W_2^n(m_b,m_{b+1}))$ is also atypical. Therefore, we have  $\mathcal{E}_b^{d,2}\subset\mathcal{E}_{b}^{d,1}(M_b)$. Then, the decoding error probability given no past decoding error or encoding error $\mathbb{P}(\mathcal{E}_b^d(m_{b-1})\mid\cap_{\beta=2}^{b-1}\Bar{\mathcal{E}}_{\beta}^d(m_{\beta-1})\cap \Bar{\mathcal{E}}^e)$ can be upperbounded by 
\begin{align}
                &\mathbb{P}(\mathcal{E}_b^d(m_{b-1})\mid \cap_{\beta=2}^{b-1}\Bar{\mathcal{E}}_{\beta}^d(m_{\beta-1})\cap \Bar{\mathcal{E}}^e\cap \Bar{\mathcal{E}}_{b-1}^{d,1}(m_{b-1})\cap\Bar{\mathcal{E}}_b^{d,2}) \nonumber \\
                &+ \mathbb{P}(\mathcal{E}_{b-1}^{d,1}(m_{b-1})\cup\mathcal{E}_b^{d,2}\mid\cap_{\beta=2}^{b-1}\Bar{\mathcal{E}}_{\beta}^d(m_{\beta-1})\cap \Bar{\mathcal{E}}^e)\label{decoding error union bound}
            \end{align}
using the union bound. The first term of \eqref{decoding error union bound} can be upperbounded by
\begin{align*}
                &\mathbb{P}(\mathcal{E}_b^d(m_{b-1})\mid \cap_{\beta=2}^{b-1}\Bar{\mathcal{E}}_{\beta}^d(m_{\beta-1})\cap \Bar{\mathcal{E}}^e\cap \Bar{\mathcal{E}}_{b-1}^{d,1}(m_{b-1})\cap\Bar{\mathcal{E}}_b^{d,2})\\
                & = \mathbb{P}\left( \exists m^\prime\neq M_{b} \text{, s.t. }\{ W_1^n(m^\prime)\in\mathcal{A}_\varepsilon^{(n)}(W_1|Y_{1,b}^n)\}\cap\right. \\
                &\quad\quad\left.\{ W_2^n(M_{b-1}, m^\prime)\in\mathcal{A}_\varepsilon^{(n)}(W_2^n|Y_{1,b-1}^n, W_{1,b-1}^n)\}\right)\nonumber\\
                & \stackrel{\text{(a)}}{\leq} 2^{-n\varepsilon},
            \end{align*}
where (a) is due to the joint packing lemma in order to satisfy both conditions at the same time given that $R\leq I(W_1;Y_1) + I(W_2;Y_1|W_1) - 7\varepsilon$. Moreover, as for the second term in \eqref{decoding error union bound}, we have
\begin{align*}
                &\mathbb{P}(\mathcal{E}_{b-1}^{d,1}(m_{b-1})\cup\mathcal{E}_b^{d,2}| \cap_{\beta=2}^{b-1}\Bar{\mathcal{E}}_{\beta}^d(m_{\beta-1})\cap \Bar{\mathcal{E}}^e)\\
                &\stackrel{\text{(b)}}{\leq} \mathbb{P}((Y_{1,b-1}^n, W_{1,b-1}^n,W_{2,b-1}^n)\notin \mathcal{A}_\varepsilon^{(n)} \mid \\
                &\quad\quad (X_{0,b-1}^n, W_{1,b-1}^n, W_{2,b-1}^n)\in\mathcal{B}_\varepsilon^{(n)} )
                \\& \quad+\mathbb P((Y_{1,b}^n, W_{1,b}^n,W_{2,b}^n)\notin\mathcal{A}_\varepsilon^{(n)}|(X_{0,b}^n, W_{1,b}^n,W_{2,b}^n)\in\mathcal{B}_\varepsilon^{(n)})\nonumber\\
                &\stackrel{\text{(c)}}{\leq} 2\cdot\max_{(x_0^n,w_1^n,w_2^n)\in\mathcal{B}_\varepsilon^{(n)}}\eta^{(n)}(x_0^n,w_1^n,w_2^n)\\
                &\leq 2\sqrt{\delta_n}\xrightarrow[]{n\rightarrow\infty} 0,
            \end{align*}
where (b) comes from the independence of each Markov block and the fact that $\{(Y_{1,b}^n, W_{1,b}^n)\notin\mathcal{A}_\varepsilon^{(n)}\}\subset \{(Y_{1,b}^n,W_{1,b}^n,W_{2,b}^n)\notin\mathcal{A}_\varepsilon^{(n)}\} $, and (c) comes from the definition of $\mathcal{B}_\varepsilon^{(n)}$.

Therefore, following the above arguments, the encoding error and the decoding error are both asymptotically zero.

\textit{Witsenhausen costs analysis:} We first analyze the power cost. To implement the power constraint, we impose a mean-square transmission power constraint as done in \cite{choudhuri2013causal, Grover2010Witsenhausen}:
\begin{equation}
    \sum_{t=1}^n \mathbb E\sbrackets{U_{1,t}^2}\leq nP,
    \label{eq:trans power constraint}
\end{equation}
where the expectation is taken over random sequences $X_0^n$ and $W_1^n$. Therefore, the generated sequence $U_1^n$ satisfies the condition of uniform integrability. Thus, for $b\in[1:B]$, since at each time $U_{1,t,b}$ is generated i.i.d. with finite second moment $\mathbb E\sbrackets {U_1^2}=P$, due to the LLN, the non-negative averaged power sequence $c_P(U_{1,b}^n) = \frac{1}{n}\sum_{t=1}^n(U_{1,t,b}^2)$ converges in probability to $P$. Moreover, since sequence $U_{1,b}^n$ is also uniformly integrable, it holds (stronger) that 
\begin{equation}
    \mathbb E\sbrackets{|c_P(U_{1,b}^n) - P|}<\varepsilon,\quad b\in[1:B]\label{eq: power block b}
\end{equation}
for large $n$, more details in \cite{Bogachev2007measure}. Therefore, for all $B$ blocks, we have 
\begin{align}
    \mathbb E\sbrackets{|c_P(U_1^{Bn}) - P|}\leq  \frac{1}{B}\sum_{b=1}^B\mathbb E\sbrackets{|c_P(U_{1,b}^n) - P|}\leq \varepsilon,\label{eq: power cost B blocks}
\end{align}
for $n$ sufficiently large.

Next, we examine the estimation cost. Define the indicator function $\rho_b = 0$ to denote the absence of coding error in block $b\in[1:B-1]$. Now, take $\phi_b = (1-\psi_b^{(n)})(1-\rho_b) $ to indicate the occurrence of desired sequences that satisfy both the estimation cost and the typicality constraints AND no coding error event in block $b$. From the LLN and the error analysis, we have $\mathbb E\sbrackets{\phi_b}\rightarrow 1$ as $n\rightarrow\infty$ for $b\in[1:B-1]$. In particular, if $\phi_b = 1$, we have $\mathbb E\sbrackets{|c_S(X_{1,b}^n , \hat{U}_{2,b}^n) - S|\mid\phi_b = 1}<\frac{1}{12}\varepsilon$. Therefore, from the law of total expectation, we have
\begin{align*}
    &\mathbb E\sbrackets{|c_S(X_{1,b}^n , \hat{U}_{2,b}^n) - \hat{S}|}\\
    &= \mathbb P\brackets{\phi_b=1}\cdot\mathbb E\sbrackets{|c_S(X_{1,b}^n , \hat{U}_{2,b}^n) - \hat{S}|\mid \phi_b = 1} \\
    &\quad + \mathbb P\brackets{\phi_b=0}\cdot\mathbb E\sbrackets{|c_S(X_{1,b}^n , \hat{U}_{2,b}^n) - \hat{S}|\mid\phi_b = 0}\nonumber\\
     &\leq \frac{1}{12}\varepsilon +  \mathbb P\brackets{\phi_b = 0}\cdot\hat{S} \\
     &\quad+ \mathbb P\brackets{\phi_b=0}\cdot\mathbb E\sbrackets{c_S(X_{1,b}^n , \hat{U}_{2,b}^n)\mid\phi_b = 0}.\nonumber
\end{align*}
For $n$ sufficiently large, the second term $\mathbb P\brackets{\phi_b = 0}\cdot\hat{S}\leq \frac{1}{12}\varepsilon$, since $\hat{S}<\infty$. Next, we evaluate the last term above. In particular, we need $c_S(X_{1,b}^n , \hat{U}_{2,b}^n)$ to be stochastically bounded when an unwanted error event happens. Since 
\begin{align}
    &\mathbb P\brackets{\phi_b=0}\cdot\mathbb E \sbrackets{c_S(X_{1,b}^n , \hat{U}_{2,b}^n)\mid\phi_b = 0}\label{eq: estimation cost error}\\ 
    &= \mathbb P\brackets{\phi_b=0}\cdot\mathbb E\sbrackets{\frac{1}{n}\sum_{t=1}^n |X_{0,t,b} + U_{1,t,b}- \hat{U}_{2,t,b}|^2\mid\phi_b = 0},\nonumber
\end{align}
by applying the inequality $(a+b)^2\leq 2a^2+2b^2$, $\forall  a,b\in\mathbb R$, we have
\begin{align}
    &\mathbb P(\phi_b=0)\cdot\mathbb E \sbrackets{c_S(X_{1,b}^n , \hat{U}_{2,b}^n)\mid\phi_b = 0}\nonumber\\
    &\leq 2\mathbb P(\phi_b=0)\cdot\mathbb E \sbrackets{c_P(U_{1,b}^n)\mid\phi_b = 0} \nonumber\\
    &\quad + 2\mathbb P(\phi_b=0)\cdot\mathbb E\sbrackets{c_S(X_{0,b}^n, \hat{U}_{2,b}^n)\mid\phi_b = 0}\nonumber\\
    & \leq 2\mathbb P(\phi_b=0)\cdot \mathbb E \sbrackets{|c_P(U_{1,b}^n) - P| + P\mid\phi_b = 0}\nonumber \\
    &\quad + 2\mathbb P(\phi_b=0)\cdot\mathbb E\sbrackets{c_S(X_{0,b}^n, \hat{U}_{2,b}^n)\mid\phi_b = 0}\nonumber\\
    &= 2P\mathbb P(\phi_b=0) +  2\mathbb P(\phi_b=0)\cdot\mathbb E \sbrackets{|c_P(U_{1,b}^n) - P| \mid\phi_b = 0}\nonumber\\
    &\quad+ 2\mathbb P(\phi_b=0)\cdot \mathbb E\sbrackets{c_S(X_{0,b}^n, \hat{U}_{2,b}^n)\mid\phi_b = 0}.\label{eq: need to use wyner's trick}
    \end{align}
Here, the first term $2P\mathbb P(\phi_b=0)\leq \frac{1}{36}\varepsilon$ since $P< \infty$. And the second term $2\mathbb P\brackets{\phi_b=0}\cdot\mathbb E \sbrackets{|c_P(U_{1,b}^n) - P| \mid\phi_b = 0}\leq 2\mathbb E\sbrackets{|c_P(U_{1,b}^n) - P|}\leq \frac{1}{36}\varepsilon$ due to the nonnegativity of $|c_P(U_{1,b}^n) - P|$ and the power constraint \eqref{eq: power block b}.

Next, we analyse the last term in \eqref{eq: need to use wyner's trick} with Wyner's trick \cite{wyner1978rate} exploiting the discretization of $U_2$ as follows
    \begin{align*}
        &\mathbb E\sbrackets{c_S(X_{0,b}^n, \hat{U}_{2,b}^n)\mid\phi_b = 0} \\
        &\leq \frac{1}{n}\sum_{t=1}^n\mathbb E\sbrackets{D(X_{0,t,b})\mid\phi_b = 0},
    \end{align*}
with $D(X_{0,t,b}) = \max_{\hat{u}_{2,k}}c_S(X_{0,t,b}, \hat{u}_{2,k})$. The random variables $\{D(X_{0,t,b})\}_{t=1}^n$ are i.i.d. and integrable since $c_S(\cdot, \cdot)$ is a quadratic measure and $X_{0,t,b}$ is i.i.d. Gaussian distributed. Next, let $\chi_{\{D(X_{0,t,b})>d\}}$ denote an indicator function indicating if $D(X_{0,t,b})>d$ for some $0<d<\infty$. Then we have 
\begin{align}
    &\mathbb P\brackets{\phi_b=0}\cdot\mathbb E\sbrackets{c_S(X_{0,b}^n, \hat{U}_{2,b}^n)\mid\phi_b = 0}\label{eq: upper bound wyner} \\
    &\leq d\cdot\mathbb P\brackets{\phi_b=0} +\frac{1}{n}\sum_{t=1}^n \mathbb E\sbrackets{D(X_{0,t,b})\chi_{\{D(X_{0,t,b})>d\}}}.\nonumber
\end{align}
Given that $X_{0,t,b}$ is generated i.i.d, and $D(X_{0,t,b})$ is integrable, for any $\varepsilon_d>0$, there must exist a $d_0$ such that\\ $\mathbb E\sbrackets{D(X_{0,t,b})\chi_{\{D(X_{0,t,b})>d\}}}<\varepsilon_d$ for all $d>d_0$, as per the Monotone Convergence Theorem. Thus, for a sufficiently small $\varepsilon_d$ and a sufficiently large $n$, term $\eqref{eq: upper bound wyner}$ can be upper bounded by $\frac{1}{72}\varepsilon$, so that \eqref{eq: estimation cost error} $\leq\frac{1}{12}\varepsilon$.

Thus, compared to the original cost constraint $S$, for block $b\in[1:B-1]$ we have
    \begin{align}
        &\mathbb E\sbrackets{|c_S(X_{1,b}^n, \hat{U}_{2,b}^n) - S|}\nonumber\\
        &\leq |\hat{S} - S| + \mathbb E\sbrackets{|c_S(X_{1,b}^n , \hat{U}_{2,b}^n) - \hat{S}|}\nonumber\\
        &\leq \frac{1}{4}\varepsilon + \mathbb E\sbrackets{|c_S(X_{1,b}^n , \hat{U}_{2,b}^n) - \hat{S}|}\leq \frac{1}{2}\varepsilon.\label{eq: compared to original S}
    \end{align}

Lastly, we estimate the cost of the last block where the output $\hat{U}_{2,B}^n$ is set to be an all-zero sequence. In this case
    \begin{align*}
        &\mathbb E\sbrackets{|c_S(X_{1,B}^n, \hat{U}_{2,B}^n) - \hat{S}|}\\
        &\leq  \mathbb E\sbrackets{|2c_P(X_{0,B}^n)+2c_P(U_{1,B}^n) - \hat{S}|}\\
        &\leq 2\cdot\mathbb E\sbrackets{c_P(X_{0,B}^n)} + 2\cdot\mathbb E\sbrackets{c_P(U_{1,B}^n)} + \hat{S}\\
        &\stackrel{\text{(d)}}{\leq}  2\cdot Q+ 2\cdot P + \hat{S}<\infty,
    \end{align*}
where (d) comes from that $X_{0,B}^n$ is generated i.i.d. $\sim\mathcal{N}(0, Q\mathbb I)$. Therefore, similar to \eqref{eq: compared to original S}, we have
\begin{equation}
    \mathbb E\sbrackets{|c_S(X_{1,B}^n, \hat{U}_{2,B}^n) - S|}\leq \frac{1}{4}\varepsilon + 2\cdot Q+ 2\cdot P + \hat{S} \triangleq \Tilde{S}.\nonumber
\end{equation}

For the average estimation cost, the impact of the last pathological block can be effectively diminished, since
    \begin{align}
        &\mathbb E\sbrackets{|c_S(X_1^{Bn}, \hat{U}_{2}^{Bn}) - S|}\nonumber\\
        &\leq \frac{1}{B}\sum_{b=1}^{B}\mathbb E\sbrackets{|c_S(X_{1,b}^n, \hat{U}_{2,b}^n) - S|}\nonumber\\
        & \leq \frac{1}{B}\Tilde{S} + \frac{B-1}{B}\cdot \frac{1}{2}\varepsilon\leq\varepsilon,
        \label{eq: est cost B blocks}
    \end{align}
for $n$ and $B$ sufficiently large.

\textit{Existence of the desired code: }With the aid of random coding arguments and applying the Selection Lemma \cite{bloch2011physical} to both conditions \eqref{eq: power cost B blocks} \eqref{eq: est cost B blocks}, we can obtain that there exists a control design $c\in\mathcal{C}(Bn)$ that simultaneously satisfies
\begin{align*}
    \mathbb E\sbrackets{|c_P(U_1^{Bn}) - P|}<\varepsilon,\quad \mathbb E[|c_S(X_1^{Bn}, \hat{U}_2^{Bn}) - S|]<\varepsilon.
\end{align*}

\textit{Closedness: }So far we have only shown under strict inequality of rate constraint \eqref{info result} that the cost constraints hold. As for equality, we consider a sequence of probability distribution $\mathcal{P}^k\xrightarrow[]{k\rightarrow\infty}\mathcal{P}^*$, where each $\mathcal{P}^k$ satisfies the rate constraint with strict inequality, and $\mathcal{P}^*$ satisfies it with equality. Therefore, 
\begin{align*}
   &\mathbb E_{\mathcal{P}^*}[\abs{c_S(X_1^n, U_2^n) - S}]\\
   &\leq\mathbb E_{\mathcal{P}_k}[\abs{c_S(X_1^n, U_2^n) - S}]+\mathbb E_{\mathcal{P}^*-\mathcal{P}_k}[\abs{c_S(X_1^n, U_2^n) - S}]\\
   &\leq 1/2\cdot \varepsilon + 1/2\cdot \varepsilon\leq\varepsilon
\end{align*}
where the first $1/2\cdot \varepsilon$ comes from the achievability proof with strict inequality as $n$ large enough, and the second $1/2\cdot \varepsilon$ could similarly be derived repeating the estimation cost analysis steps above since the probability measure $\mathcal{P}^*-\mathcal{P}_k\rightarrow 0$ as $k\rightarrow 0$. Moreover, the power constraint for $\mathcal{P}^*$ also holds due to \eqref{eq:trans power constraint}. This concludes our achievability proof.

\hfill \qedsymbol{}

\section{Converse Proof of Theorem \ref{thm: c-n wits main}}\label{app: conv proof}
We consider a control design $c\in \mathcal{C}(n)$ of block length $n\in\mathbb N^\star$ that achieves a pair of costs $(P,S)$. According to Csiszár sum identity, we have
  \begin{align*}
    0&=\sum_{t=1}^n I\brackets{X_0^{t-1}; Y_{1,t}|Y_{1, t+1}^n} - \sum_{t=1}^n I\brackets{Y_{1,t+1}^n; X_{0, t}| X_0^{t-1}}\\
    &\leq \sum_{t=1}^n I\brackets{X_0^{t-1}, Y_{1, t+1}^n; Y_{1,t}} - \sum_{t=1}^n I\brackets{Y_{1,t+1}^n; X_{0, t}|X_0^{t-1}}\\
    & \stackrel{\text{(a)}}{=} \sum_{t=1}^n I(W_{1, t}, W_{2, t}; Y_{1,t}) - \sum_{t=1}^n I(W_{2, t}; X_{0, t}|W_{1, t})\\
    & \stackrel{\text{(b)}}{=} n\cdot ( I(W_{1, T}, W_{2, T}; Y_{1,T}|T) - I(W_{2, T}; X_{0, T}|W_{1, T}, T))\\
    & \leq n\cdot ( I(W_{1, T}, T,  W_{2, T}; Y_{1,T}) - I(W_{2, T}; X_{0, T}|W_{1, T}, T))\\
    & \stackrel{\text{(c)}}{=} n\cdot (I(W_1, W_2; Y_1) - I(W_2; X_0|W_1)),
\end{align*}
where (a) is from the identification of the auxiliary RVs $W_{1, t} = X_0^{t-1}$, $W_{2, t} = Y_{1, t+1}^n$, (b) is from introducing the uniform random variable $T\in\{1,...,n\}$ and the corresponding mean auxiliary random variables $X_{0, T}, W_{1, T}, W_{2, T}, Y_{1, T}$, where $Y_{1, T}$ is distributed according to
\begin{equation}
    \mathbb{P}(Y_{1, T} = y_1) = \frac{1}{n}\sum_{t=1}^n\mathbb{P}(Y_{1,t} = y_1), \quad \forall y_1\in\mathcal{Y}_1
    \nonumber
\end{equation}
and (c) comes from the introduction of $X_0 \triangleq X_{0, T}, W_1 \triangleq (W_{1, T}, T), W_2 \triangleq W_{2, T}, Y_1 \triangleq Y_{1, T}$.

Next, we show that the introduced auxiliary RVs $W_{1,t} = X_0^{t-1}$, and $W_{2, t} = Y_{1, t+1}^n$ in equation (a) satisfy the Markov chain properties \eqref{markov result}.

For $t\in\{1,...,n\}$:
\begin{itemize}
    \item $X_{0, t}$ and $W_{1,t} = X_0^{t-1}$ are independent because of the i.i.d. property of the source.
    \item $U_{1,t} -\!\!\!\!\minuso\!\!\!\!- (X_{0,t}, W_{1,t}) -\!\!\!\!\minuso\!\!\!\!- W_{2,t}$ is given by the causal encoding, that the current encoder's action $U_{1,t}$ depends on the past and current input states $(X_{0,t}, X_0^{t-1})$, not on the future sequence $W_{2,t}$.
    \item $(X_{1,t}, Y_{1,t})-\!\!\!\!\minuso\!\!\!\!- (X_{0,t}, U_{1,t})    -\!\!\!\!\minuso\!\!\!\!- (W_{1,t}, W_{2,t})$ comes from the memoryless property of the channel.
    \item $U_{2, t} -\!\!\!\!\minuso\!\!\!\!- (W_{1,t}, W_{2,t}, Y_{1, t}) -\!\!\!\!\minuso\!\!\!\!- (X_{0, t}, U_{1, t}, X_{1, t})$ is shown as Lemma \ref{lemma: c-n markov chain} in Appendix \ref{subsec: markov chain}. 
\end{itemize}
This implies that the auxiliary random variables $X_0 = X_{0, T}, W_1 = (W_{1, T}, T), W_2 = W_{2, T}, U_1 = U_{1, T}, X_1 = X_{1,T}, Y_1 = Y_{1, T}, U_2 = U_{2, T}$ satisfy the Markov chains \eqref{markov result}. Equivalently, the distribution of auxiliary random variables decomposes as in \eqref{prob result}.

Now, we reformulate $n$-stage costs with a control design $c\in\mathcal{C}(n)$. By defining distributions $\mathbb{P}(U_{1, T} = u_1) = \frac{1}{n}\sum_{t=1}^n\mathbb{P}(U_{1,t} = u_1), \quad \forall u_1\in\mathcal{U}_1$ and $\mathbb{P}(X_{1,T}=x_{1,T}, U_{2,T} = u_{2,T})= \frac{1}{n}\sum_{t=1}^n\mathbb{P}(X_{1,t} = x_1, U_{2,t}=u_2), \quad \forall (x_1,u_2)\in\mathcal{X}_1\times\mathcal{U}_2$ and using auxiliary random variables $U_1 = U_{1, T}, X_1 = X_{1,T}, U_2 = U_{2, T}$, we have
\begin{align*}
    &\mathbb E[c_P(U_1^n)] = \mathbb E\sbrackets{\frac{1}{n}\sum_{t=1}^n U_{1, t}^2} = \mathbb E\sbrackets{U_{1, T}^2} = \mathbb E \sbrackets{U_1^2},\\
    &\mathbb E[c_S(X_1^n,U_2^n)]= \mathbb E\sbrackets{\frac{1}{n}\sum_{t=1}^n (X_{1, t} - U_{2, t})^2} \\
    &\quad\quad\quad\quad\quad\quad= \mathbb E\sbrackets{(X_{1, T} - U_{2, T})^2} 
    = \mathbb E \sbrackets{(X_{1} - U_{2})^2}.
\end{align*}

In conclusion, if a pair of costs $(P, S)\in\mathbb R^2$ is achievable, then for all $\varepsilon>0$, there exists $\Bar{n}\in\mathbb N^\star$ such that for all $n\geq \Bar{n}$, there exists a control design $c\in\mathcal{C}(n)$ such that its induced long-run costs
\begin{align}
&\mathbb E\Big[\big|P - c_P(U_1^n)\big| + \big|S - c_S(X_1^n, U_2^n)\big|\Big] \leq \varepsilon\nonumber
    \\
     &\Longrightarrow|P - \mathbb E[c_P(U_1^n)]| + |S - \mathbb E[c_S(X_1^n,U_2^n)]| \leq \varepsilon\nonumber
     \\
     &\Longrightarrow |P - \mathbb E \sbrackets{U_1^2}| + |S - \mathbb E \sbrackets{(X_{1} - U_{2})^2}| \leq \varepsilon
     \label{eq: converse achievable result}
\end{align}
The equation \eqref{eq: converse achievable result} is valid for all $\varepsilon>0$. This shows that $(X_0,W_1,W_2,U_1,X_1,Y_1,U_2)$ satisfy \eqref{prob result}, \eqref{info result}, \eqref{cost result}.
\subsection{Proof of the Markov Chain}\label{subsec: markov chain}
\begin{lemma}\label{lemma: c-n markov chain}
    For each time $t\in\{1,...,n\}$, it holds that $U_{2, t} -\!\!\!\!\minuso\!\!\!\!- (W_{1,t}, W_{2,t}, Y_{1, t}) -\!\!\!\!\minuso\!\!\!\!- (X_{0, t}, U_{1, t}, X_{1, t})$.
\end{lemma}
\begin{proof}[Proof of Lemma \ref{lemma: c-n markov chain}]
\begin{align}
         &\!\mathcal{P}(U_{2,t}|Y_{1,t}, W_{1, t}, W_{2, t}, X_{0, t},U_{1, t}, X_{1, t})\nonumber\\
        &=\quad \mathcal{P}(U_{2,t}|Y_{1,t}, X_0^{t-1}, Y_{1, t+1}^n, X_{0, t},U_{1, t}, X_{1, t})\nonumber\\
        &\stackrel{\text{(a)}}{=} \sum \mathcal{P}(U_1^{t-1}, X_1^{t-1}, Y_1^{t-1},U_{2,t}|Y_{1,t}, X_0^{t-1}, Y_{1, t+1}^n, X_{0, t},\nonumber\\
        &\qquad\qquad\qquad\qquad\qquad\qquad\qquad\qquad U_{1, t}, X_{1, t})\nonumber\\
        &=\sum\mathcal{P}(U_1^{t-1}|Y_{1,t}, X_0^{t-1}, Y_{1, t+1}^n, X_{0, t}, U_{1, t}, X_{1, t})\tag{\text{b1}}\\
        &\quad\times \mathcal{P}(X_1^{t-1}|Y_{1,t}, X_0^{t-1}, Y_{1, t+1}^n, X_{0, t},U_{1, t}, X_{1, t}, U_1^{t-1})\tag{\text{c1}}\\
        &\quad\times \mathcal{P}(Y_1^{t-1}|Y_{1,t}, X_0^{t-1}, Y_{1, t+1}^n, X_{0, t},U_{1, t}, X_{1, t}, U_1^{t-1}, X_1^{t-1})\tag{\text{d1}}\\
        &\quad\times\mathcal{P}(U_{2,t}|Y_{1,t}, X_0^{t-1}, Y_{1, t+1}^n, X_{0, t},U_{1, t}, X_{1, t}, U_1^{t-1}, \tag{\text{e1}}\\
        &\qquad\qquad \quad X_1^{t-1}, Y_1^{t-1})\nonumber\\
        &=\quad \sum\mathcal{P}(U_1^{t-1}|Y_{1,t}, X_0^{t-1}, Y_{1, t+1}^n)\tag{\text{b2}}\\
        &\quad\times\quad  \mathcal{P}(X_1^{t-1}|Y_{1,t}, X_0^{t-1}, Y_{1, t+1}^n, U_1^{t-1})\tag{\text{c2}}\\
        &\quad\times \quad \mathcal{P}(Y_1^{t-1}|Y_{1,t}, X_0^{t-1}, Y_{1, t+1}^n, U_1^{t-1}, X_1^{t-1})\tag{\text{d2}}\\
        &\quad\times \quad \mathcal{P}(U_{2,t}|Y_{1,t}, X_0^{t-1}, Y_{1, t+1}^n, U_1^{t-1}, X_1^{t-1}, Y_1^{t-1})\tag{\text{e2}},
    \end{align}
where in equality (a) we sum over $(U_1^{t-1}, X_1^{t-1}, Y_1^{t-1})$

In the last equality, we remove the dependence on $ X_{0, t},U_{1, t}, X_{1, t}$ from all the terms. This is because
\begin{itemize}
    \item (b1) $\rightarrow$ (b2) comes from the causal encoding: $U_1^{t-1}$ fully relies on $X_0^{t-1}$ and not on the future sequence.
    \item (c1) $\rightarrow$ (c2) comes from the deterministic generation of the interim state: $X_{1,t} = X_{0, t}+U_{1, t}$, $\forall t\in\{1,...,n\}$.
    \item (d1) $\rightarrow$ (d2) comes from the memoryless property of the channel: $Y_1^{t-1}$ only depends on $(X_0^{t-1 }, U_1^{t-1}, X_1^{t-1})$.
    \item (e1) $\rightarrow$ (e2) comes from the noncausal decoding: $U_{2,t}$ is a deterministic function of $(Y_{1}^{t-1}, Y_{1,t}, Y_{1,t+1}^n)$.
\end{itemize}
Hence we have,
    \begin{align*}
        &\mathcal{P}(U_{2,t}|Y_{1,t}, W_{1, t}, W_{2, t}, X_{0, t},U_{1, t}, X_{1, t})\\
        &=\!\sum_{U_1^{t-1}, X_1^{t-1}, Y_1^{t-1}} \!\mathcal{P}(U_1^{t-1}, X_1^{t-1}, Y_1^{t-1} U_{2,t}|Y_{1,t}, X_0^{t-1}, Y_{1, t+1}^n)\\
        &= \quad  \mathcal{P}(U_{2,t}|Y_{1,t}, X_0^{t-1}, Y_{1, t+1}^n)\\
        &= \quad  \mathcal{P}(U_{2,t}|Y_{1,t}, W_{1,t}, W_{2,t}).
    \end{align*}
\end{proof}

\section{Proof of Theorem \ref{thm: best gaussian policy}}\label{app: proof of gaussian design}
To prove this theorem, we first provide the following lemma which states the general relation of Gaussian covariance coefficients given a Markov chain. It is a direct consequence of combining several well-known results.

\begin{lemma}\label{lemma: markov-covariance}
    If the jointly Gaussian random vector $(X,Y,Z)$ satisfy the Markov chain $X -\!\!\!\!\minuso\!\!\!\!- Y -\!\!\!\!\minuso\!\!\!\!- Z$ and have a covariance matrix
    \begin{align}
        \Sigma_{X,Y,Z} = \begin{pmatrix}
            P & \rho_1\sqrt{PQ} & \rho_2\sqrt{PV}\\
            \rho_1\sqrt{PQ} & Q &\rho_3\sqrt{QV}\\
            \rho_2\sqrt{PV}&\rho_3\sqrt{QV} & V
        \end{pmatrix},\label{eq: cov xyz}
    \end{align}
    with the covariance coefficients $(\rho_1,\rho_2,\rho_3)\in[-1,1]^3$ ensuring that $\det\brackets{\Sigma_{X,Y,Z}}\geq 0$, then, we have
    \begin{align}
        \rho_2=\rho_1\rho_3.\label{eq: rho_2=rho_1rho_3}
    \end{align}
    
\end{lemma}

In other words, with the context of the Markov chain of jointly Gaussian $X -\!\!\!\!\minuso\!\!\!\!- Y -\!\!\!\!\minuso\!\!\!\!- Z$, if $X$ and $Y$ (or if $Z$ and $Y$) are uncorrelated (i.e., if $\rho_1=0$ or $\rho_3=0$), it follows that $X$ and $Z$ are also uncorrelated (i.e., $\rho_2=0$). The proof of this lemma is shown in Appendix \ref{app: proof of markov-covariance lemma}.

Now, without loss of generality, we consider the joint Gaussian random variables $(X_0, W_1,W_2, U_1)\sim\mathcal{N}(0, K)$ optimal for problem \eqref{eq: gauss opt problem}, are centered with the covariance matrix 
\begin{align}
    K= \begin{pmatrix}
        Q & \rho_1\sqrt{QV_1} & \rho_2\sqrt{QV_2} & \rho_3\sqrt{QP}\\
         \rho_1\sqrt{QV_1} & V_1 &\rho_4\sqrt{V_1V_2} & \rho_5\sqrt{V_1P}\\
         \rho_2\sqrt{QV_2}&\rho_4\sqrt{V_1V_2}&V_2&\rho_6\sqrt{V_2P} \\
         \rho_3\sqrt{QP} & \rho_5\sqrt{V_1P} & \rho_6\sqrt{V_2P} & P
    \end{pmatrix}.
\end{align}
Since $X_0\indep W_1$, $\rho_1 = 0$. Also, given the Markov chain $U_1 -\!\!\!\!\minuso\!\!\!\!- (X_0, W_1) -\!\!\!\!\minuso\!\!\!\!- W_2$, from a simple extension of Lemma \ref{lemma: markov-covariance}, we can obtain that $\rho_6 $ is uniquely determined through the other four parameters by $\rho_6 =\rho_2\rho_3 + \rho_4\rho_5$. Moreover, other active correlation coefficients $(\rho_2, \rho_3, \rho_4, \rho_5)\in[-1,1]^4$ are chosen such that 
\begin{align*}
   \det(K) =QV_1V_2P(-1 + \rho_2^2 + \rho_4^2)(-1 + \rho_3^2 + \rho_5^2)\geq 0.
\end{align*}

Given covariance matrix $K$ and \eqref{Y1 generation}, the covariance matrix $K_2$ of $(X_0,W_1,W_2,Y_1)$ could be easily computed, with a determinant given by
\begin{align*}
    \det(K_2) = QV_1V_2(-1 + \rho_2^2 + \rho_4^2)(P(-1 + \rho_3^2 + \rho_5^2) - N).
\end{align*}
The positive semi-definite property of $K_2$ must also be satisfied with properly chosen $(\rho_2, \rho_3, \rho_4, \rho_5)$.

We have the following lemma determining the explicit expression of the information constraint \eqref{info result} and the optimization object \eqref{eq: gauss opt problem} under the joint Gaussian constraint.

\begin{lemma}\label{lemma, eqs to show}
    Assume $(X_0, W_1,W_2, U_1)\sim\mathcal{N}(0, K)$, then
    \begin{align}
        &I(W_1, W_2; Y_1) - I(W_2; X_0|W_1)\nonumber\\
        &= I(W_1; Y_1) - I(W_2; X_0|W_1, Y_1)\label{eq: info constraint, w/o feedback}\\
        &= \frac{1}{2}\log\brackets{\frac{T_1}{T_1 - T_2}},\label{eq: new info constraint, w/o feedback}
    \end{align}
    where the terms 
    \begin{align}
        &T_1 = (P+Q+N+2\rho_3\sqrt{QP})(-1 + \rho_2^2 + \rho_4^2),\nonumber\\
        &T_2 = N\rho_2^2 + P\rho_2^2(1-\rho_3^2) - P\rho_5^2(1-\rho_4^2).\nonumber
    \end{align}
    And the object to minimize writes
    \begin{align}
        &\mathbb E\Big[\big(X_1 - \mathbb E\big[X_1\big|W_1,W_2,Y_1\big]\big)^2\Big]\nonumber\\
        &=\frac{N\cdot f_1(\rho_2,\rho_3,\rho_4,\rho_5)}{(1-\rho_4^2)\cdot N + f_1(\rho_2,\rho_3,\rho_4,\rho_5)},\label{eq: conditional variance}
    \end{align}
    where
    \begin{align*}
        &f_1(\rho_2,\rho_3,\rho_4,\rho_5) \\
        &= -P\rho_2^2\rho_3^2- (Q+2\rho_3\sqrt{PQ})(-1+\rho_2^2+\rho_4^2)\\
        &\qquad + P(1-\rho_4^2)(1-\rho_5^2).
    \end{align*}
\end{lemma}

The proof of this lemma is given in Appendix \ref{app: proof of lemma eqs to show}

Thus, from \eqref{eq: new info constraint, w/o feedback}, the original information constraint is given explicitly by
\begin{align*}
    \frac{1}{2}\log\brackets{\frac{T_1}{T_1 - T_2}}\geq 0\Leftrightarrow T_1\geq T_2\geq 0 \text{ or } T_1\leq T_2\leq 0.
\end{align*}

Now, we examine the objective \eqref{eq: conditional variance}. First, if $1-\rho_4^2\ = 0$, from \eqref{eq: conditional variance}, we can get that 
\begin{align*}
    \mathbb E\Big[\big(X_1 - \mathbb E\big[X_1\big|W_1,W_2,Y_1\big]\big)^2\Big]=N.
\end{align*} 

Next, we focus on the case of $1-\rho_4^2\ \neq 0$. In this case, \eqref{eq: conditional variance} is of the form 
\begin{align}
    \frac{N\cdot f(\rho_2,\rho_3,\rho_4,\rho_5)}{N + f(\rho_2,\rho_3,\rho_4,\rho_5)}, \label{eq: Nf/(N+f)}
\end{align} 
where
\begin{align*}
    f(\rho_2,\rho_3,\rho_4,\rho_5) = f_1(\rho_2,\rho_3,\rho_4,\rho_5)/(1-\rho_4^2).
\end{align*}

Note that, the function $x\mapsto \frac{N\cdot x}{N + x}$ is nonnegative and strictly increasing over the region $(-\infty, -N]\cup [0, \infty)$. Therefore, our goal of minimizing \eqref{eq: Nf/(N+f)} is now transformed to either minimizing the nonnegative object 
\begin{align*}
 f(\rho_2,\rho_3,\rho_4,\rho_5)\geq 0
\end{align*} or minimizing the negative object
\begin{align*}
   f(\rho_2,\rho_3,\rho_4,\rho_5)\leq -N
\end{align*} subject to the following constraints:
 \begin{align}
        &\text{1.  }\det(K)\geq 0 \Longrightarrow\nonumber\\
        &\quad QV_1V_2P(-1 + \rho_2^2 + \rho_4^2)(-1 + \rho_3^2 + \rho_5^2)
        \stackrel{\text{(A)}}{\geq} 0,\nonumber\\
        &\text{2.  }\det(K_2)\geq 0 \Longrightarrow\nonumber\\
         &\quad QV_1V_2(-1 + \rho_2^2 + \rho_4^2)(P(-1 + \rho_3^2 + \rho_5^2) - N)
        \stackrel{\text{(B)}}{\geq} 0,\nonumber\\
        &\text{3.  }T_1\geq T_2 \geq 0\Longrightarrow\nonumber\\&\quad(Q+P+N+2\rho_3\sqrt{QP})(-1  + \rho_2^2 + \rho_4^2) \nonumber\\
        &\quad\stackrel{\text{(C1)}}{\geq} N\rho_2^2 + P\rho_2^2(1-\rho_3^2) - P\rho_5^2(1-\rho_4^2)\stackrel{\text{(D1)}}{\geq}0,\nonumber\\
        &\quad\text{or,   }T_1\leq T_2 \leq 0\Longrightarrow\nonumber\\&\quad(Q+P+N+2\rho_3\sqrt{QP})(-1  + \rho_2^2 + \rho_4^2) \nonumber\\
        &\quad\stackrel{\text{(C2)}}{\leq}  N\rho_2^2+ P\rho_2^2(1-\rho_3^2) - P\rho_5^2(1-\rho_4^2)\stackrel{\text{(D2)}}{\leq}0.\nonumber
    \end{align}

To simplify the above constraints, we consider the following two distinct cases:

\textit{Case 1, }if $-1 + \rho_2^2 + \rho_4^2\geq 0$, constraints (A) and (B) together yield $-1 + \rho_3^2 + \rho_5^2\geq \frac{N}{P}$. Moreover, constraint (C1) gives us $f(\rho_2,\rho_3,\rho_4,\rho_5) \leq -N$. In this case, our optimization problem boils down to minimizing
\begin{align}
    f(\rho_2,\rho_3,\rho_4,\rho_5)\leq -N,\label{eq: opt object, case 1}
\end{align}
subject to 
\begin{align}
    &\text{1.  }-1 + \rho_2^2 + \rho_4^2\geq 0,\nonumber
    \\
    &\text{2.  }-1 + \rho_3^2 + \rho_5^2\geq \frac{N}{P},\nonumber
    \\
    &\text{3.  } N\rho_2^2 + P\rho_2^2(1-\rho_3^2) - P\rho_5^2(1-\rho_4^2)\geq0.\label{eq: case 1, constraint 3}
\end{align}
Notice that $f(\rho_2,\rho_3,\rho_4,\rho_5)$ is decreasing function of $\rho_5^2$. From \eqref{eq: case 1, constraint 3}, we get that $\rho_5^2 \leq \frac{N\rho_2^2 + P\rho_2^2(1-\rho_3^2)}{P(1-\rho_4^2)}$. Therefore, the optimizer is given by $(\rho_5^*)^2 = \frac{N\rho_2^2 + P\rho_2^2(1-\rho_3^2)}{P(1-\rho_4^2)}$. By plugging $(\rho_5^*)^2$ into \eqref{eq: opt object, case 1}, we obtain that
\begin{align*}
    &f(\rho_2,\rho_3,\rho_4,\rho_5^*) \\
    &= \frac{-(-1+\rho_2^2+\rho_4^2)(Q+P+2\rho_3\sqrt{PQ}) - N\rho_2^2}{1-\rho_4^2}.
\end{align*}
Since
\begin{align*}
    \frac{\partial f(\rho_2,\rho_3,\rho_4,\rho_5^*)}{\partial \rho_4^2} \leq 0,
\end{align*}
we know that $f\rightarrow -\infty$ as $\rho_4^2\rightarrow 1$. Moreover, since $\mathbb E\big[\big(X_1 - \mathbb E\big[X_1\big|W_1,W_2,Y_1\big]\big)^2\big]$  is continuous and converges to $N$ when $\rho_4^2\rightarrow 1$, in this case, the minimal value of $N$ is obtained at the boundary $\rho_4^2 = 1$.

\textit{Case 2, }If $-1 + \rho_2^2 + \rho_4^2\leq 0$, conditions (A) and (B) together give us $-1 + \rho_3^2 + \rho_5^2\leq 0$, and (C2) gives us $f(\rho_2,\rho_3,\rho_4,\rho_5) \geq -N$ (but only $f\geq 0$ contributes to a nonnegative estimation cost). Therefore, in this case, our optimization problem boils down to minimizing 
\begin{align}
    f(\rho_2,\rho_3,\rho_4,\rho_5) \geq0,\label{eq: opt object, case 2}
\end{align}
subject to
\begin{align}
        &\text{1.  }-1 + \rho_2^2 + \rho_4^2\leq 0,\nonumber
        \\
        &\text{2.  }-1 + \rho_3^2 + \rho_5^2\leq 0,\label{eq: case 2, constraint 2}\\
        &\text{3.  }N\rho_2^2+ P\rho_2^2(1-\rho_3^2) - P\rho_5^2(1-\rho_4^2)\leq 0. \label{eq: case 2, constraint 3}
\end{align}
Since $f(\rho_2,\rho_3,\rho_4,\rho_5)$ is reduced especially when $\rho_5^2$ is increased, therefore, from \eqref{eq: case 2, constraint 2}, we get that $(\rho_5^*)^2 = 1-\rho_3^2$. By replacing $(\rho_5^*)^2$ into \eqref{eq: opt object, case 2}, we get that
\begin{align}    f(\rho_2,\rho_3,\rho_4,\rho_5^*) =\frac{ (1 - \rho_2^2 - \rho_4^2)(\sqrt{P}\rho_3 + \sqrt{Q})^2 }{(1 -\rho_4^2)}.\label{eq: opt object 2}
\end{align}
Therefore, when $P\geq Q$, taking $\rho_3^* = -\sqrt{\frac{Q}{P}}$ and any $\rho_2$, $\rho_4$ satisfy the constraints results in the optimal value of  $f(\rho_2,\rho_3^*,\rho_4,\rho_5^*) = 0$. In this case, $\mathbb E\big[\big(X_1 - \mathbb E\big[X_1\big|W_1,W_2,Y_1\big]\big)^2\big] = 0$.

When $P<Q$, $f(\rho_2,\rho_3,\rho_4,\rho_5^*)$ is a decreasing function of $\rho_2^2$. The constraint \eqref{eq: case 2, constraint 3} gives us the optimal value of $\rho_2^2$:
\begin{equation}
    (\rho_2^*)^2 = \frac{P(1-\rho_3^2)(1-\rho_4^2)}{N+P(1-\rho_3^2)}.\label{opt rho_2^2}
\end{equation}
Plugging the $(\rho_2^*)^2$ in \eqref{opt rho_2^2} into 
\eqref{eq: opt object 2}, we have
\begin{align*}
    f(\rho_2^*,\rho_3,\rho_4,\rho_5^*)  = \frac{N (\sqrt{P}\rho_3 + \sqrt{Q})^2}{N+P(1-\rho_3^2)}.
\end{align*}
Then, taking $\frac{\partial f}{\partial \rho_3} =0$ gives us the optimum 
\begin{align*}
    \rho_3^* = -\frac{P+N}{\sqrt{QP}},
\end{align*}
which is valid only if the following condition holds
\begin{align}
    (\rho_3^*)^2 = \frac{(P+N)^2}{QP}\leq 1
    \Longrightarrow \quad\left\{
       \begin{aligned}
           & Q > 4N,\\
           & P\in\sbrackets{P_1, P_2},
       \end{aligned}
       \right. \label{eq: opt condition, case 2}
\end{align}
where
\begin{align*}
    &P_1 = \frac{1}{2}\brackets{Q-2N-\sqrt{Q^2-4QN}},\\
    &P_2 = \frac{1}{2}\brackets{Q-2N+\sqrt{Q^2-4QN}}.
\end{align*}
In this case, $(\rho_5^*)^2 = \frac{QP - (P+N)^2}{QP}$, and $f(\rho_2^*,\rho_3^*,\rho_4^*,\rho_5^*) = \frac{N(Q - N - P)}{N+P}$, which results in the estimation cost of
\begin{align*}
    \mathbb E\Big[\big(X_1 - \mathbb E\big[X_1\big|W_1,W_2,Y_1\big]\big)^2\Big] = \frac{N(Q-N-P)}{Q}.
\end{align*}

In the case when the condition \eqref{eq: opt condition, case 2} is unmet, we always have $-\frac{P+N}{\sqrt{QP}} <-1$. Since $\frac{\partial f}{\partial \rho_3}>0$, function $f$ increases when $\rho_3\in[-1,1]$ increases. Therefore, the minimal value of $f$ achieves at the left boundary $\rho_3^* = -1$, which gives us $\rho_2^* = \rho_5^* = 0$ and $f(\rho_2^*,\rho_3^*,\rho_4^*,\rho_5^*) = (\sqrt{Q} - \sqrt{P})^2$. Hence, 
\begin{align*}
    \mathbb E\Big[(X_1 - \mathbb E\big[X_1\big|W_1,W_2,Y_1\big]\big)^2\Big] = \frac{N\cdot(\sqrt{Q} - \sqrt{P})^2}{N+(\sqrt{Q} - \sqrt{P})^2}.
\end{align*}

Obviously, the mimimal estimation cost value $N$ from case 1 is always larger than the cost derived in case 2. Therefore, summarizing our above analysis, the optimal Gaussian cost $S_{\mathsf {G}}(P)$ is given by \eqref{opt gaussian cost}. \hfill \qedsymbol{}

\subsection{Proof of Lemma \ref{lemma: markov-covariance}}\label{app: proof of markov-covariance lemma}
From the Markov chain and joint Gaussian entropy, we have that
\begin{align*}
    0&=I(X;Z|Y)\\
    &= H(X, Y) + H(Y,Z) - H(Y) - H(X,Y,Z)\\
    &= \frac{1}{2}\log\brackets{\frac{\det(\Sigma_{X,Y})\cdot\det(\Sigma_{Y,Z})}{\sigma_Y^2\cdot\det(\Sigma_{X,Y,Z})}},
\end{align*}
where all the information of the last step can be obtained from the covariance matrix \eqref{eq: cov xyz}. Therefore,
\begin{align*}
&0=\det(\Sigma_{X,Y})\cdot\det(\Sigma_{Y,Z}) - \sigma_Y^2\cdot\det(\Sigma_{X,Y,Z})\\
&= PQ^2V(\rho_1\rho_3 - \rho_2)^2,
\end{align*}
which implies the result given in \eqref{eq: rho_2=rho_1rho_3}. 
\hfill \qedsymbol{}
\subsection{Proof of Lemma \ref{lemma, eqs to show}}\label{app: proof of lemma eqs to show}
The information constraint \eqref{info result} can be rewritten in the following way:
\begin{align*}
    &\quad I(W_1, W_2; Y_1) - I(W_2; X_0|W_1)\\
    &= I(W_1; Y_1) + I(W_2; Y_1|W_1) - I(W_2; X_0|W_1)\\
    &\stackrel{\text{(a)}}{=} I(W_1; Y_1) + I(W_2; Y_1|W_1) - I(W_2; X_0, Y_1|W_1)\\
    &= I(W_1; Y_1) - I(W_2; X_0|W_1, Y_1)\\
    & = \frac{1}{2}\log\brackets{\frac{\sigma_{W_1}^2\cdot\sigma_{Y_1}^2\cdot\det{(K_2)}}{\det{(\Sigma_{W_1,W_2,Y_1})}\cdot\det{(\Sigma_{X_0,W_1,Y_1})}}}\\
    & \stackrel{}{=} \frac{1}{2}\log\brackets{\frac{T_1}{T_1 - T_2}},
\end{align*}
where (a) comes from the Markov chain $Y_1 -\!\!\!\!\minuso\!\!\!\!- (X_0, W_1) -\!\!\!\!\minuso\!\!\!\!- W_2$, and thus $I(W_2; Y_1|X_0, W_1)=0$. Additionally,
\begin{align}
    &\mathbb E\Big[\big(X_1 - \mathbb E\big[X_1\big|W_1,W_2,Y_1\big]\big)^2\Big]\nonumber\\
    &= \Var\brackets{X_1|W_1, W_2, Y_1}\nonumber\\
    &\stackrel{\text{(b)}}{=} \sigma_{X_1}^2 - \Sigma_{X_1W}^\top\Sigma_{WW}^{-1}\Sigma_{X_1W}\nonumber\\
    &= \frac{N\cdot f_1(\rho_2,\rho_3,\rho_4,\rho_5)}{(1-\rho_4^2)\cdot N + f_1(\rho_2,\rho_3,\rho_4,\rho_5)},\nonumber
\end{align}
where step (b) is obtained using the Schur complement. Here, $W = (W_1,W_2,Y_1)^\top$, $\Sigma_{X_1W} = (\sigma_{X_1,W_1}, \sigma_{X_1,W_2}, \sigma_{X_1,Y_1})^\top$, and $\Sigma_{WW}$ is the covariance matrix of $(W_1,W_2,Y_1)$. \hfill \qedsymbol{}

\section{Proof of Corollary \ref{cor: causal gaussian cost}}\label{app: proof of unified causal Gaussian cost}
From Theorem \ref{theorem: c-n w-f}, the information constraint for Witsenhausen counterexample with causal encoder and non-causal decoder with channel feedback $Y_1$ is
\begin{equation}
    I(W_1; Y_1) - I(U_2; X_0|W_1, Y_1)\geq 0. \label{eq: info result c-n w-f recall}
\end{equation}
Under the joint Gaussian constraint, the MMSE estimator of $X_1$ is linear, i.e.,
\begin{align*}
    U_2 = \mathbb E\big[X_1\big|W_1,W_2,Y_1\big]=a\cdot W_1 + b\cdot W_2+c\cdot Y_1,
\end{align*}
for some constants $a,b,c\in\mathbb R$. Substituting this into \eqref{eq: info result c-n w-f recall} yields
    \begin{align*}
        &I(W_1; Y_1) - I(U_2; X_0|W_1, Y_1)\\
        &=I(W_1; Y_1) - I(a\cdot W_1 + b\cdot W_2+c\cdot Y_1; X_0|W_1, Y_1)\\
        &=I(W_1; Y_1) - I(W_2; X_0|W_1, Y_1)\\
        &=I(W_1, W_2; Y_1) - I(W_2; X_0|W_1),
    \end{align*}
where in the last equality we use the step \eqref{eq: info constraint, w/o feedback} in Lemma \ref{lemma, eqs to show}. Hence, this recovers the information constraint for the framework without channel feedback \eqref{info result}. Thus, the optimization domain for this optimization problem remains unchanged, and the optimal joint Gaussian cost is \eqref{opt gaussian cost}, establishing case 1) in the corollary.

As for 2), \cite{Treust2024power} showed that the optimal Gaussian cost for noncausal encoding with causal decoding, both with and without source feedforward, also coincides with \eqref{opt gaussian cost}.

Finally, for case 3), note that the Gaussian strategy \eqref{opt gaussian cost} is simply a time-sharing mechanism of the best linear scheme. This applies even when both DMs are causal (achievability). Conversel, by cases 1) and 2), \eqref{opt gaussian cost} also serves as a lowerbound of this causal Gaussian cost regardless of feedback or feedforward links (converse). 

To conclude, we have established that all the causal decision-making configurations listed in the corollary, no matter with or without feedback or feedforward, share the same optimal joint Gaussian result.\hfill \qedsymbol{}

\section{Proof for Proposition \ref{prop: single-shot zero est}}\label{app: proof of S=0 P=Q}

If the estimation cost is zero, then,
\begin{align*}
    \mathbb E[\text{Var}(X_1|Y_1)] = \mathbb E[(X_1 - \mathbb E[X_1|Y_1])^2] = 0,
\end{align*}
Hence, $X_1 = \mathbb E[X_1|Y_1]$ almost surely (a.s.).

Using Bayes' formula, given the independence of $X_1$ and the channel noise $Z_1$, the posterior distribution of $X_1$ given $Y_1 = y$ is
\begin{align*}
    f_{X_1|Y_1}(x|y) = \frac{f_X(x)\phi_Z(y-x)}{f_Y(x)},\forall x\in\text{supp}(X_1).
\end{align*}
Now, suppose $\exists x_1\neq x_2\in\text{supp}(X_1)$, since $\phi_Z(y-x)$ is strictly positive for all $x\in\mathbb R$, then we have
\begin{align*}
    f_{X_1|Y_1}(x_1|y)>0,\quad f_{X_1|Y_1}(x_2|y)>0,
\end{align*}
which result in 
\begin{align*}
    \text{Var}(X_1|Y_1=y) = \sum_{x_1,x_2} (x-\mathbb E[X_1|Y_1=y])^2f_{X_1|Y_1}(x|y)>0.
\end{align*}
for such a $y$. Hence, 
\begin{align*}
    \mathbb E[(X_1 - \mathbb E[X_1|Y_1])^2] = \mathbb E[\text{Var}(X_1|Y_1)]>0.
\end{align*}
Hence, by contradiction, we have $\abs{\text{supp}(X_1)} = 1$, i.e., $X_1$ is a constant variable a.s.

Hence, we assume $X_1 = c$ for some constant $c\in\mathbb R$, $U_1 = X_1-X_0 = c-X_0$, then we have
\begin{align*}
    Q = &\min_{c\in\mathbb R}\mathbb E[U_1^2],\text{ where }X_0\sim\mathcal{N}(0,Q).
\end{align*}
This is achieved if and only if $c=0$, i.e., $U_1 = -X_0$.\hfill \qedsymbol{}

\section{Achievability Proof of Theorem \ref{theorem: c-n w-f}}\label{app: achievability proof c-n w-f}
We consider an arbitrary but fixed $\varepsilon>0$ and assume the sequence $(X_0^n, W_1^n, U_1^n, X_1^n, Y_1^n, U_2^n)$ is generated i.i.d. according to a distribution that decomposes as $\mathcal{P}_{X_0}\mathcal{P}_{W_1}\mathcal{P}_{U_1|X_0,W_1}\mathcal{P}_{X_1, Y_1|X_0, U_1}\mathcal{P}_{U_2|X_0, W_1, Y_1}$, with $P = \mathbb E[U_1^2]$, $S=\mathbb E[(X_1 - U_2)^2]$.
Let $\psi^{(n)}: \mathcal{X}_0^n\times \mathcal{W}_1^n\times \mathcal{U}_1^n\times\mathcal{X}_1^n\times\mathcal{Y}_1^n\times \mathcal{U}_2^n\rightarrow\{0,1\}$ denote an indicator function for sequences of length $n$ with
    \begin{align*}
        &\psi^{(n)}\!(x_0^n,w_1^n,u_1^n,x_1^n,y_1^n,u_2^n)\nonumber \!=\! 
        \left\{
          \begin{aligned}
              &1 \!\!&\text{if }|c_S(x_1^n,u_2^n)-S|\geq \frac{1}{12}\varepsilon\\
              &0 \!\! &\text{otherwise}.
          \end{aligned}   
        \right.
    \end{align*}
Using the weak law of large numbers (LLN) and the union bound we have
\begin{align*}
    \mathbb E[\psi^{(n)}(X_0^n,W_1^n,U_1^n,X_1^n,Y_1^n,U_2^n)]\leq \delta_n\rightarrow 0,
\end{align*}
as $n\rightarrow\infty$. 

Now, we provide a coding scheme by considering a probability distribution $\mathcal{P}_{X_0, W_1, U_1, X_1, Y_1, U_2}$ that decomposes as \eqref{eq: c-n w-f prob result} such that exists a $\delta>0$ and a rate $R>0$ such that 
\begin{align}
    I(U_2;X_0,Y_1|W_1) + \delta\leq R\leq I(W_1,U_2;Y_1)-\delta.
\end{align}
where both the lower bound and the upper bound are also assumed to be finite. 
Let $c\in\mathcal{C}_{\mathsf{f}}(Bn)$ a block-Markov random code over $B\in\mathbb{N}^\star$ blocks each of length $n\in\mathbb N^\star$. See the illustration in Figure \ref{fig: w_f Markov block codes}.

 \begin{figure}[t]
        \centering

\begin{tikzpicture}[x=0.5cm,y=0.5cm]

\foreach \y in {-6,-4,-2,0,2,4,6,8} {
  \draw[line width=2pt] (0,\y) -- (12,\y);
}

\foreach \x in {2,4,6,8,10} {
  \draw[dashed,line width=0.5pt] (\x,-7) -- (\x,9);
}

\node[anchor=east] at (-0.5,8) {$X_0^n$};
\node[anchor=east] at (-0.5,6) {$W_1^n$};
\node[anchor=east] at (-0.5,4) {$U_2^n$};
\node[anchor=east] at (-0.5,2) {$U_1^n$};
\node[anchor=east] at (-0.5,0) {$X_1^n$}; 
\node[anchor=east] at (-0.5,-2) {$Y_1^n$};
\node[anchor=east] at (-0.5,-4) {$\hat W_1^n$};
\node[anchor=east] at (-0.5,-6) {$\hat U_2^n$};

\node at (3,9) {$b-2$};
\node at (5,9) {$b-1$};
\node at (7,9) {$b$};
\node at (9,9) {$b+1$};

\newcommand{\hatchrect}[5]{%
  \path[pattern=north east lines,pattern color=#5,draw=#5]
    (#1,#2) rectangle (#3,#4);
}
\draw[->,line width=1.1pt,red] (5,7.7) -- (5,4.3); 
\draw[<-,line width=1.1pt,red] (6.5,5.7) -- (5.3,4.3); 
\draw[->,line width=1.1pt,red] (7,5.7) -- (7,2.3);
\draw[->,line width=1.1pt,red] (7,1.7) -- (7,0.3); 
\draw[->,line width=1.1pt,red] (7,-0.3) -- (7,-1.7);
\draw[->,line width=1.1pt,red] (7,-2.3) -- (7,-3.7);
\draw[->,line width=1.1pt,red] (7,-4.3) -- (5.6,-5.7);


\newcommand{\greenslashrect}[5]{%
  \path[fill=white,draw=green!70!black,line width=0.8pt] (#1,#2) rectangle (#3,#4); 
  \path[pattern=north east lines,pattern color=#5,draw=none]
       (#1,#2) rectangle (#3,#4); 
}

\newcommand{\redslashrect}[5]{%
  \path[fill=white,draw=red,line width=0.8pt] (#1,#2) rectangle (#3,#4); 
  \path[pattern=north east lines,pattern color=#5,draw=none]
       (#1,#2) rectangle (#3,#4); 
}

\newcommand{\blueslashrect}[5]{%
  \path[fill=white,draw=blue,line width=0.8pt] (#1,#2) rectangle (#3,#4); 
  \path[pattern=north east lines,pattern color=#5,draw=none]
       (#1,#2) rectangle (#3,#4); 
}

\greenslashrect{2}{7.8}{4}{8.2}{green!70!black}  
\greenslashrect{4}{5.8}{6}{6.2}{green!70!black}  
\greenslashrect{2}{3.8}{4}{4.2}{green!70!black}  
\greenslashrect{4}{1.8}{6}{2.2}{green!70!black}  
\greenslashrect{4}{-0.2}{6}{0.2}{green!70!black} 
\greenslashrect{4}{-2.2}{6}{-1.8}{green!70!black}
\greenslashrect{4}{-4.2}{6}{-3.8}{green!70!black}
\greenslashrect{2}{-6.2}{4}{-5.8}{green!70!black}


\redslashrect{4}{7.8}{6}{8.2}{red}            
\redslashrect{6}{5.8}{8}{6.2}{red}            
\redslashrect{4}{3.8}{6}{4.2}{red}            
\redslashrect{6}{1.8}{8}{2.2}{red}            
\redslashrect{6}{-0.2}{8}{0.2}{red}           
\redslashrect{6}{-2.2}{8}{-1.8}{red}          
\redslashrect{6}{-4.2}{8}{-3.8}{red}          
\redslashrect{4}{-6.2}{6}{-5.8}{red}          


\blueslashrect{6}{7.8}{8}{8.2}{blue}           
\blueslashrect{8}{5.8}{10}{6.2}{blue}          
\blueslashrect{6}{3.8}{8}{4.2}{blue}           
\blueslashrect{8}{1.8}{10}{2.2}{blue}          
\blueslashrect{8}{-0.2}{10}{0.2}{blue}         
\blueslashrect{8}{-2.2}{10}{-1.8}{blue}        
\blueslashrect{8}{-4.2}{10}{-3.8}{blue}         
\blueslashrect{6}{-6.2}{8}{-5.8}{blue}         


\end{tikzpicture}
        \caption{Block-Markov codes for causal encoding noncausal decoding setup with channel feedback.}
        \label{fig: w_f Markov block codes}
    \end{figure}

\textit{Random codebook: }We generate $|\mathcal{M}| = 2^{nR}$ sequences $W_1^n(m)$ i.i.d. $\sim \mathcal{P}_{W_1}$ with index $m\in\mathcal{M}$. For each index $m\in\mathcal{M}$, we generate the same number $|\mathcal{M}| = 2^{nR}$ sequences $U_2^n(m, \hat{m})$ with index $\hat{m}\in\mathcal{M}$ i.i.d. $\sim \mathcal{P}_{U_2 | W_1}$ depending on sequence $W_1^n(m)$.

\textit{Encoding scheme: }Let $m_b$ denote the message the encoder communicates at block $b\in[1:B]$. For the first block, WLOG, the encoder takes $m_1 = 1$ and returns $W_{1,1}^n = W_1^n(m_1)$. Then, at each transmission $b\in[2:B]$, it recalls the index $m_{b-1}$ and finds $m_b\in\mathcal{M}$ such that sequences 
\begin{align*}
    (X_{0,b-1}^n, Y_{1,b-1}^n, W_1^n(m_{b-1}), U_2^n(m_{b-1},m_b))\in\mathcal{A}_\varepsilon^{(n)}.
\end{align*}
It deduces $W_{1,b}^n\triangleq W_1^n(m_b)$ for the current block $b$ and sequentially draws the symbol $U_{1,t,b}$ i.i.d. $\sim\mathcal{P}_{U_1|X_0,W_1}$ conditioning on $W_{1,t}(m_b)$ and $X_{0,b,t}$ received causally at each time instant $t\in[1:n]$.

\textit{Decoding scheme: }The decoder first returns $\Tilde{m}_1 = 1$ for block $b=1$. At block $b\in[2:B]$, it recalls $\Tilde{m}_{b-1}$ from the last block and finds $\Tilde{m}_b\in\mathcal{M}$ such that
\begin{align*}
    &(Y_{1,b}^n, W_1^n(\Tilde{m}_b))\in\mathcal{A}_\varepsilon^{(n)}(\mathcal{P}_{Y_1,W_1}),\\
    &(Y_{1,b-1}^n, W_1^n(\Tilde{m}_{b-1}), U_2^n(\Tilde{m}_{b-1}, \Tilde{m}_b))\in\mathcal{A}_\varepsilon^{(n)}(\mathcal{P}_{Y_1, W_1, U_2}).
\end{align*}
We denote by $\Tilde{W}_{1,b}^n = W_{1}^n(\Tilde{m}_b)$ and $\Tilde{U}_{2,b-1}^n = U_{2}^n(\Tilde{m}_{b-1},\Tilde{m}_b)$ as our choice. For $b\in[1:B-1]$, the returned codeword $\Tilde{U}_{2,b}^n$ serves as the reconstruction for the interim state $X_{1,b}^n$.

\textit{Termination block: }The decoder simply outputs $\Tilde{U}_{2,B}^n = \boldsymbol{0}$. Usually, sequences are not jointly typical in the last block.

\textit{Error analysis per block: }We first focus on the encoding error $\mathcal{E}^e$. For $b\in[2:B]$, let $\mathcal{E}^e_b(m_{b-1})$ denote the event of a failed encoding process during block $b$ given the knowledge of $m_{b-1}$, i.e.,  $\mathcal{E}^e_b(m_{b-1}) =  \{ \forall m_b\in\mathcal{M}, (X^n_{0,b-1}, Y_{1,b-1}^n, W_1^n(m_{b-1}), \\ U_2^n(m_{b-1}, m_b))\notin {\mathcal{A}}_\varepsilon^{(n)}(\mathcal{P}_{X_0,W_1,Y_1,U_2})\}$. Due to the independence between the codewords and Markov blocks, the probability of an encoding error in block $b$ given no encoding errors in the previous blocks is
\begin{equation}
    \mathbb{P}( \mathcal{E}_b^e(M_{b-1})\mid
                \cap_{\beta  = 2}^{b-1}\Bar{\mathcal{E}}_{\beta}^e(M_{\beta - 1}))= \mathbb{P}\brackets{ \mathcal{E}_b^e(M_{b-1})}.\nonumber
\end{equation}
If $R \geq I(U_2; X_0,Y_1|W_1) + \delta$, following the Covering Lemma \cite{elgamal2011nit}, we have  $\mathbb{P}\brackets{ \mathcal{E}_b^{e}(M_{b-1})}\rightarrow 0$ as $n\rightarrow\infty$, $\forall b\in[2:B]$. Thus,  state sequences for the first $B-1$ blocks are successfully encoded with probability $\rightarrow 1$ as $n\rightarrow \infty$. 

Next, we analyze the decoding error $\mathcal{E}^d$. For $b\in[2:B]$, let $\mathcal{E}_b^d(m_{b-1})$ denote the event of a failed decoding process in block $b$ given the past estimated index $m_{b-1}$. Furthermore, let $\mathcal{E}_{b-1}^{d,1}(m_{b-1})$ denote the event that the sequence $Y_{1,b-1}^n$ given $m_{b-1}$ is not jointly typical, i.e., $\mathcal{E}_{b-1}^{d,1}(m_{b-1}) = \{(Y_{1,b-1}^n, W_{1,b-1}^n(m_{b-1}), U_2^n(m_{b-1}, M_b)) \notin \mathcal{A}_{\varepsilon}^{(n)}(\mathcal{P}_{Y_1, W_1, U_2})\}$ and let $\mathcal{E}_b^{d,2}$ denote that the current sequence $Y_{1,b}^n$ is not jointly typical, i.e., 
 $\mathcal{E}_b^{d,2} = \{(Y_{1,b}^n, W_1^n(M_b))\notin\mathcal{A}_\varepsilon^{(n)}(\mathcal{P}_{Y_1,W_1})\}$. Note that, if a sequence with fewer terms $(Y_{1,b}^n,W_1^n(m_b))$ is atypical, it implies that a sequence with more terms $(Y_{1,b}^n,W_1^n(m_b), W_2^n(m_b,m_{b+1}))$ is also atypical. Therefore, we have  $\mathcal{E}_b^{d,2}\subset\mathcal{E}_{b}^{d,1}(M_b)$. Then, the decoding error probability given no past decoding error or encoding error $\mathbb{P}(\mathcal{E}_b^d(m_{b-1})\mid\cap_{\beta=2}^{b-1}\Bar{\mathcal{E}}_{\beta}^d(m_{\beta-1})\cap \Bar{\mathcal{E}}^e)$ can be upperbounded by 
\begin{align}
                &\mathbb{P}(\mathcal{E}_b^d(m_{b-1})\mid \cap_{\beta=2}^{b-1}\Bar{\mathcal{E}}_{\beta}^d(m_{\beta-1})\cap \Bar{\mathcal{E}}^e\cap \Bar{\mathcal{E}}_{b-1}^{d,1}(m_{b-1})\cap\Bar{\mathcal{E}}_b^{d,2}) \nonumber \\
                &+ \mathbb{P}(\mathcal{E}_{b-1}^{d,1}(m_{b-1})\cup\mathcal{E}_b^{d,2}\mid\cap_{\beta=2}^{b-1}\Bar{\mathcal{E}}_{\beta}^d(m_{\beta-1})\cap \Bar{\mathcal{E}}^e)\label{decoding error union bound w-f}
            \end{align}
using the union bound. The first term of \eqref{decoding error union bound w-f} can be upperbounded by
\begin{align*}
                &\mathbb{P}(\mathcal{E}_b^d(m_{b-1})\mid \cap_{\beta=2}^{b-1}\Bar{\mathcal{E}}_{\beta}^d(m_{\beta-1})\cap \Bar{\mathcal{E}}^e\cap \Bar{\mathcal{E}}_{b-1}^{d,1}(m_{b-1})\cap\Bar{\mathcal{E}}_b^{d,2})\\
                & = \mathbb{P}\left( \exists m^\prime\neq M_{b} \text{, s.t. }\{ W_1^n(m^\prime)\in\mathcal{A}_\varepsilon^{(n)}(W_1|Y_{1,b}^n)\}\cap\right. \\
                &\quad\quad\left.\{ U_2^n(M_{b-1}, m^\prime)\in\mathcal{A}_\varepsilon^{(n)}(U_2^n|Y_{1,b-1}^n, W_{1,b-1}^n)\right)\nonumber\\
                & \stackrel{\text{(a)}}{\leq} 2^{-n\varepsilon},
            \end{align*}
where (a) is due to the joint Packing Lemma \cite{elgamal2011nit} in order to satisfy both conditions at the same time given that $R\leq I(W_1;Y_1) + I(U_2;Y_1|W_1) - 7\varepsilon$. Moreover, as for the second term, we have
\begin{align*}
                &\mathbb{P}(\mathcal{E}_{b-1}^{d,1}(m_{b-1})\cup\mathcal{E}_b^{d,2}| \cap_{\beta=2}^{b-1}\Bar{\mathcal{E}}_{\beta}^d(m_{\beta-1})\cap \Bar{\mathcal{E}}^e)\\
                &\stackrel{\text{(b)}}{\leq} \sum_{i=b-1,b}\mathbb{P}((Y_{1,i}^n, W_{1,i}^n,U_{2,i}^n)\notin \mathcal{A}_\varepsilon^{(n)} \mid \\
                &\qquad\qquad\qquad (X_{0,i}^n, Y_{1,i}^n, W_{1,i}^n, U_{2,i}^n)\in\mathcal{B}_\varepsilon^{(n)} )
                \\
                &\xrightarrow[n\rightarrow\infty]{\text{(c)}} 0,
            \end{align*}
where (b) comes from the independence of each Markov block and the fact that $\{(Y_{1,b}^n, W_{1,b}^n)\notin\mathcal{A}_\varepsilon^{(n)}\}\subset \{(Y_{1,b}^n,W_{1,b}^n,U_{2,b}^n)\notin\mathcal{A}_\varepsilon^{(n)}\} $, and (c) comes from that $(Y_{1,b}^n,W_{1,b}^n,U_{2,b}^n)\subseteq(X_{1,b}^n,Y_{1,b}^n,W_{1,b}^n,U_{2,b}^n)$ where the latter sequence is already jointly typical given no encoding errors.

Therefore, by the above arguments, both the encoding and decoding errors vanish asymptotically. Since the generated sequences are jointly typical with high probability, the cost analysis and the existence of a desired control design $c\in\mathcal{C}_{\mathsf{f}}(Bn)$ follow directly from the no-feedback achievability proof detailed in Appendix \ref{app: ach proof}.

\section{Converse Proof of Theorem \ref{theorem: c-n w-f}}\label{app: converse proof c-n w-f}
We consider a control design with feedback $c\in \mathcal{C}_{\mathsf{f}}(n)$ of block length $n\in\mathbb N$ that achieves a pair of costs $(P,S)$. According to Csiszár sum identity, we have
\begin{align*}
        &0= \sum_{t=1}^n I(X_0^{t-1}, Y_1^{t-1}; Y_{1,t}|Y_{1,t+1}^n)  \\
        &\qquad\quad -\sum_{t=1}^n I(Y_{1,t+1}^n; X_{0,t}, Y_{1,t}|X_0^{t-1}, Y_1^{t-1})\\
        &\leq \sum_{t=1}^n I(X_0^{t-1}, Y_1^{t-1}, Y_{1,t+1}^n; Y_{1,t}) \\
        &\qquad\quad -\sum_{t=1}^n I(Y_{1,t+1}^n; X_{0,t}, Y_{1,t}|X_0^{t-1}, Y_1^{t-1})\\
        &= \sum_{t=1}^n \bigg[I(X_0^{t-1}, Y_1^{t-1}; Y_{1,t})+I(Y_{1,t+1}^n; Y_{1,t}|X_0^{t-1}, Y_1^{t-1})\bigg]\\
        &\qquad\quad- \sum_{t=1}^n I(Y_{1,t+1}^n; X_{0,t}, Y_{1,t}|X_0^{t-1}, Y_1^{t-1})\\
        &=\! \sum_{t=1}^n\! I(X_0^{t-1}\!,\! Y_1^{t-1};\! Y_{1,t})\!-\!\! \sum_{t=1}^n\! I(Y_{1,t+1}^n;\! X_{0,t} |X_0^{t-1},\! Y_1^{t-1},\! Y_{1,t})\\
        &\stackrel{\text{(a)}}{=}\!  \sum_{t=1}^n\!  I(X_0^{t-1}\! ,\!  Y_1^{t-1};\!  Y_{1,t})\! -\!\! \!  \sum_{t=1}^n \! \bigg[\! I(Y_{1,t+1}^n ;\!  X_{0,t}  |X_0^{t-1}\! , \! Y_1^{t-1}\! ,\!  Y_{1,t}) \\
        &\qquad\quad  +  I(U_{2,t}; X_{0,t}|X_0^{t-1}, Y_1^{t-1}, Y_{1,t}, Y_{1,t+1}^n)\bigg]\\
        &=\sum_{t=1}^n I(X_0^{t-1}, Y_1^{t-1}; Y_{1,t}) \\
        &\qquad\quad - \sum_{t=1}^n I(U_{2,t}, Y_{1,t+1}^n; X_{0,t} |X_0^{t-1}, Y_1^{t-1}, Y_{1,t})\\
        &\leq\! \sum_{t=1}^n\! I(X_0^{t-1}\!,\! Y_1^{t-1};\! Y_{1,t})\! - \!\sum_{t=1}^n\! I(U_{2,t};\! X_{0,t} |X_0^{t-1}\!, \!Y_1^{t-1}\!,\! Y_{1,t})\\
        &\stackrel{\text{(b)}}{=} \sum_{t=1}^n I(W_{1,t}; Y_{1,t}) - \sum_{t=1}^n I(U_{2,t}; X_{0,t} |W_{1,t}, Y_{1,t}).
    \end{align*}
Here, (a) comes from the result of noncausal decoding, which 
induces the Markov chain $U_2^n-\!\!\!\!\minuso\!\!\!\!- Y_1^n    -\!\!\!\!\minuso\!\!\!\!- X_0^n$, which is equivalent to $I(U_2^n;X_0^n|Y_1^n) = 0$. Therefore, according to the chain rule of mutual information, the term $I(U_{2,t}; X_{0,t}|X_0^{t-1}, Y_1^{t-1}, Y_{1,t}, Y_{1,t+1}^n) = 0$, for each $t\in\{1,...,n\}$. (b) comes from the identification of $W_{1,t} = (X_0^{t-1},Y_1^{t-1})$ for each $t\in[1:n]$. Therefore, the introduction of the auxiliary random variable $W_{1,t}$ gives us the following Markov chains:
\begin{itemize}
    \item $X_{0,t}$ is independent of $W_{1,t} = (X_0^{t-1},Y_1^{t-1})$, due to the i.i.d. source and the memoryless property of the channel.
    \item $(X_{1,t},Y_{1,t})-\!\!\!\!\minuso\!\!\!\!- (U_{1,t}, X_{0,t})    -\!\!\!\!\minuso\!\!\!\!- W_{1,t}$, because of the memoryless channel and the symbolwise generation of $X_1$ in \eqref{X1 generation} and the memoryless channel given in \eqref{Y1 generation}. 
    \item  $U_{2,t} -\!\!\!\!\minuso\!\!\!\!- (X_{0,t},Y_{1,t},W_{1,t}) -\!\!\!\!\minuso\!\!\!\!- (U_{1,t},X_{1,t})$ comes from the property of causal encoding function with feedback and the deterministic symbolwise formulation \eqref{Y1 generation}. This Markov chain is proved in Lemma \ref{lemma: c-n w-f Markov chain} in Appendix \ref{subsec: c-n w-f Markov chain}. 
\end{itemize}

Given the above three Markov chains that holds for each time stage, we now define an independent time-mixing random variable $T\sim\text{Unif}\{1,...,n\}$ such that
    \begin{align*}
    &\sum_{t=1}^n I(W_{1,t}; Y_{1,t}) - \sum_{t=1}^n I(U_{2,t}; X_{0,t} |W_{1,t}, Y_{1,t})\nonumber\\
    &=\!\sum_{t=1}^n\! I(W_{1,t}; Y_{1,t}|T=t)\! -\!\! \sum_{t=1}^n\! I(U_{2,t}; X_{0,t} |W_{1,t}, Y_{1,t}, T=t)\\
    &\stackrel{\text{(c)}}{=}n\cdot \brackets{I(W_{1,T}; Y_{1,T}|T) - I(U_{2,T}; X_{0,T} |W_{1,T}, Y_{1,T}, T)}\\
    &\leq n\cdot \brackets{I(W_{1,T},T; Y_{1,T}) -  I(U_{2,T}; X_{0,T} |W_{1,T}, T, Y_{1,T})}\\
    &\stackrel{\text{(d)}}{=} n\cdot \brackets{I(W_{1}; Y_{1}) -  I(U_{2}; X_{0} |W_{1}, Y_{1})}
    \end{align*}
where (c) comes from that we introduce the random variables of $W_{1,T}, Y_{1,T}, U_{2,T}, X_{0,T}$, 
and (d) comes from the identification of $W_1\triangleq (W_{1,T},T),Y_1\triangleq Y_{1,T}, U_2\triangleq U_{2,T}, X_0\triangleq X_{0,T}$.
Therefore, these newly introduced random variables satisfy
\begin{itemize}
    \item $X_0\indep W_1$.
    \item $(X_{1},Y_{1})-\!\!\!\!\minuso\!\!\!\!-(X_{0},U_{1})-\!\!\!\!\minuso\!\!\!\!-W_1$.
    \item $U_2-\!\!\!\!\minuso\!\!\!\!-(X_0,Y_1,W_1)-\!\!\!\!\minuso\!\!\!\!-(U_1,X_1)$.
\end{itemize}
Hence, the joint distribution over the random variables $(X_0,W_1,U_1,X_1,Y_1,U_2)$ decomposes of the form of \eqref{eq: c-n w-f prob result} and satisfies the information constraint \eqref{eq: c-n w-f info result}. Now we reformulate the $n$-stage long-run cost
 \begin{align}
        \gamma_p^n(c) = \mathbb E\sbrackets{\frac{1}{n}\sum_{t=1}^nU_{1,t}^2}=\mathbb E[U_{1}^2]     
    \end{align}
And also, 
\begin{align}
     \gamma_s^n(c)= \mathbb E\sbrackets{\frac{1}{n}\sum_{t=1}^n(X_{1,t}-U_{2,t})^2}= \mathbb E[(X_1 - U_2)^2]
\end{align}

Moreover, from \eqref{eq: achievable cost def}, we also have
\begin{align}
&\mathbb E\Big[\big|P - c_P(U_1^n)\big| + \big|S - c_S(X_1^n, U_2^n)\big|\Big] \leq \varepsilon\nonumber
    \\
     &\Longrightarrow|P - \mathbb E[c_P(U_1^n)]| + |S - \mathbb E[c_S(X_1^n,U_2^n)]| \leq \varepsilon\nonumber
     \\
     &\Longrightarrow |P - \mathbb E \sbrackets{U_1^2}| + |S - \mathbb E \sbrackets{(X_{1} - U_{2})^2}| \leq \varepsilon.
     \label{achievable result}
\end{align}
This is valid for all $\varepsilon>0$. This shows that the sequence $(X_0,W_1,U_1,X_1,Y_1,U_2)$ satisfy \eqref{eq: c-n w-f prob result}, \eqref{eq: c-n w-f info result}, \eqref{eq: c-n w-f cost result}.

\subsection{Proof of the Markov chain}\label{subsec: c-n w-f Markov chain}
\begin{lemma}\label{lemma: c-n w-f Markov chain}
    For every $t\in\{1,...,n\}$, it holds that $$U_{2,t} -\!\!\!\!\minuso\!\!\!\!- (X_{0,t},Y_{1,t},W_{1,t}) -\!\!\!\!\minuso\!\!\!\!- (U_{1,t},X_{1,t}).$$
\end{lemma}
\begin{proof}[Proof of Lemma \ref{lemma: c-n w-f Markov chain}]
    \begin{align*}
        &\mathbb P(U_{2,t}|X_{0,t},Y_{1,t},W_{1,t},U_{1,t},X_{1,t})\\
        &=\mathbb P(U_{2,t}|X_{0}^t,Y_{1}^t,U_{1,t},X_{1,t})\\
        &\stackrel{\text{(e)}}{=}\sum\mathbb P(X_{0,t+1}^n,U_{1,t+1}^n,Y_{1,t+1}^n,U_{2,t}|X_{0}^t, Y_{1}^t,U_{1,t},X_{1,t})\\
        &=\sum \mathbb P(X_{0,t+1}^n|X_{0}^t,Y_{1}^t,U_{1,t},X_{1,t})\tag{\text{f1}}\\
        &\qquad \times \mathbb P(U_{1,t+1}^n,Y_{1,t+1}^n|X_{0}^n,Y_{1}^t,U_{1,t},X_{1,t})\tag{\text{g1}}\\
        &\qquad \times \mathbb P(U_{2,t}|X_{0}^n,Y_{1}^t,U_{1,t},X_{1,t},U_{1,t+1}^n,Y_{1,t+1}^n),\tag{\text{h1}}
    \end{align*}
where in equality (e) the sum is over the sequences $(X_{0,t+1}^n,U_{1,t+1}^n,Y_{1,t+1}^n)$. Now, each probability term above
\begin{itemize}
    \item  $\text{(f1)}= \mathbb P(X_{0,t+1}^n|X_{0}^t,Y_{1}^t)$, because the source is generated i.i.d., and the encoder is causal.
    \item (g1) decomposes as follows
    \begin{align*}
    &\mathbb P(U_{1,t+1}^n,Y_{1,t+1}^n|X_{0}^n,Y_{1}^t,U_{1,t},X_{1,t})\\
        &=\prod_{i=t+1}^n\mathbb P(U_{1,i},Y_{1,i}|U_{1,t+1}^{i-1}, Y_{1,t+1}^{i-1},X_{0}^n,Y_{1}^t,U_{1,t},X_{1,t})\\
        &=\prod_{i=t+1}^n\mathbb P(U_{1,i}|U_{1,t+1}^{i-1}, Y_{1,t+1}^{i-1},X_{0}^n,Y_{1}^t,U_{1,t},X_{1,t})\tag{\text{i1}}\\
        &\quad \times\mathbb P(Y_{1,i}|U_{1,t+1}^{i-1}, Y_{1,t+1}^{i-1},X_{0}^n,Y_{1}^t,U_{1,t},X_{1,t},U_{1,i})\tag{\text{j1}}\\
        &=\prod_{i=t+1}^n\mathbb P(U_{1,i}|U_{1,t+1}^{i-1}, Y_{1,t+1}^{i-1},X_{0}^n,Y_{1}^t)\tag{\text{i2}}\\
        &\quad\times\mathbb P(Y_{1,i}|U_{1,t+1}^{i-1}, Y_{1,t+1}^{i-1},X_{0}^n,Y_{1}^t,U_{1,i})\tag{\text{j2}}\\
        &=\mathbb P(U_{1,t+1}^n,Y_{1,t+1}^n|X_{0}^n,Y_{1}^t)
    \end{align*}
    where (i1) $\rightarrow$ (i2) comes from the causal encoding function $f^{(\mathsf{f},i)}_{U_{1,i}|X_0^i,Y_1^{i-1}}$, and (j1) $\rightarrow$ (j2) is because of the memoryless channel $Y_{1,i} = X_{0,i}+U_{1,i} + Z_i$.
    \item (h1) $=\mathbb P(U_{2,t}|X_{0}^n,Y_{1}^t,U_{1,t+1}^n,Y_{1,t+1}^n)$ comes from the noncausal decoder $g_{U_{2}^n|Y_1^n}$. Hence $U_{2,t}$ is fully determined by $(Y_{1}^t,Y_{1,t+1}^n)$.
\end{itemize}
Therefore, we obtain that 
\begin{align*}
    &\mathbb P(U_{2,t}|X_{0,t},Y_{1,t},W_{1,t},U_{1,t},X_{1,t})\\
    &=\sum_{X_{0,t+1}^n,U_{1,t+1}^n,Y_{1,t+1}^n} \mathbb P(X_{0,t+1}^n|X_{0}^t,Y_{1}^t)\\
        &\qquad \times \mathbb P(U_{1,t+1}^n,Y_{1,t+1}^n|X_{0}^n,Y_{1}^t)\\
        &\qquad \times \mathbb P(U_{2,t}|X_{0}^n,Y_{1}^t,U_{1,t+1}^n,Y_{1,t+1}^n)\\
        &=\mathbb P(U_{2,t}|X_{0,t},Y_{1,t},W_{1,t}).
\end{align*}
\end{proof}

\section{Proof of Theorem \ref{thm: ZEC feedback cost}}\label{app: proof of ZEC feedback info constraint}
The power consumption induced by the scheme \eqref{eq: ZEC feedback scheme} is
\begin{align*}
    P = \mathbb E[U_1^2] &= \mathbb E[(W_1 + a\cdot \text{sign}(X_0)+bX_0)^2]\\
   &=\mathbb E[(W_1)^2]+\mathbb E[(a\cdot \text{sign}(X_0)+ bX_0)^2]  \\
   &= V_1 + a^2 + b^2 Q + 2ab\cdot\mathbb E[|X_0|]\\
   &=  V_1 + a^2 + b^2 Q + 2ab\sqrt{\frac{2Q}{\pi}}.
\end{align*}

The information constraint \eqref{eq: c-n w-f info result} becomes
\begin{align}
    &I(W_1; Y_1) - I(U_2; X_0 | W_1,Y_1)\nonumber\\
    & = h(Y_1) - h(Y_1|W_1) - h(U_2|W_1,Y_1) + h(U_2|X_0,W_1,Y_1)\nonumber\\
    &=h(Y_1) - h(Y_1|W_1) - h(X_1|W_1,Y_1) + \underbrace{h(X_1|X_0,W_1,Y_1)}_{=0}\nonumber\\
    &=h(Y_1) - h(X_1,Y_1|W_1)\nonumber\\
    &= h(Y_1)-h(X_1|W_1)-h(Y_1|X_1,W_1)\nonumber\\
    &= h(Y_1)-h(X_1|W_1)-h(Z_1|X_1,W_1)\nonumber\\
    &= h(Y_1)-h(X_1|W_1)-h(Z_1)\nonumber\\
    &= h(Y_1)-h(X_1|W_1)-\frac{1}{2}\log 2\pi eN,\label{eq: info constraint ZEC with feedback}
\end{align}
where in the second equality, we replace $U_2$ by $X_1$ given the fact that \eqref{eq: U_2=X_1}. Now, we examine the (conditional) PDFs of $Y_1$ and $X_1|W_1$ respectively.

Given a fixed value of $X_0 = x$, the conditional distribution $Y_1|X_0=x\sim\mathcal{N}(a\cdot\text{sign}(x) + \delta x, V_1+N)$, where we denote $\delta \triangleq b+1$ for simplicity. Hence,
\begin{align}
   f_{Y_1}(y) &= \int_{-\infty}^\infty f_{Y_1|X_0}(y|x)f_{X_0}(x) dx    \nonumber\\
   &= \int_{-\infty}^0 f_{Y_1|X_0}(y|x)f_{X_0}(x) dx \nonumber\\
   &\qquad + \int_{0}^\infty f_{Y_1|X_0}(y|x)f_{X_0}(x) dx\nonumber\\
    &=\frac{1}{2}\mathcal{SN}(a, \sigma, \alpha)+ \frac{1}{2}\mathcal{SN}(-a, \sigma, -\alpha)\label{eq: pdf Y_1, gaussian W_1}
\end{align}
is a mixture of two skew-normal distributions, where $\sigma^2 = V_1 + \delta^2Q + N, \alpha = \frac{\sqrt{Q}\delta}{\sqrt{V_1+N}}$.

On the other hand, given that $X_0\indep W_1$, we have
\begin{align*}
    &h(X_1|W_1)\\
    &= h(W_1 + a\cdot \text{sign}(X_0) + \delta X_0|W_1)\\
    &= h( a\cdot \text{sign}(X_0) + \delta X_0|W_1)\\
    &= h( a\cdot \text{sign}(X_0) + \delta X_0).
\end{align*}
Therefore, by denoting $X \triangleq a\cdot \text{sign}(X_0) + \delta X_0$, we obtain that
\begin{align*}
     f_X(x) &= \frac{1}{\sqrt{\delta^2Q}}\phi\brackets{\frac{x-a}{\sqrt{\delta^2Q}}}\mathbf{1}_{\text{sign}(\delta)\cdot (x-a)\geq 0} \\
    &\qquad + \frac{1}{\sqrt{\delta^2Q}}\phi\brackets{\frac{x+a}{\sqrt{\delta^2Q}}}\mathbf{1}_{\text{sign}(\delta)\cdot (x+a)\leq 0},
\end{align*}
which concludes our proof for Theorem \ref{thm: ZEC feedback cost}.\hfill \qedsymbol{}

\section{Proof of Corollary \ref{cor: ZECZEP}}\label{app: proof of cor ZECZEP}
We look at the special case of $V_1 = 0, a = 0, b = 0$. In this way, the ZEC-f scheme \eqref{eq: ZEC feedback scheme} boils down to
\begin{align*}
    U_1 &= 0,\\
    X_1 & = X_0,\\
    Y_1 &= X_0+Z_1.
\end{align*}

The power cost now becomes
\begin{align*}
    P = \mathbb E[U_1^2] = 0.
\end{align*}

Moreover, the information constraint that we derived in \eqref{eq: info constraint ZEC-f} now becomes
\begin{align*}
    &h(X_0+Z_1) - h(X_0) - \frac{1}{2}\log 2\pi eN\\
    &=\frac{1}{2}\log 2\pi e(Q+N) - \frac{1}{2}\log 2\pi eQ - \frac{1}{2}\log 2\pi eN\\
    &=\frac{1}{2}\log\frac{Q+N}{2\pi e QN}.
\end{align*}
In conclusion, if the information constraint \eqref{eq: c-n w-f info result} is nonnegative if and only if \eqref{eq:ZECZEP} is satisfied. \hfill\qedsymbol{}




\bibliographystyle{IEEEtran}
\bibliography{IEEEabrv,main}

\begin{thebibliography}{10}
\providecommand{\url}[1]{#1}
\csname url@samestyle\endcsname
\providecommand{\newblock}{\relax}
\providecommand{\bibinfo}[2]{#2}
\providecommand{\BIBentrySTDinterwordspacing}{\spaceskip=0pt\relax}
\providecommand{\BIBentryALTinterwordstretchfactor}{4}
\providecommand{\BIBentryALTinterwordspacing}{\spaceskip=\fontdimen2\font plus
\BIBentryALTinterwordstretchfactor\fontdimen3\font minus \fontdimen4\font\relax}
\providecommand{\BIBforeignlanguage}[2]{{%
\expandafter\ifx\csname l@#1\endcsname\relax
\typeout{** WARNING: IEEEtran.bst: No hyphenation pattern has been}%
\typeout{** loaded for the language `#1'. Using the pattern for}%
\typeout{** the default language instead.}%
\else
\language=\csname l@#1\endcsname
\fi
#2}}
\providecommand{\BIBdecl}{\relax}
\BIBdecl

\bibitem{zhao2024coordination}
M.~Zhao, M.~Le~Treust, and T.~J. Oechtering, ``Coordination coding with causal encoder for vector-valued {Witsenhausen} counterexample,'' in \emph{2024 IEEE International Symposium on Information Theory (ISIT)}.\hskip 1em plus 0.5em minus 0.4em\relax IEEE, 2024, pp. 3255--3260.

\bibitem{zhao2024CDC}
M.~Zhao, T.~J. Oechtering, and M.~Le~Treust, ``Optimal {Gaussian} strategies for vector-valued {Witsenhausen} counterexample with non-causal state estimator,'' in \emph{2024 IEEE 63rd Conference on Decision and Control (CDC)}.\hskip 1em plus 0.5em minus 0.4em\relax IEEE, 2024, pp. 8447--8452.

\bibitem{zhao2024causal}
M.~Zhao, M.~Le~Treust, and T.~J. Oechtering, ``Causal vector-valued {Witsenhausen} counterexamples with feedback,'' in \emph{2024 IEEE Information Theory Workshop (ITW)}.\hskip 1em plus 0.5em minus 0.4em\relax IEEE, 2024, pp. 687--692.

\bibitem{zhao2025zeroestimationcoststrategy}
\BIBentryALTinterwordspacing
M.~Zhao, T.~J. Oechtering, and M.~Le~Treust, ``Zero estimation cost strategy for {Witsenhausen} counterexample with causal encoder,'' 2025. [Online]. Available: \url{https://arxiv.org/abs/2501.18308}
\BIBentrySTDinterwordspacing

\bibitem{witsenhausen1968}
\BIBentryALTinterwordspacing
H.~S. Witsenhausen, ``A counterexample in stochastic optimum control,'' \emph{SIAM Journal on Control}, vol.~6, no.~1, pp. 131--147, 1968. [Online]. Available: \url{https://doi.org/10.1137/0306011}
\BIBentrySTDinterwordspacing

\bibitem{witsenhausen1971separation}
H.~Witsenhausen, ``Separation of estimation and control for discrete time systems,'' \emph{Proceedings of the IEEE}, vol.~59, no.~11, pp. 1557--1566, 1971.

\bibitem{bansal1986stochastic}
R.~Bansal and T.~Ba{\c{s}}ar, ``Stochastic teams with nonclassical information revisited: When is an affine law optimal?'' in \emph{1986 American Control Conference}.\hskip 1em plus 0.5em minus 0.4em\relax IEEE, 1986, pp. 45--50.

\bibitem{gupta2015existence}
A.~Gupta, S.~Y{\"u}ksel, T.~Ba{\c{s}}ar, and C.~Langbort, ``On the existence of optimal policies for a class of static and sequential dynamic teams,'' \emph{SIAM Journal on Control and Optimization}, vol.~53, no.~3, pp. 1681--1712, 2015.

\bibitem{silva2010control}
E.~I. Silva, G.~C. Goodwin, and D.~E. Quevedo, ``Control system design subject to {SNR} constraints,'' \emph{Automatica}, vol.~46, no.~2, pp. 428--436, 2010.

\bibitem{martins2005fundamental}
N.~C. Martins and M.~A. Dahleh, ``Fundamental limitations of performance in the presence of finite capacity feedback,'' in \emph{Proceedings of the 2005, American Control Conference, 2005.}\hskip 1em plus 0.5em minus 0.4em\relax IEEE, 2005, pp. 79--86.

\bibitem{freudenberg2008feedback}
J.~S. Freudenberg and R.~H. Middleton, ``Feedback control performance over a noisy communication channel,'' in \emph{2008 IEEE Information Theory Workshop}.\hskip 1em plus 0.5em minus 0.4em\relax IEEE, 2008, pp. 232--236.

\bibitem{derpich2012improved}
M.~S. Derpich and J.~{\O{}}stergaard, ``Improved upper bounds to the causal quadratic rate-distortion function for {Gaussian} stationary sources,'' \emph{IEEE Transactions on Information Theory}, vol.~58, no.~5, pp. 3131--3152, 2012.

\bibitem{Stavrou2022sequential}
P.~A. Stavrou, M.~Skoglund, and T.~Tanaka, ``Sequential source coding for stochastic systems subject to finite rate constraints,'' \emph{IEEE Transactions on Automatic Control}, vol.~67, no.~8, pp. 3822--3835, 2022.

\bibitem{yuksel2013stochastic}
S.~Y{\"u}ksel and T.~Ba{\c{s}}ar, \emph{Stochastic networked control systems: Stabilization and optimization under information constraints}.\hskip 1em plus 0.5em minus 0.4em\relax Springer Science \& Business Media, 2013.

\bibitem{agrawal2015implicit}
A.~Agrawal, F.~Danard, B.~Larrousse, and S.~Lasaulce, ``Implicit coordination in two-agent team problems with continuous action sets. application to the {Witsenhausen} cost function,'' in \emph{2015 European Control Conference (ECC)}.\hskip 1em plus 0.5em minus 0.4em\relax IEEE, 2015, pp. 1854--1859.

\bibitem{Akyol2017information}
E.~Akyol, C.~Langbort, and T.~Başar, ``Information-theoretic approach to strategic communication as a hierarchical game,'' \emph{Proceedings of the IEEE}, vol. 105, no.~2, pp. 205--218, 2017.

\bibitem{charalambous2017hierarchical}
C.~D. Charalambous, C.~Kourtellaris, and I.~Tzortzis, ``Hierarchical optimality of linear controllers-encoders-decoders operating at control-coding capacity of {LQG} control systems,'' in \emph{2017 IEEE 56th Annual Conference on Decision and Control (CDC)}.\hskip 1em plus 0.5em minus 0.4em\relax IEEE, 2017, pp. 3682--3687.

\bibitem{wiese2018secure}
M.~Wiese, T.~J. Oechtering, K.~H. Johansson, P.~Papadimitratos, H.~Sandberg, and M.~Skoglund, ``Secure estimation and zero-error secrecy capacity,'' \emph{IEEE Transactions on Automatic Control}, vol.~64, no.~3, pp. 1047--1062, 2018.

\bibitem{ho1980another}
Y.~Ho and T.~Chang, ``Another look at the nonclassical information structure problem,'' \emph{IEEE Transactions on Automatic Control}, vol.~25, no.~3, pp. 537--540, 1980.

\bibitem{yuksel2009stochastic}
S.~Y{\"u}ksel, ``Stochastic nestedness and the belief sharing information pattern,'' \emph{IEEE Transactions on Automatic Control}, vol.~54, no.~12, pp. 2773--2786, 2009.

\bibitem{ho2008review}
Y.-C. Ho, ``Review of the {Witsenhausen} problem,'' in \emph{2008 47th IEEE Conference on Decision and Control}, 2008, pp. 1611--1613.

\bibitem{baglietto1997nonlinear}
M.~Baglietto, T.~Parisini, and R.~Zoppoli, ``Nonlinear approximations for the solution of team optimal control problems,'' in \emph{Proceedings of the 36th IEEE Conference on Decision and Control}, vol.~5.\hskip 1em plus 0.5em minus 0.4em\relax IEEE, 1997, pp. 4592--4594.

\bibitem{mitter2007information}
S.~Mitter and A.~Sahai, ``Information and control: {Witsenhausen} revisited,'' in \emph{Learning, control and hybrid systems: Festschrift in honor of Bruce Allen Francis and Mathukumalli Vidyasagar on the occasion of their 50th birthdays}.\hskip 1em plus 0.5em minus 0.4em\relax Springer, 2007, pp. 281--293.

\bibitem{tseng2017alocal}
S.-H. Tseng and A.~Tang, ``A local search algorithm for the {Witsenhausen's} counterexample,'' in \emph{2017 IEEE 56th Annual Conference on Decision and Control (CDC)}, 2017, pp. 5014--5019.

\bibitem{baglietto2001numerical}
M.~Baglietto, T.~Parisini, and R.~Zoppoli, ``Numerical solutions to the {W}itsenhausen counterexample by approximating networks,'' \emph{IEEE Transactions on Automatic Control}, vol.~46, no.~9, pp. 1471--1477, 2001.

\bibitem{lee2001hierarchical}
J.~Lee, E.~Lau, and Y.-C. Ho, ``The {Witsenhausen} counterexample: a hierarchical search approach for nonconvex optimization problems,'' \emph{IEEE Transactions on Automatic Control}, vol.~46, no.~3, pp. 382--397, 2001.

\bibitem{saldi2016finite}
N.~Saldi, S.~Y{\"u}ksel, and T.~Linder, ``Finite model approximations and asymptotic optimality of quantized policies in decentralized stochastic control,'' \emph{IEEE Transactions on Automatic Control}, vol.~62, no.~5, pp. 2360--2373, 2016.

\bibitem{ajorlou2016slopey}
A.~Ajorlou and A.~Jadbabaie, ``Slopey quantizers are locally optimal for {Witsenhausen's} counterexample,'' in \emph{2016 IEEE 55th Conference on Decision and Control (CDC)}.\hskip 1em plus 0.5em minus 0.4em\relax IEEE, 2016, pp. 3566--3571.

\bibitem{wu2011transport}
Y.~Wu and S.~Verdú, ``Witsenhausen's counterexample: A view from optimal transport theory,'' in \emph{2011 50th IEEE Conference on Decision and Control and European Control Conference}, 2011, pp. 5732--5737.

\bibitem{li2009learning}
N.~Li, J.~R. Marden, and J.~S. Shamma, ``Learning approaches to the {Witsenhausen} counterexample from a view of potential games,'' in \emph{Proceedings of the 48h IEEE Conference on Decision and Control (CDC) held jointly with 2009 28th Chinese Control Conference}, 2009, pp. 157--162.

\bibitem{telsang2025generaldecentralizedstochasticoptimal}
\BIBentryALTinterwordspacing
B.~Telsang, S.~Djouadi, and C.~D. Charalambous, ``General decentralized stochastic optimal control via change of measure: Applications to the {Witsenhausen} counterexample,'' 2025. [Online]. Available: \url{https://arxiv.org/abs/2509.11013}
\BIBentrySTDinterwordspacing

\bibitem{mceneaney2015optimization}
W.~M. McEneaney and S.~H. Han, ``Optimization formulation and monotonic solution method for the {Witsenhausen} problem,'' \emph{Automatica}, vol.~55, pp. 55--65, 2015.

\bibitem{karlsson2011iterative}
J.~Karlsson, A.~Gattami, T.~J. Oechtering, and M.~Skoglund, ``Iterative source-channel coding approach to {Witsenhausen's} counterexample,'' in \emph{Proceedings of the 2011 American Control Conference}, 2011, pp. 5348--5353.

\bibitem{shannon1948mathematical}
C.~E. Shannon, ``A mathematical theory of communication,'' \emph{The Bell system technical journal}, vol.~27, no.~3, pp. 379--423, 1948.

\bibitem{sahai2006necessity}
A.~Sahai and S.~Mitter, ``The necessity and sufficiency of anytime capacity for stabilization of a linear system over a noisy communication link—part i: Scalar systems,'' \emph{IEEE transactions on Information Theory}, vol.~52, no.~8, pp. 3369--3395, 2006.

\bibitem{Grover2010Witsenhausen}
\BIBentryALTinterwordspacing
P.~Grover and A.~Sahai, ``{Witsenhausen's} counterexample as assisted interference suppression,'' \emph{International Journal of Systems, Control and Communications}, vol.~2, pp. 197--237, 2010. [Online]. Available: \url{https://api.semanticscholar.org/CorpusID:5612796}
\BIBentrySTDinterwordspacing

\bibitem{cover1999elements}
T.~M. Cover and J.~A. Thomas, \emph{Elements of information theory}.\hskip 1em plus 0.5em minus 0.4em\relax John Wiley \& Sons, 1999.

\bibitem{costa1983writing}
M.~Costa, ``Writing on dirty paper (corresp.),'' \emph{IEEE transactions on information theory}, vol.~29, no.~3, pp. 439--441, 1983.

\bibitem{kim2008state}
Y.-H. Kim, A.~Sutivong, and T.~M. Cover, ``State amplification,'' \emph{IEEE Transactions on Information Theory}, vol.~54, no.~5, pp. 1850--1859, 2008.

\bibitem{merhav2007information}
N.~Merhav and S.~Shamai, ``Information rates subject to state masking,'' \emph{IEEE Transactions on Information Theory}, vol.~53, no.~6, pp. 2254--2261, 2007.

\bibitem{grover2009finite}
P.~Grover, A.~Sahai, and S.~Y. Park, ``The finite-dimensional {Witsenhausen} counterexample,'' in \emph{2009 7th International Symposium on Modeling and Optimization in Mobile, Ad Hoc, and Wireless Networks}.\hskip 1em plus 0.5em minus 0.4em\relax IEEE, 2009, pp. 1--10.

\bibitem{choudhuri2012witsenhausen}
C.~Choudhuri and U.~Mitra, ``On {Witsenhausen's} counterexample: the asymptotic vector case,'' in \emph{2012 IEEE Information Theory Workshop}.\hskip 1em plus 0.5em minus 0.4em\relax IEEE, 2012, pp. 162--166.

\bibitem{grover2009generalized}
P.~Grover, S.~Y. Park, and A.~Sahai, ``On the generalized {Witsenhausen} counterexample,'' in \emph{Proceedings of the Allerton Conference on Communication, Control, and Computing, Monticello, IL}, 2009.

\bibitem{grover2015information}
P.~Grover, A.~B. Wagner, and A.~Sahai, ``Information embedding and the triple role of control,'' \emph{IEEE Transactions on Information Theory}, vol.~61, no.~4, pp. 1539--1549, 2015.

\bibitem{grover2010witsenhausenCDC}
P.~Grover and A.~Sahai, ``Is {Witsenhausen's} counterexample a relevant toy?'' in \emph{49th IEEE Conference on Decision and Control (CDC)}.\hskip 1em plus 0.5em minus 0.4em\relax IEEE, 2010, pp. 585--590.

\bibitem{grover2010distributed}
------, ``Distributed signal cancelation inspired by {Witsenhausen's} counterexample,'' in \emph{2010 IEEE International Symposium on Information Theory}.\hskip 1em plus 0.5em minus 0.4em\relax IEEE, 2010, pp. 151--155.

\bibitem{grover2010implicit}
------, ``Implicit and explicit communication in decentralized control,'' in \emph{2010 48th Annual Allerton Conference on Communication, Control, and Computing (Allerton)}.\hskip 1em plus 0.5em minus 0.4em\relax IEEE, 2010, pp. 278--285.

\bibitem{grover2011source}
------, ``The “source-simplification” aspect of signaling,'' in \emph{2011 IEEE International Symposium on Information Theory Proceedings}.\hskip 1em plus 0.5em minus 0.4em\relax IEEE, 2011, pp. 2432--2436.

\bibitem{martins2006witsenhausen}
N.~C. Martins, ``{Witsenhausen's} counter example holds in the presence of side information,'' in \emph{Proceedings of the 45th IEEE Conference on Decision and Control}.\hskip 1em plus 0.5em minus 0.4em\relax IEEE, 2006, pp. 1111--1116.

\bibitem{shoarinejad2002stochastic}
K.~Shoarinejad, J.~L. Speyer, and I.~Kanellakopoulos, ``A stochastic decentralized control problem with noisy communication,'' \emph{SIAM Journal on Control and optimization}, vol.~41, no.~3, pp. 975--990, 2002.

\bibitem{cuff2011coordination}
P.~Cuff and L.~Zhao, ``Coordination using implicit communication,'' in \emph{2011 IEEE Information Theory Workshop}.\hskip 1em plus 0.5em minus 0.4em\relax IEEE, 2011, pp. 467--471.

\bibitem{raginsky2012empirical}
M.~Raginsky, ``Empirical processes, typical sequences, and coordinated actions in standard borel spaces,'' \emph{IEEE Transactions on Information Theory}, vol.~59, no.~3, pp. 1288--1301, 2012.

\bibitem{cuff2010coordination}
P.~W. Cuff, H.~H. Permuter, and T.~M. Cover, ``Coordination capacity,'' \emph{IEEE Transactions on Information Theory}, vol.~56, no.~9, pp. 4181--4206, 2010.

\bibitem{Treust2017joint}
M.~Le~Treust, ``Joint empirical coordination of source and channel,'' \emph{IEEE Transactions on Information Theory}, vol.~63, no.~8, pp. 5087--5114, 2017.

\bibitem{le2015empirical}
------, ``Empirical coordination with two-sided state information and correlated source and state,'' in \emph{2015 IEEE International Symposium on Information Theory (ISIT)}.\hskip 1em plus 0.5em minus 0.4em\relax IEEE, 2015, pp. 466--470.

\bibitem{le2015coding}
------, ``Coding theorems for empirical coordination: Technical report,'' 2015.

\bibitem{Treust2024power}
M.~Le~Treust and T.~J. Oechtering, ``Power-estimation trade-off of vector-valued {Witsenhausen} counterexample with causal decoder,'' \emph{IEEE Transactions on Information Theory}, vol.~70, no.~3, pp. 1588--1609, 2024.

\bibitem{shannonstrategy1958}
C.~E. Shannon, ``Channels with side information at the transmitter,'' \emph{IBM Journal of Research and Development}, vol.~2, no.~4, pp. 289--293, 1958.

\bibitem{choudhuri2013causal}
C.~Choudhuri, Y.-H. Kim, and U.~Mitra, ``Causal state communication,'' \emph{IEEE Transactions on Information Theory}, vol.~59, no.~6, pp. 3709--3719, 2013.

\bibitem{chia2013estimation}
Y.-K. Chia, R.~Soundararajan, and T.~Weissman, ``Estimation with a helper who knows the interference,'' \emph{IEEE transactions on information theory}, vol.~59, no.~11, pp. 7097--7117, 2013.

\bibitem{keshet2008channel}
G.~Keshet, Y.~Steinberg, N.~Merhav \emph{et~al.}, ``Channel coding in the presence of side information,'' \emph{Foundations and Trends{\textregistered} in Communications and Information Theory}, vol.~4, no.~6, pp. 445--586, 2008.

\bibitem{schalkwijk1966coding}
J.~Schalkwijk and T.~Kailath, ``A coding scheme for additive noise channels with feedback--i: No bandwidth constraint,'' \emph{IEEE Transactions on Information Theory}, vol.~12, no.~2, pp. 172--182, 1966.

\bibitem{nakibouglu2008error}
B.~Nakibo{\u{g}}lu and R.~G. Gallager, ``Error exponents for variable-length block codes with feedback and cost constraints,'' \emph{IEEE Transactions on Information Theory}, vol.~54, no.~3, pp. 945--963, 2008.

\bibitem{gaarder1975capacity}
N.~Gaarder and J.~Wolf, ``The capacity region of a multiple-access discrete memoryless channel can increase with feedback (corresp.),'' \emph{IEEE Transactions on Information Theory}, vol.~21, no.~1, pp. 100--102, 1975.

\bibitem{ozarow1984capacity}
L.~Ozarow, ``The capacity of the white {Gaussian} multiple access channel with feedback,'' \emph{IEEE Transactions on Information Theory}, vol.~30, no.~4, pp. 623--629, 1984.

\bibitem{ozarow1984achievable}
L.~Ozarow and S.~Leung-Yan-Cheong, ``An achievable region and outer bound for the {Gaussian} broadcast channel with feedback (corresp.),'' \emph{IEEE Transactions on Information Theory}, vol.~30, no.~4, pp. 667--671, 1984.

\bibitem{dueck1979capacity}
G.~Dueck, ``The capacity region of the two-way channel can exceed the inner bound,'' \emph{Information and Control}, vol.~40, no.~3, pp. 258--266, 1979.

\bibitem{Letreust2015empirical}
M.~Le~Treust, ``Empirical coordination with channel feedback and strictly causal or causal encoding,'' in \emph{2015 IEEE International Symposium on Information Theory (ISIT)}, 2015.

\bibitem{Letreust2021state}
M.~Le~Treust and M.~R. Bloch, ``State leakage and coordination with causal state knowledge at the encoder,'' \emph{IEEE Transactions on Information Theory}, vol.~67, no.~2, pp. 805--823, 2021.

\bibitem{kay1993fundamentals}
S.~M. Kay, \emph{Fundamentals of statistical signal processing: estimation theory}.\hskip 1em plus 0.5em minus 0.4em\relax Prentice-Hall, Inc., 1993.

\bibitem{skoglund2006hybrid}
M.~Skoglund, N.~Phamdo, and F.~Alajaji, ``Hybrid digital--analog source--channel coding for bandwidth compression/expansion,'' \emph{IEEE Transactions on Information Theory}, vol.~52, no.~8, pp. 3757--3763, 2006.

\bibitem{huber2008entropy}
M.~F. Huber, T.~Bailey, H.~Durrant-Whyte, and U.~D. Hanebeck, ``On entropy approximation for {Gaussian} mixture random vectors,'' in \emph{2008 IEEE International Conference on Multisensor Fusion and Integration for Intelligent Systems}.\hskip 1em plus 0.5em minus 0.4em\relax IEEE, 2008, pp. 181--188.

\bibitem{kim2015entropy}
S.~M. Kim, T.~T. Do, T.~J. Oechtering, and G.~Peters, ``On the entropy computation of large complex {Gaussian} mixture distributions,'' \emph{IEEE Transactions on Signal Processing}, vol.~63, no.~17, pp. 4710--4723, 2015.

\bibitem{bross2017rate}
S.~I. Bross and A.~Lapidoth, ``The rate-and-state capacity with feedback,'' \emph{IEEE Transactions on Information Theory}, vol.~64, no.~3, pp. 1893--1918, 2017.

\bibitem{verdu2002spectral}
S.~Verd{\'u}, ``Spectral efficiency in the wideband regime,'' \emph{IEEE Transactions on Information Theory}, vol.~48, no.~6, pp. 1319--1343, 2002.

\bibitem{zhao2025lowpower}
\BIBentryALTinterwordspacing
M.~Zhao, M.~Le~Treust, and T.~J. Oechtering, ``Low-power optimal strategy for {Witsenhausen} counterexample,'' 2025. [Online]. Available: \url{https://arxiv.org/abs/2509.02381}
\BIBentrySTDinterwordspacing

\bibitem{pinsker1964information}
M.~S. Pinsker, ``Information and information stability of random variables and processes,'' \emph{Holden-Day}, 1964.

\bibitem{elgamal2011nit}
A.~El~Gamal and Y.~Kim, \emph{Network Information Theory}.\hskip 1em plus 0.5em minus 0.4em\relax Cambridge University Press, 2011.

\bibitem{wyner1978rate}
A.~D. Wyner, ``The rate-distortion function for source coding with side information at the decoder$\backslash$3-ii: General sources,'' \emph{Information and control}, vol.~38, no.~1, pp. 60--80, 1978.

\bibitem{Bogachev2007measure}
V.~I. Bogachev, ``The spaces {Lp} and spaces of measures,'' \emph{Measure Theory}, pp. 249--328, 2007.

\bibitem{bloch2011physical}
\BIBentryALTinterwordspacing
M.~R. Bloch and J.~Barros, \emph{Physical-Layer Security: From Information Theory to Security Engineering}.\hskip 1em plus 0.5em minus 0.4em\relax Cambridge University Press, 2011. [Online]. Available: \url{https://doi.org/10.1017/CBO9780511977985}
\BIBentrySTDinterwordspacing

\end{thebibliography}



 




\vfill

\end{document}